\newtheorem*{rep@theorem}{\rep@title}
\newcommand{\newreptheorem}[2]{%
\newenvironment{rep#1}[1]{%
 \def\rep@title{#2 \ref*{##1}, repeated}%
 \begin{rep@theorem}}%
 {\end{rep@theorem}}}
\newtheorem{lemma}{Lemma}
\newtheorem{definition}{Definition}
\newtheorem{proposition}{Proposition}
\newtheorem{assumption}{Assumption}
\newcommand{\norm}[1]{\Vert{#1}\Vert}%
\newcommand{\op}{\textnormal{op}}
\newcommand{\mc}[1]{\mathcal{#1}}
\newcommand{\phdg}{\phantom{\dagger}}
\newcommand{\bignorm}[1]{\left\Vert{#1}\right\Vert}%
\newcommand{\abs}[1]{ | {#1} |}
\newcommand{\bigabs}[1]{ \left| {#1} \right|}
\newcolumntype{L}[1]{>{\raggedright\let\newline\\\arraybackslash}p{#1}}
\DeclareMathOperator{\tr}{Tr}
\DeclareMathOperator{\sign}{sign}
\begin{document}
\title{Quantum advantage and stability to errors in analogue quantum simulators}
\author{Rahul Trivedi}
\thanks{Both authors contributed equally.}
\affiliation{Max-Planck-Institut für Quantenoptik, Hans-Kopfermann-Str.~1, 85748 Garching, Germany.}
\affiliation{Munich Center for Quantum Science and Technology (MCQST), Schellingstr. 4, D-80799 Munich, Germany. }
\affiliation{Electrical and Computer Engineering, University of Washington, Seattle, WA -  98195, USA}
\author{Adrian Franco Rubio}
\thanks{Both authors contributed equally.}

\affiliation{Max-Planck-Institut für Quantenoptik, Hans-Kopfermann-Str.~1, 85748 Garching, Germany.}
\affiliation{Munich Center for Quantum Science and Technology (MCQST), Schellingstr. 4, D-80799 Munich, Germany. }
\author{J.~Ignacio Cirac}
\affiliation{Max-Planck-Institut für Quantenoptik, Hans-Kopfermann-Str.~1, 85748 Garching, Germany.}
\affiliation{Munich Center for Quantum Science and Technology (MCQST), Schellingstr. 4, D-80799 Munich, Germany. }

\date{\today}
\begin{abstract}
Several quantum hardware platforms, while being unable to perform fully fault-tolerant quantum computation, can still be operated as analogue quantum simulators for addressing many-body problems. However, due to the presence of errors, it is not clear to what extent those devices can provide us with an advantage with respect to classical computers. In this work we consider the use of noisy analogue quantum simulators for computing physically relevant properties of many-body systems both in  equilibrium and undergoing dynamics. We first formulate a system-size independent notion of stability against extensive errors, which we prove for Gaussian fermion models, as well as for a restricted class of spin systems. Remarkably, for the Gaussian fermion models, our analysis shows the stability of critical models which have long-range correlations. Furthermore, we analyze how this stability may lead to a quantum advantage, for the problem of computing the thermodynamic limit of many-body models, in the presence of a constant error rate and without any explicit error correction.
\end{abstract}
\maketitle

Quantum information processing systems hold the promise of solving a number of problems in physics and computer science faster than their classical counterparts \cite{jozsa2001quantum, montanaro2016quantum}. However, most quantum algorithms with theoretical performance guarantees require a fault-tolerant quantum computer \cite{aharonov1996limitations, aharonov1997fault, aharonov2006fault}. While in principle possible, implementing a fault tolerant quantum computer is a technological challenge that could still take a long time to solve. This has motivated several investigations trying to identify both quantum algorithms, as well as physically relevant computational problems, that can be addressed by quantum hardware in the near term and without any explicit error correction.

Analog quantum simulators, wherein a target Hamiltonian is mimicked by an experimentally controllable system, have shown some promise in solving problems arising in many-body physics in the near term \cite{altman2021quantum, daley2022practical, cirac2012goals}. A typical analogue quantum simulator, while not necessarily being able to perform an arbitrary computation, would instead aim to approximately implement a relevant spatially-local Hamiltonian, $H$. In several many-body problems, $H$ can additionally be taken to be translationally invariant. The quantum simulator can then be used to prepare a physically relevant quantum state $\rho_H$ associated with the Hamiltonian $H$, such as its ground state, Gibbs state or a state produced under dynamics. In a typical quantum simulation experiment, we would then measure the expectation value of an intensive observable $O$, which is either often a local observable at a single site on the lattice or a correlation function, i.e.~a product of local observables at a few sites. Examples of such Hamiltonians and observables can be found in a variety of problems in physics --- for e.g.~in study of correlated electronic systems \cite{zhou2021high, stormer1999fractional, broholm2020quantum, qin2022hubbard}, quantum spin systems \cite{spalek2007tj, mi2022time, chen2022error, pal2010many, vosk2015theory} as well as lattice-gauge theories \cite{rothe2012lattice, kogut1983lattice}. Furthermore, there have been several proposals to implement quantum simulation of these models in different hardware platforms, such as cold atoms in optical lattices, trapped ion systems or superconducting qubits \cite{esslinger2010fermi, houck2012chip, porras2004effective, blatt2012quantum, garcia2004implementation, zohar2015quantum, zohar2012simulating, zohar2013simulating, banuls2020simulating}.

Practically, quantum simulators offer several distinct advantages in solving many-body problems as opposed to general purpose quantum computers. First, quantum simulators aim to solve only a much smaller and specialized set of problems, and thus have much milder hardware requirements than a universal quantum computer. Furthermore, quantum simulators are more naturally suited to many-body problems, since they avoid a Hamiltonian-to-circuit mapping for e.g.~by trotterizing the evolution into a quantum circuit, which typically incurs in a rapid proliferation of errors \cite{cirac2012goals,stilck2021limitations, de2022limitations, gonzalez2022error, daley2022practical}. Furthermore, since the observables of interest are typically local intensive observables, we expect them to be somewhat more robust to errors even if the global quantum state of the simulator is very sensitive. These expectations make quantum simulators very promising in providing some advantage with respect to classical computers when addressing typical quantum many-body physics problems.

However, developing rigorous criteria to outline the quantum advantage of a quantum simulator runs into several theoretical issues. \emph{First}, quantum simulators do not implement any error correction and typically simulate many-body physics in the presence of noise. While several previous works have theoretically outlined the computational power of quantum simulators by developing the notion of a universal quantum simulator \cite{kraus2007universal,cubitt2018universal, zhou2021strongly} and rigorously established the possibility of quantum advantage without noise, the presence of experimentally realistic noise has to be carefully accounted for in understanding their utility in many-body problems. \emph{Second}, since quantum simulators are usually devoted to analyzing intensive observables, and in many-body physics we are typically interested in the thermodynamic limit of such observables, we need to revisit the usual notion of quantum advantage. In particular, instead of characterizing the quantum and classical effort required to compute the many-body observable as a function of the system size, which is not meaningful in the thermodynamic limit, we can characterize the effort required to compute the many-body intensive observable within a user-specific precision of the thermodynamic limit \cite{aharonov2022hamiltonian, watson2022computational}. 
In this paper, we address both of these issues --- we provide evidence that many physically relevant critical and non-critical many-body models are stable to errors in the quantum simulators. Importantly, even without error correction, we can use a quantum simulator to determine the thermodynamic limits of intensive observables in these problems to a hardware-limited precision. Furthermore, we also propose a notion of quantum advantage, in the presence of errors, for such problems, where the figure of merit is the computational time to obtain an intensive quantity in the thermodynamic limit to a hardware-limited precision. By providing explicit lower bounds on certifiable classical algorithms for the many-body problems that we consider, we provide evidence that quantum simulators can possibly provide superpolynomial to exponential quantum advantage over rigorous classical algorithms even without error correction.

\section{Stable quantum simulation tasks}\label{sec:stability}
\subsection{General setup}\label{sec:stability_setup}
To keep our analysis general, we will consider quantum simulators for solving both closed system (i.e.~implementing a Hamiltonian) as well as open system (i.e.~implementing a Lindbladian) many-body problems. Suppose that the quantum simulator was trying to configure a spatially local Lindbladian $\mathcal{L}$ given by
\begin{align}\label{eq:target_lind_main_text}
\mathcal{L} = \sum_\alpha \mathcal{L}_\alpha,
\end{align}
where $\mathcal{L}_\alpha$ is a Lindbladian acting on spins within a local region $\Lambda_\alpha$. For translationally invariant problems, the superoperator $\mathcal{L}_\alpha$ would additionally be independent of $\Lambda_\alpha$, and for closed system problems, we can assume $\mathcal{L}_\alpha= -i[h_\alpha, \cdot]$ for some operator $h_\alpha$ supported on $\Lambda_\alpha$. The quantum simulator would, in general, suffer from coherent errors in the configured Lindbladian as well as incoherent errors arising due to its interaction with an external environment. To account for these errors, we model the `implemented' Hamiltonian on the quantum simulator by
\begin{subequations}\label{eq:noisy_lind_main_text}
\begin{align}
\mathcal{L}'(t) = \sum_{\Lambda} \bigg( \mathcal{L}_\alpha'  -i [h_{\text{SE}, \alpha}(t), \cdot], \bigg).
\end{align}
Here $\mathcal{L}_\alpha' - \mathcal{L}_\alpha$ is an error in the Lindbladian implemented on spins in $\Lambda_\alpha$ --- this can arise either from configuration errors in the Hamiltonian and the jump operators corresponding to $\mathcal{L}_\alpha$, or from incoherent errors that can be well approximated as Markovian. Furthermore, we also consider the possibility of non-Markovian incoherent errors --- these are captured by $h_{\text{SE}, \alpha}(t)$, which accounts for the interaction of the spins in the region $\Lambda_\alpha$ with an external environment (in the interaction picture with respect to the environment). For concreteness, we will model the decohering environment as a Gaussian environment and assume that
\begin{align}
h_{\text{SE}, \alpha}(t) = \sum_{j = 1}^{n_L} A_{j, \alpha}(t) Q_{j, \alpha}^\dagger +\text{h.c.},
\end{align}
\end{subequations}
where $Q_{1, \alpha}, Q_{2, \alpha} \dots Q_{n_L, \alpha}$ are the jump operators, each supported on $\Lambda_\alpha$, through which the spins in $\Lambda_\alpha$ interact with a decohering environment, and $A_{1, \alpha}(t), A_{2, \alpha}(t) \dots A_{n_L, \alpha}(t)$ are annihilation operators for the environment. We assume that $[A_{j, \alpha}(t), A_{j', \alpha'}^\dagger(t')] = \delta_{\alpha, \alpha'}\delta_{j, j'} K_{j, \alpha}(t - t')$ for bosonic environments or $\{A_{j, \alpha}(t), A_{j', \alpha'}^\dagger(t')\} = \delta_{\alpha, \alpha'}\delta_{j, j'} K_{j, \alpha}(t - t')$ for fermionic environments and we choose the normalization of $Q_{j, \alpha}$ such that $\int_\mathbb{R} \abs{K_{j, \alpha}(\tau)} d\tau = 1$. The function $K_{j, \alpha}(\tau)$ can be understood as the memory kernel corresponding to the non-Markovian system-environment interaction. In particular and not unexpectedly, choosing $K_{j, \alpha}(\tau) = \delta(\tau)$ would yield a Markovian master equation for an environment initially in the vacuum state.

We introduce a parameter $\delta$ such that $\norm{\mathcal{L}_\alpha' - \mathcal{L}_\alpha}_\diamond \leq \delta$ and $\norm{Q_{j, \alpha}} \leq \sqrt{\delta}$ --- the parameter $\delta$ can be considered to be the ``hardware error rate'' in the quantum simulator. A well designed experimental setup can, in principle, achieve $\delta \ll 1$ --- however, since there are an extensive number of errors in the simulator, we generically expect the state of the simulator to be at a distance of $\delta \times n$ from the target state. In the worst case, this would imply that the results of the quantum simulator can only be trusted when $\delta < o(1/n)$, and this would limit their applicability to small-scale problems. Importantly, for applications of quantum simulators to problems in many-body physics, this would imply that noisy quantum simulators, in the worst case, cannot be used to faithfully capture thermodynamic limits (i.e.~$n\to \infty$).

\begin{figure*}[t]
    \centering
    \includegraphics[scale=0.6]{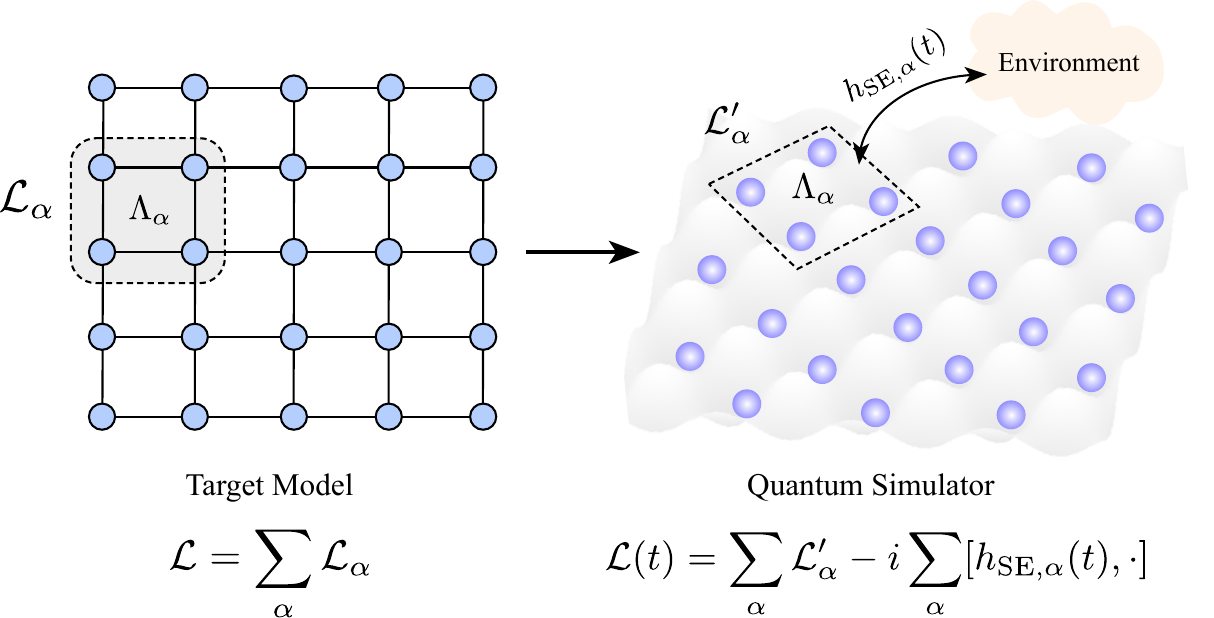}
    \caption{Schematic depiction of our error model for analogue quantum simulator. A target Lindbladian $\mathcal{L}$, expressed as sum of Lindbladian terms modelling interactions between groups of spins, when implemented on an analogue quantum simulator would have an hardware error per qubit --- this error can either be due to an incorrect configuration of the Lindbladian (i.e.~implementing $\mathcal{L}_\alpha'$ instead of $\mathcal{L}_\alpha$) or due to interaction with an external decohering environment.}
\end{figure*}

An alternative viewpoint would be to ask if there are certain interesting many-body problems for which a good estimate for the thermodynamic limit can be produced with a hardware with constant errors. This motivates us to look for `quantum simulation tasks' which are stable to these extensive errors, as made precise in the following definition.
\begin{definition}[Stable quantum simulation task]\label{def:stability} The quantum simulation task on $n$ spins of measuring an observable $O$ in a state $\rho_{\mathcal{L}}$ associated with a target Lindbladian $\mathcal{L}$ is said to be stable if the corresponding state $\rho'_\mathcal{L}$ prepared by the noisy simulator satisfies
\[
\abs{\textnormal{Tr}(O \rho) - \textnormal{Tr}(O \rho')} \leq f(\delta),
\]
for some $f$, independent of $n$, such that $f(\delta) \to 0$ as $\delta \to 0$.
\end{definition}
\noindent If a quantum simulation task is stable as per this definition, we can hope to be able to estimate the thermodynamic limit of the observable on a quantum simulator to a precision limited only by the hardware error rate $\delta$, and independent of the size of the problem. In particular, these problems would not require the hardware error to be scaled down with system size even in the absence of error correction, and it is reasonable to consider them to be problems that analogue quantum simulators can conceivably solve in the near term.

In the remainder of this section, we systematically study several important problems arising in many-body physics, and show that commonly considered intensive observables are expected to be stable to errors. We first study geometrically local Gaussian fermion models with Gaussian errors, and show that intensive observables (either local observables, or translationally invariant sums of local observables) are stable both for the problem of constant-time dynamics and equilibrium without any restrictive assumptions on the model --- our results hold not only for gapped models, but also for gapless models. Then, we study the same question for (non-Gaussian) many-body spin systems --- here, we use well known locality properties of these models to show that local observables are stable for constant-time dynamics and equilibrium properties, but with more restrictive assumptions on the system (e.g.~the Hamiltonian being stably gapped for ground states, or exhibiting exponential clustering of correlations in the Gibbs state). Our results are summarized in Table \ref{tab:stability}, and lend strong evidence for several many-body problems being amenable to noisy quantum simulation.
%\begin{table*}
%\centering
% \begin{tabular}{@{} l*{5}{>c<} @{}}
% \toprule
% Subject & \multicolumn{5}{c@{}}{$N_{\textnormal{trials}}$}\\
% \cmidrule(l){2-6}
% & 20 & 60 & 100 & 140 & 180 \\
% \midrule
% \multirow{2}{*}{AD} & 50.5\pm 3.2 & 50.3\pm4.5 & 52.6\pm4.7 & 51.7\pm5.3 & 51.5\pm 8.4\\
% & 74.8\pm4.9 & \\[1ex]
% \multirow{2}{*}{AS} & 50.4\pm3.6 & & & \\ 
% & 74.9\pm6.8 & & & & \\
% \multirow{2}{*}{NR} & 50.3\pm2.7 & & & \\ 
% & 75.2\pm6.5 & & & & \\
% \multirow{2}{*}{RA} & 49.9\pm2.7& & &\\ 
% & 74.8\pm6.6& & & &\\
% \midrule
% \multirow{2}{*}{Average} & 50.3\pm3.0& & &\\
%  & 74.9\pm6.2& & & &\\
% \bottomrule
% \end{tabular}
%\end{table*}
% \begin{table*}
% \begin{tabularx}{\textwidth}{ | X | c | }
%   \hline
%   \lipsum[1] & top\\
%   \hline
%   \noindent\parbox[c]{\hsize}{\lipsum[1]} & center\\
%   \hline
%   \noindent\parbox[b]{\hsize}{\lipsum[1]} & bottom\\
%   \hline
% \end{tabularx}
% \end{table*}
\begin{table*}
    \begin{tabular}{p{2.5cm} L{2.5cm} L{4cm} L{4cm}  L{4cm}}
    \hline 
       \textbf{Problem} & 
       \textbf{Error} &
       \textbf{Assumption} &
       \textbf{Observable} &
       \textbf{Stability, $f(\delta)$}\\ \hline \\
       
        Dynamics &
        General errors & 
        GF: None. \newline \ \newline \ \newline
        SS: None. &
        GF: $k$-local observables. \newline \ \newline \ 
        SS: Local observables. &
        GF:  $O(\delta t)$. \newline \ \newline \ \newline
        SS: $O(\delta t^{d + 1})$. \\ \ \\ \hline \\
            
        Ground state  &
        Coherent Hamiltonian errors &
        GF: Assumption 1. \newline \ \newline \ \newline
        SS: Stable gap. &
        GF: Translationally invariant $k-$local observables \newline \ \newline SS: Local observables. &
        GF: $O(\delta^\beta)$, where $\beta$ is a model dependent constant. \newline \ \newline
        SS: $O(\delta)$. \\ \ \\ \hline  \\

        Gibbs state & 
        Coherent Hamiltonian errors &
        GF: No assumption. \newline \ \newline \ \newline \ \newline
        SS: Stably exponentially clustered correlations. &
        GF: Translationally invariant $k-$local observables. \newline \ \newline
        SS: Local observables. &
        GF: $O(\sqrt{\delta})$. \newline  \ \newline \ \newline \ \newline
        SS: $\tilde{O}(\exp(-\Omega(\log^{1/d}{\delta^{-1}})))$.  \\ \ \\ \hline \\

        Fixed points &
        GF: Coherent and Incoherent Markovian errors. \newline \ \newline
        SS: General errors&
        GF: Assumption 2. \newline \ \newline \ \newline \ \newline
        SS: Rapid Mixing. &
        GF: Translationally invariant $k-$local observables \newline \ \newline \ \newline SS: Local observables \ \newline. &
        GF: $O(\delta^\beta)$, where $\beta$ is a model dependent constant. \newline \ \newline \ \newline \ \newline 
        SS: $O(\delta)$. \\ \\
    \hline
    \end{tabular}
    \caption{Summary of the stability results for dynamical and equilibrium many-body problems, together with the required assumptions on the many-body model and errors. Both results for Gaussian fermions and spin systems are summarized --- `GF' indicates Gaussian fermions and `SS' indicates spin systems. Note that observables for the Gaussian fermionic problems are all quadratic. Also, $\tilde{O}$ in the first column suppresses $\log(\delta^{-1})$ factors.}
    \label{tab:stability}
\end{table*}

% The stability of quantum simulation tasks for many-body physics problems, while hard to prove rigorously, can be intuitively understood. For simplicity, consider the specific setting in which both the target Hamiltonian $H$ and $V = H' - H$ are translationally invariant --- in this case, if the thermodynamic limit of an observable (e.g.~a local observable) $O$ in a non-equilibrium or equilibrium state associated with the translationally invariant Hamiltonian $H(s) = H + sV$ exists and is a continuous function of $s$, then it can be seen that it is indeed stable in the sense of definition \ref{def:stability}. However, this argument falls short of a full proof since in an actual experiment, even for target Hamiltonians that are translationally invariant, the presence of errors can make it translationally varying thus making it hard to define its thermodynamic limit. In particular, disorder can induce Anderson or many-body localization which can even result in local observables being unstable. In the following sections, however, we show that for several many-body problems, physically interesting observables and order parameters are stable even to translationally varying errors in the Hamiltonian.

\subsection{Gaussian fermion models}\label{sec:stability_gaussian_fermion}
We will consider fermions arranged on a $d-$dimensional lattice with $L$ sites in each direction $\mathbb{Z}_L^d$, and at each site we have $D$ fermionic modes --- we denote by $c_x^\alpha$ for $x \in \mathbb{Z}_L^d, \alpha \in \{1, 2 \dots 2D\}$ the Majorana operators associated with each site $x$. We consider a general open quantum simulation problem with geometrically local interactions with interaction range $R$. This is specified by a quadratic Hamiltonian $H$, and $n_L$ linear jump operators $L_{j, x}$ for every site $x \in \mathbb{Z}_L^d$,
\begin{subequations}\label{eq:quadH_jump}
\begin{align}
    &H = \sum_{\substack{x, y \in \mathbb{Z}_L^d \\ d(x, y) \leq R}} 
 \sum_{\alpha, \beta = 1}^{2D}h^{\alpha, \beta}_{x, y} c_x^\alpha c_y^\beta, \\
    &L_{j, x} = \sum_{\substack{y \in \mathbb{Z}_L^d \\ d(x, y) \leq R}} \sum_{\alpha = 1}^{2D} l_{j; x, y}^{\alpha} c_y^\alpha, \forall \ j\in \{1, 2 \dots n_L\}.
\end{align}
\end{subequations}
Without loss of generality, we can assume that $|h^{\alpha, \beta}_{x, y} | \leq 1$, $\abs{l_{j; x, y}^{\alpha}} \leq 1$ and $n_L \leq 2D(2R + 1)^d$.

For the results in this subsection, we restrict ourselves to Gaussian errors (coherent or incoherent) when this model is implemented on a quantum simulator. Due to coherent hardware errors, the quantum simulator instead implements a perturbed free-fermion Hamiltonian $H'$,
\begin{subequations}\label{eq:quadH_jump_pert}
\begin{equation}
        H' = \sum_{\substack{x, y \in \mathbb{Z}_L^d \\ d(x, y) \leq R}} 
        \sum_{\alpha, \beta = 1}^{2D} h'^{\alpha, \beta}_{x, y} c_x^\alpha c_y^\beta,
    \label{eq:pertquadH}
\end{equation}
such that $\abs{h^{\alpha, \beta}_{x, y} - h'^{\alpha, \beta}_{x, y}}\leq \delta$. Furthermore, due to errors in the configuration of the jump operators, or due to Markovian incoherent errors, the quantum simulator implements perturbed jump operators $L_{j,x}'$,
\begin{equation}
        L_{j, x}' = \sum_{\substack{y \in \mathbb{Z}_L^d \\ d(x, y) \leq R}} 
        \sum_{\alpha = 1}^{2D} l'^{\alpha}_{j; x, y}  c_y^\alpha,\  \forall \ j\in \{1, 2 \dots n_L\}.
    \label{eq:pertquadL}
\end{equation}
\end{subequations}
where again $\abs{l_{j; x, y}^\alpha - l'^{\alpha}_{j; x, y}} \leq \delta$. Furthermore, we also consider Gaussian incoherent interactions with a decohering environment which, following the general setup described previously, is captured by a Gaussian system-environment $H_\text{SE}(t) = \sum_{x\in \mathbb{Z}_L^d}h_{x, \text{SE}}(t)$ with
\begin{subequations}
    \begin{align}\label{eq:system_environment_Hamiltonian}
        h_{x, \text{SE}}(t) =  \sum_{j = 1}^{n_L} A_{j, x}(t)Q_{j, x}^\dagger + \text{h.c.}.
    \end{align}
Here
\begin{align}
    Q_{j, x} = \sum_{\substack{y \in \mathbb{Z}_L^d \\ d(x, y) \leq R}} q_{j, x; y}^\alpha c_{y}^\alpha,
\end{align}
\end{subequations}
with $\abs{q_{j, x; y}^\alpha} \leq \sqrt{\delta}$ and $A_x(t)$ is an annihilation operator in the fermionic environment coupling to sites in the neighbourhood of $x$. These annihilation operators satisfy $\{A_x(t), A_{x'}^\dagger(s)\} = \delta_{x, x'}K_x(t - s)$, where $K_x(\tau)$ is the memory kernel describing the system-environment interaction and is assumed to satisfy $\int_{\mathbb{R}} \abs{K_x(\tau)}d\tau \leq 1$.

%%%%%%%%%%%%%%%%%%%%%%%%%%%%%%%%%%%%%%%%%%%%%%%%%%%%%%%%%%%%%%%%%%%%%%%
\emph{Finite-time dynamics}. We first consider the problem of evolving the quantum simulator for time $t$ and measure the expectation value of Gaussian observables $O_0$ which are either $k-$local, i.e.~they act on a set $\mathcal{S} \subseteq \mathbb{Z}_L^d$ of $k$ sites
\begin{align}\label{eq:gaussian_few_sites_obs}
O_0 = \sum_{x, y \in \mathcal{S}} \sum_{\alpha, \beta = 1}^{2D} o_{x, y}^{\alpha, \beta} c_x^\alpha c_y^\beta,
\end{align}
or weighted averages of $k-$local Gaussian observables i.e.~are of the form $\sum_{i = 1}^M w_i O_i$, where $O_i$ is of the form Eq.~\ref{eq:gaussian_few_sites_obs}, $\sum_{i = 1}^M \abs{w_i} = 1$ and $M$ can possibly grow with $n = DL^d$. We consider an arbitrary Gaussian initial state, and let the target state $\rho$ be the state obtained by evolving it with the target Lindbladian specified by Eq.~\ref{eq:quadH_jump} for time $t$. We show in Appendix \ref{app:t_evol_ff} that
\begin{proposition}
The quantum simulation task of measuring $k-$local Gaussian observables, or their weighted sums, after constant-time dynamics under a spatially local Gaussian Hamiltonian is stable to coherent and incoherent errors with $f(\delta) = O(t \delta)$.
\label{prop:t_evol_ff}
\end{proposition}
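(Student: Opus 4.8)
The plan is to track how the two-point correlation functions of the Gaussian state evolve, since for Gaussian states and Gaussian observables everything reduces to the dynamics of the covariance matrix. Because both $H$ and $H'$ are quadratic in Majorana operators, the Heisenberg evolution of the Majorana operators is linear: $c_x^\alpha(t) = \sum_{y,\beta} [e^{Mt}]^{\alpha\beta}_{xy} c_y^\beta$ for the appropriate antisymmetric matrix $M$ built from the coefficients $h^{\alpha\beta}_{x,y}$ (and analogously $M'$ for $H'$). The key structural fact I would exploit is that $\|M - M'\|$ is controlled \emph{entrywise} by $\delta$, and crucially that both $M$ and $M'$ are \emph{banded} (only entries within lattice distance $R$ are nonzero) with entries bounded by $O(J)$, so $\|M\|_{\op}$ and $\|M'\|_{\op}$ are bounded by a constant independent of $n$.

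First I would set up the exact expressions for $\mathrm{Tr}(O_0\,\rho_H)$ and $\mathrm{Tr}(O_0\,\rho_{H'})$ in terms of the initial covariance matrix conjugated by $e^{Mt}$ and $e^{M't}$ respectively. The difference is then governed by $\|e^{Mt} - e^{M't}\|$ in an appropriate norm. Using the integral (Duhamel) identity
\begin{equation}
e^{Mt} - e^{M't} = \int_0^t e^{M(t-s)}(M - M')e^{M's}\,ds,
\end{equation}
I would bound the relevant matrix elements. The naive operator-norm bound gives $\|e^{Mt}-e^{M't}\|_{\op} \le t\,e^{(\|M\|_{\op}+\|M'\|_{\op})t}\|M-M'\|_{\op}$, but $\|M-M'\|_{\op}$ could in principle grow with $n$ since $M-M'$ has $\Theta(n)$ nonzero entries. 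The point of the $k$-locality of $O_0$ is that I only need a \emph{few} matrix elements of the conjugated covariance matrix, namely those indexed by the $k$ sites in $\mathcal{S}$, so I should bound $\sum_{x\in\mathcal{S}}$ of row-restricted quantities rather than the full operator norm.

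The step I expect to be the main obstacle, and the one I would spend the most care on, is converting the per-entry bound $|h-h'|\le\delta$ into a bound on the finitely many relevant entries of $e^{Mt}-e^{M't}$ that does \emph{not} accumulate a factor of $n$. The clean way to do this is a Lieb--Robinson-type argument for the single-particle hopping problem: since $M$ is banded with range $R$, the matrix $e^{Mt}$ has entries that decay (super-exponentially in lattice distance beyond the ``light cone'' $vt$) away from the diagonal. Consequently, when I expand the Duhamel integral and ask for a fixed output entry at a site in $\mathcal{S}$, only perturbation entries $M-M'$ within distance $\sim vt$ of that site contribute appreciably, and there are only $O((vt)^d)$ of those. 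Each contributes $O(\delta)$, and summing over the $k$ sites and over the tail corrections gives a total bound of the form $O(t\delta)$ with an $n$-independent (though $t$- and $d$-dependent) prefactor. For the weighted-sum observable $\sum_i w_i O_i$ the same bound carries through because $\sum_i |w_i| = 1$ and the triangle inequality keeps the prefactor uniform over the $O_i$. Assembling these pieces yields $f(\delta)=O(t\delta)$, independent of $n$, as claimed.
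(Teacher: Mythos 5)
Your overall framing---reduce to the single-particle dynamics of the coefficient matrices and compare $e^{Mt}$ with $e^{M't}$ via the Duhamel identity---matches the paper's setup, but your key step rests on a false premise that derails the argument. You assert that $\norm{M-M'}_{\op}$ ``could in principle grow with $n$ since $M-M'$ has $\Theta(n)$ nonzero entries,'' and on that basis you abandon the direct operator-norm bound in favor of a single-particle Lieb--Robinson light-cone argument. The assertion is wrong: $M-M'$ is banded (each row and column has at most $2DR^d$ nonzero entries, by the range-$R$ locality of both Hamiltonians) with entries bounded by $\delta$, and any such matrix satisfies $\norm{M-M'}_{\op}\leq 2DR^d\delta$ \emph{independent of $n$}---this is precisely the paper's Lemma \ref{lemma:bound_op_norm}, provable by Cauchy--Schwarz on rows (or Schur's test). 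The extensivity of the error lives in the trace norm of $M-M'$, not its operator norm; conflating the two is what sent you down the harder path. With the correct norm bound the paper's proof is three lines: H\"older gives $\abs{\langle O(t)\rangle_H-\langle O(t)\rangle_{H'}}\leq \norm{\tilde O}_{\op,1}\,\norm{e^{-i\tilde Ht}\Gamma_0 e^{i\tilde Ht}-e^{-i\tilde H't}\Gamma_0 e^{i\tilde H't}}_{\op}$; the Duhamel-type bound (Lemma \ref{lemma:pert_theory}) controls the second factor by $2t\norm{\Gamma_0}_{\op}\norm{\tilde H-\tilde H'}_{\op}\leq 4DR^dt\delta$; and $k$-locality of $O_0$ enters only through the rank bound $\norm{\tilde O}_{\op,1}\leq 2kD\norm{\tilde O}_{\op}$, since the coefficient matrix of an observable supported on $k$ sites has at most $2kD$ nonzero eigenvalues. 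Your observation that only the matrix elements indexed by $\mathcal{S}$ are needed is the informal counterpart of this rank bound, and that part of your plan is sound.

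Moreover, even granting your light-cone machinery, it cannot deliver the stated bound. Restricting the Duhamel integrand to the $O((vt)^d)$ perturbation entries inside the light cone of $\mathcal{S}$ yields, as you yourself concede, a ``$t$- and $d$-dependent prefactor,'' i.e.\ a bound of order $t\cdot(vt)^d\delta=O(t^{d+1}\delta)$, not $O(t\delta)$. The paper makes exactly this point in the remark following Proposition \ref{prop:t_evol_ff}: the free-fermion result is \emph{stronger} than what locality alone predicts, where the error would grow as $t\times(\text{sites in the light cone})\propto t^{d+1}$---and that weaker scaling is indeed what the paper obtains for general spin systems (Proposition \ref{prop:t_evol}), where no single-particle reduction is available. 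So your approach, if completed carefully, would prove a true but strictly weaker statement; the constant in $O(t\delta)$ that is uniform in both $n$ and $t$ comes from the sparse-matrix operator-norm bound your proposal explicitly, and incorrectly, dismissed.
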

\noindent We point out that the dependence of the error between the observable in perturbed and unperturbed models on $t$ is independent of the dimensionality of the lattice $d$ --- this result is thus stronger than what would be expected simply from locality, wherein the error would be expected to grow as $t\times (\text{Number of sites in the light cone at time }t) \propto t^{d + 1}$ --- we revisit this in section IV.

%%%%%%%%%%%%%%%%%%%%%%%%%%%%%%%%%%%%%%%%%%%%%%%%%%%%%%%%%%%
\emph{Equilibrium}. We next study the stability properties of intensive observables in equilibrium. For simplicity, we first consider the closed-system setting and study the stability of the ground state and Gibbs state. Suppose that the quantum simulator implements a target geometrically local Hamiltonian (Eq.~\ref{eq:quadH_jump}a), but due to the presence of coherent errors instead configures a perturbed Hamiltonian $H'$ (Eq.~\ref{eq:quadH_jump_pert}a). We emphasize that we only consider the simpler coherent Hamiltonian errors while studying the stability of ground state and Gibbs state, since these states can only be meaningfully defined for a closed system. Later on in this section, we will study the more natural equilibrium problem of the `Lindbladian fixed point' where incoherent errors can also be accounted for. Furthermore, we make the following additional physically reasonable assumption on the density of modes of the target Hamiltonian $H$.
\begin{assumption}\label{assumption:cont_eigenenergies}
The number of eigenfrequencies $n_\eta$ of $H$, which are eigenvalues of the matrix $h_{x, y}^{\alpha, \beta}$ defining the target Hamiltonian, lying in the interval $[-\eta, \eta]$ for sufficiently small $\eta$ satisfy the upper bound
\begin{align}\label{eq:density_of_modes_free_fermion}
n_\eta \leq n f_h(\eta) + \kappa(\eta, n),
\end{align}
where $n = DL^d$ is the number of fermionic modes, $f_h(\eta) \leq O(\eta^\alpha)$ for some $\alpha > 0$  and $\kappa(\eta, n)$ is $o(n)$ for any fixed $\eta$.
\label{assum:eigenfreqs}
\end{assumption}
% \begin{figure}
% \centering
% \includegraphics[scale=0.55]{figure_assump_1.pdf}
% \caption{Schematic depiction of a 1D translationally invariant Gaussian fermionic problem with nearest neighbour interactions and periodic boundary conditions, whose eigenmodes are described by a dispersion relation $\omega(k)$. The eigenmodes correspond to points on the $k$-axis spaced by $2\pi / n$, so the number of modes with energies in $[-\eta, \eta]$ can be approximated by the length of the corresponding interval (i.e.~inverse image with respect to the function $\omega(k)$) divided by $2\pi / n$.}
% \end{figure}
\noindent Alternatively stated, this assumption is a continuity\footnote{More precisely, it demands that $\lim_{n \to \infty} n_\eta / n$ is a Hölder continuous function of $\eta$.} condition on the thermodynamic limit of the fraction of eigenmodes with energies in the interval $[-\eta, \eta]$ and it ensures that eigenvalues do not accumulate too fast near zero. It is expected to be true for most physically relevant models --- in particular, it is weaker than the existence of a gap and thus contains gapped models, which are well known to exist for many experimentally relevant many-body problems. For a gapped Gaussian fermionic model, $f_h(\eta) = 0$ since, if there are fermionic eigenmodes near $0$, then adding a fermion into these modes would provide an excited state with only $O(n^{-1})$ energy higher than the ground state energy. Furthermore, we also expect this assumption to be generically true for translationally invariant local Hamiltonians, where the eigenfrequencies can be described by a smooth dispersion relation $\omega(k)$ as a function of the momentum $k$ associated with that mode. In this case, $f_h(\eta) \leq O(\eta^\alpha)$, with $\alpha$ being determined by the derivatives of $\omega(k)$ in the vicinity of $\omega = 0$.
%For instance, consider the 1D translationally invariant case with periodic boundary conditions --- if $\omega(k_0) = 0$, then for small $\delta k$, $\omega(k_0 + \delta k ) \approx \omega^{(m)}(k_0) \delta k^m / m!$, where $m > 0$ is the order of the first non-zero derivative of $\omega(k)$ at $k = k_0$. We can then see that the length of the interval along the $k$ axis corresponding to $\omega(k) \in [-\eta, \eta]$ is $O(\eta^{1/m})$ --- since the eigen-modes for this system are spaced by $2\pi / n$ along the $k$ axis, this yields $n_\eta \leq n\times O(\eta^{1/m})$ and thus satisfies assumption \ref{assum:eigenfreqs}.

 The observables we consider while analyzing ground states are translationally invariant Gaussian observables generated $k-$locally, i.e., if $O_0$ is a $k-$local observable of the form of Eq.~\ref{eq:gaussian_few_sites_obs}, then we consider observables of the form
\[
O = \frac{1}{n}\sum_{x \in \mathbb{Z}_L^d} \tau_x(O_0),
\]
where $\tau_x(O_0)$ is the observable $O_0$ translated by $x$. Considering the target state to be the ground state of the Gaussian fermion Hamiltonian, we then obtain the following proposition (proved in appendix \ref{app:gs_ff})
\begin{proposition}
The quantum simulation task of measuring translationally invariant Gaussian observables generated $k-$locally, in the ground state of a spatially local Gaussian fermion model whose density of modes satisfies Eq.~\ref{eq:density_of_modes_free_fermion} is stable to coherent Hamiltonian errors with $f(\delta) = O(\sqrt{\delta}) + f_h(O(\delta^{1/4})) \leq O(\delta^\beta)$ for some model-dependent constant $\beta$.
\label{prop:gs_ff}
\end{proposition}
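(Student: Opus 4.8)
The plan is to pass to the single-particle (first-quantized) description, where the extensive perturbation becomes small in operator norm and the only genuine obstruction is the discontinuity of the sign function at zero energy. Writing $H = \tfrac{i}{4}\sum_{j,k} A_{jk} c_j c_k$ and likewise $H'$ with matrix $A'$ (real antisymmetric, $j$ a composite site/flavour index), the ground-state covariance matrix is the sign of the single-particle Hamiltonian, $\Gamma = \sign(iA)$, and the expectation of any quadratic observable is affine in $\Gamma$. Hence, for the translationally averaged observable with single-particle symbol $\tilde O = \tfrac1n\sum_x \tau_x o$,
\[
\tr(O\rho_H) - \tr(O\rho_{H'}) = -i\,\tr\!\big(\tilde O^\top(\sign(iA) - \sign(iA'))\big).
\]
The first key point is that, although $\norm{H-H'}=\Theta(n)$, the single-particle matrices are close: $A-A'$ is banded (range $R$) with entries bounded by $O(\delta)$, so $\norm{A-A'}_\op = O(\delta)$ with a constant depending only on $R$ and $D$.

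The second ingredient is to record the two norms of $\tilde O$ that I pair against $\sign(iA)-\sign(iA')$. By translation invariance $\tilde O$ is banded with entries of size $O(1/n)$; diagonalizing block-by-block in momentum space gives $\norm{\tilde O}_\op = O(1/n)$, while the $L^d$ momentum blocks each have trace norm $O(1/n)$ so that $\norm{\tilde O}_1 = O(1)$. A small operator norm together with an $O(1)$ trace norm is exactly what lets the spatial averaging beat the $\Theta(1)$ operator-norm discontinuity of $\Gamma-\Gamma'$.

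The main step is a spectral split at an energy window $\mu$ (to be optimized). Fix a smooth $g_\mu$ with $g_\mu=\sign$ outside $[-\mu,\mu]$ and scalar Lipschitz constant $O(1/\mu)$, and write $\sign(iA)-\sign(iA') = \big[g_\mu(iA)-g_\mu(iA')\big] + \big[(\sign-g_\mu)(iA)\big] - \big[(\sign-g_\mu)(iA')\big]$. For the \emph{bulk} term I use that $g_\mu$ is operator Lipschitz, so $\norm{g_\mu(iA)-g_\mu(iA')}_\op = O(\norm{A-A'}_\op/\mu) = O(\delta/\mu)$, and pairing with $\norm{\tilde O}_1 = O(1)$ bounds its contribution by $O(\delta/\mu)$. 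For each \emph{near-zero} term, $(\sign-g_\mu)(iA)$ is supported on the eigenspace of $iA$ with energy in $[-\mu,\mu]$, so it has operator norm $\le 2$ and rank $2n_\mu$; by Assumption \ref{assum:eigenfreqs} its trace norm is $O(n f_h(\mu) + \kappa(\mu,n))$, and pairing with $\norm{\tilde O}_\op = O(1/n)$ bounds its contribution by $O\!\big(f_h(\mu)\big) + O(\kappa(\mu,n)/n)$, where the last term vanishes in the thermodynamic limit since $\kappa = o(n)$. Collecting, $\abs{\tr(O\rho_H)-\tr(O\rho_{H'})} \le O(\delta/\mu) + O(f_h(\mu)) + o(1)$, and optimizing the window (e.g.\ $\mu \sim \sqrt\delta$, giving $O(\sqrt\delta)+f_h(O(\sqrt\delta))$) produces a bound of the claimed form $O(\sqrt\delta) + f_h(O(\delta^{1/4}))$.

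I expect the near-zero sector to be the only real obstacle. Away from zero energy the sign function is smooth, so the $O(\delta)$ single-particle perturbation moves the bulk of $\Gamma$ by only $O(\delta)$; it is the modes within $O(\delta)$ of zero energy where $\sign$ jumps and $\Gamma-\Gamma'$ stays $\Theta(1)$ in operator norm. The whole argument hinges on showing these dangerous modes are both few (Assumption \ref{assum:eigenfreqs}) and, because the observable is an intensive spatial average, individually negligible --- this is precisely what yields stability for gapless models, where such modes are unavoidable. The one technical point I would be careful about is the operator-Lipschitz estimate for $g_\mu$, since a scalar Lipschitz bound does not transfer to the operator norm for free; either invoking the Besov-class operator-Lipschitz theorem for the smooth $g_\mu$, or replacing this step by the resolvent representation $\sign(iA) = \tfrac{2}{\pi}\int_0^\infty iA(\omega^2 - A^2)^{-1}d\omega$ and splitting the $\omega$-integral at $\mu$, closes the gap, and the more conservative resolvent estimate of the bulk term is what sets the $\delta^{1/4}$ appearing inside $f_h$.
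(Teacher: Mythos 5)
Your proposal is correct, and while it shares the paper's overall architecture --- reduction to single-particle sign functions with $\norm{\tilde{H} - \tilde{H}'}_{\op} = O(\delta)$ (the paper's Lemma \ref{lemma:bound_op_norm}), a spectral split at a window around zero energy, the density-of-modes assumption for the near-zero modes, and translation invariance to buy the crucial factor $1/n$ --- your technical engine for the bulk term is genuinely different. The paper never invokes operator-Lipschitz theory: it approximates $\sign$ by its truncated Fourier series $\sign_M$ (Dirichlet-kernel estimates, Lemmas \ref{lemma:sign_func_appx} and \ref{lemma:upper_bound_sign}), bounds the smooth-part difference by $\sum_{\abs{m} \le M} \abs{m c_m} \norm{\tilde{H} - \tilde{H}'} = O(M\delta)$ using $\norm{e^{im\tilde H} - e^{im\tilde H'}} \le \abs{m}\norm{\tilde H - \tilde H'}$ --- which is precisely a self-contained, periodized instantiation of the operator-Lipschitz bound you flag as delicate, with $M \sim 1/\mu$ playing the role of your Lipschitz constant --- and pairs everything against a single trace norm via its Lemma \ref{lemma:translation_operator}, $\abs{\tr(\tilde{O}^\dagger \tilde{A}_0)} \le \frac{4D^2 k}{n}\norm{\tilde{O}_0}\norm{\tilde{A}_0}_{\op,1}$; your two-norm H\"older split ($\norm{\tilde O}_{\op} = O(1/n)$ against the low-energy trace norm, $\norm{\tilde O}_1 = O(1)$ against the bulk operator norm, both of which you verify correctly via momentum-space block diagonalization) is the same estimate organized more transparently, since the paper also bounds the smooth part's trace norm by $n$ times its operator norm. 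What your route buys structurally: because your $g_\mu$ equals $\sign$ exactly outside $[-\mu,\mu]$, you get only two error terms, $O(\delta/\mu) + f_h(\mu)$, whereas $\sign_M$ approximates $\sign$ only up to $\frac{1}{M}(1 + \frac{1}{\eta})$ across the entire bulk spectrum; that extra cross term is what forces the paper's choice $M = \delta^{-1/2}$, $\eta = \delta^{1/4}$ and the $\delta^{1/4}$ inside $f_h$. Your optimization $\mu \sim \sqrt{\delta}$ then gives $O(\sqrt{\delta}) + f_h(O(\sqrt{\delta}))$, which implies (and for monotone $f_h$ slightly sharpens) the stated bound --- and your closing observation is exactly right that if the conservative resolvent route only yields a bulk bound of order $\delta/\mu^2$, rebalancing at $\mu \sim \delta^{1/4}$ recovers precisely the claimed $O(\sqrt{\delta}) + f_h(O(\delta^{1/4}))$. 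Two small patches you should make explicit: for the primed near-zero term, Assumption \ref{assum:eigenfreqs} speaks only of $\tilde H$, so you must transfer it to $\tilde{H}'$ via Weyl's inequality, $n_\mu(\tilde{H}') \le n_{\mu + O(\delta)}(\tilde{H})$, as the paper does; and the residual $\kappa(\mu, n)/n = o(1)$ term you carry is also present in the paper's final bound, so it is not a defect of your argument relative to theirs.
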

\noindent We note that the stability result above holds with only a mild continuity assumption on the density of modes of the model. Importantly, it holds for models which are not gapped, i.e.~the energy separation between the ground state and the first excited state vanishes as $n\to\infty$. As an example, consider the ground state of the 1D Su--Schrieffer--Heeger (SSH) model on $n$ fermions with periodic boundary condition:
\begin{equation}
    H_{\text{SSH}}[J]\equiv\sum_{i=1}^{n}{t_ia_i^\dagger a^{\phantom\dagger}_{i+1}}+\text{H.c},~~~~~t_i=\begin{cases}
    1 & i \text{ odd,}\\
    J & i \text{ even.}
    \end{cases}
    \label{eq:SSH}
\end{equation}
where $a_{n + 1} \equiv a_n$. This model displays a (topological) phase transition at $J=1$, where the gap closes as $1/n$, and is gapped otherwise. The observable we consider is the energy density $H_\text{SSH}[J]/n$ of the unperturbed Hamiltonian. Figure \ref{fig:ssh_gs_study}(a) shows impact of changing system size on this energy density --- we see that for both gapped ($J = 0.5, 1.5$) and gapless ($J = 1.0$) cases, the errors in the energy density becomes independent of $n$ as $n\to \infty$, verifying the expectation in proposition 2. Furthermore, we show the error in the energy density for large $n$ as a function of $\delta$ in Fig.~\ref{fig:ssh_gs_study}(b) and see that, consistent with proposition 2, this error $\to 0$ as $\delta \to 0$. Finally, Fig.~\ref{fig:ssh_gs_study}(c) shows this error as a function of $J$ --- we see the error peak near $J = 1$ (i.e.~the point where the gap in the Hamiltonian closes), and that it is smaller for values of $J$ where the model is gapped.

The translational invariance of the observables considered here is key to the stability result ---  translationally varying observables need not be stable, even if they are intensive and local. A simple example here is Anderson localization --- consider $H$ to be a 1D translationally invariant tight-binding model i.e.~$H = \sum_{i = 1}^{n - 1} (a_{i + 1}^\dagger a_i + a_i^\dagger a_{i + 1})$, with errors $ \sum_{i = 1}^n \delta_i a_i^\dagger a_i$ where $v_i$ is chosen uniformly at random between $[-\delta, \delta]$. The ground state of $H$ is completely delocalized across the spin-chain. In the presence of errors, no matter how small, this model is known to be localized. Now, for every $\delta_1, \delta_2 \dots \delta_n$, consider the intensive observable $O_{\delta_1, \delta_2 \dots \delta_n}$ given by the average particle numbers on $\Theta(1/\delta) $ sites around the site where the ground state is localized. This observable, when measured in the delocalized ground state of the unperturbed Hamiltonian $H$, yields an expected value of $0$ as $n \to \infty$. On the other hand, in the ground state of the perturbed localized model it will yield an expected value of $\Theta(1)$. Thus, not all translationally varying observables can be stable, even if we restrict ourselves to free-fermion models, with the observables being intensive and spatially local.

Next, we consider the Gibbs state, $ e^{-\beta H}/\text{Tr}(e^{-\beta H})$ where the inverse-temperature $\beta$ is a constant independent of $n$ and again study the stability of translationally invariant Gaussian observables that are generated $k-$locally. We show in appendix \ref{app:Gibbs_ff} that
\begin{proposition}
The quantum simulation task of measuring translationally invariant Gaussian observables generated $k-$locally, in the Gibbs state at inverse-temperature $\beta$ of a spatially local free-fermion is stable to coherent Hamiltonian errors with $f(\delta) = O(\beta\sqrt{\delta})$.
\label{prop:Gibbs_ff}
\end{proposition}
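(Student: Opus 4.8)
The plan is to push everything through the single-particle (correlation-matrix) description of the two Gaussian states and reduce the claim to a perturbation bound for a matrix function measured in the \emph{normalized} Hilbert--Schmidt norm. First I would encode the quadratic Hamiltonians by their single-particle matrices: writing $H$ and $H'$ in terms of the Majorana operators yields Hermitian single-particle matrices $\mathcal H$ and $\mathcal H' = \mathcal H + \mathcal V$, where $\mathcal V$ is banded (range $R$) with entries of size $O(\delta)$. Since every row of $\mathcal V$ has only $O(R^d D)=O(1)$ nonzero entries, a row-sum (Schur) estimate gives $\norm{\mathcal V}_\op = O(\delta)$, and, more importantly for what follows, $\tfrac1n\norm{\mathcal V}_F^2 = O(\delta^2)$, where $\norm{\cdot}_F$ is the Frobenius norm and $n$ the number of modes.

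The Gibbs-state two-point functions are captured by the correlation matrix $\Gamma_H = \phi_\beta(\mathcal H)$, with $\phi_\beta(x)=\tanh(\beta x/2)$ the (appropriately normalized) Fermi function, and the expectation of any quadratic observable is affine-linear in $\Gamma_H$, the constant piece being state-independent and cancelling in the difference. Writing the observable as $O=\tfrac1n\sum_x\tau_x(O_0)$, I would therefore express $\tr(O\rho_H)-\tr(O\rho_{H'}) = \tfrac1n\,\tr(\tilde o^{\,T}\Delta)$, where $\Delta=\Gamma_H-\Gamma_{H'}$ and $\tilde o$ is a banded matrix (bandwidth $O(k)$, bounded entries) assembled from the translates of the coefficient matrix of $O_0$. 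Two successive Cauchy--Schwarz steps — first over the $O(k)$ local entries in each row of $\tilde o$, then over the $n$ rows — give $\tfrac1n\abs{\tr(\tilde o^{\,T}\Delta)} = O\big(\sqrt{\tfrac1n\norm{\Delta}_F^2}\big)$, so the whole problem collapses to controlling the normalized Hilbert--Schmidt norm of $\Delta$.

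The core estimate is then $\tfrac1n\norm{\phi_\beta(\mathcal H)-\phi_\beta(\mathcal H')}_F^2$. Here I would invoke that a scalar $L$-Lipschitz function is automatically $L$-Lipschitz as a matrix function in the Hilbert--Schmidt norm: from the divided-difference identity $\norm{\phi(\mathcal H)-\phi(\mathcal H')}_F^2 = \sum_{i,j}|\phi(\lambda_i)-\phi(\mu_j)|^2\,|\langle u_i|v_j\rangle|^2$ (with $\lambda_i,u_i$ and $\mu_j,v_j$ the eigendata of $\mathcal H,\mathcal H'$) and $|\phi_\beta(\lambda_i)-\phi_\beta(\mu_j)|\le\tfrac\beta2|\lambda_i-\mu_j|$, one gets $\norm{\Delta}_F\le\tfrac\beta2\norm{\mathcal V}_F$. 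Combined with $\tfrac1n\norm{\mathcal V}_F^2=O(\delta^2)$ this yields $\tfrac1n\norm{\Delta}_F^2=O(\beta^2\delta^2)$, hence $\abs{\tr(O\rho_H)-\tr(O\rho_{H'})}=O(\beta\delta)$, which in particular implies the claimed $O(\beta\sqrt\delta)$ bound.

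The step I expect to be the conceptual heart — and the main pitfall — is identifying the correct norm. The perturbation is harmless only in the normalized Hilbert--Schmidt norm: its operator norm is merely $O(\delta)$ rather than small relative to the spectral width of $\mathcal H$, so naive operator-norm perturbation theory for $\phi_\beta(\mathcal H)$ would fail, while the unnormalized $\norm{\Delta}_F$ grows like $\sqrt n$ and diverges in the thermodynamic limit. It is precisely the $1/n$ normalization of the translationally invariant observable, paired with the Hilbert--Schmidt operator-Lipschitz property, that renders the bound size-independent. This also explains why the Gibbs case requires no density-of-modes assumption, in contrast to the ground state: there the relevant matrix function is $\phi_\infty=\sign$, which fails to be Lipschitz at $0$ and forces a spectral split near zero energy (and the attendant $\sqrt\delta$ loss), whereas at finite $\beta$ the Fermi function is globally $O(\beta)$-Lipschitz and the argument goes through unconditionally.
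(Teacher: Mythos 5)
Your proof is correct, and it takes a genuinely different --- and in fact sharper --- route than the paper's. The paper first uses translation invariance and Fourier block-diagonalization (its Lemma \ref{lemma:translation_operator}, giving $\abs{\tr(\tilde O^\dagger \tilde A_0)} \leq \frac{4D^2k}{n}\norm{\tilde O_0}\norm{\tilde A_0}_{\op,1}$) to reduce the problem to the normalized trace norm $\frac{1}{n}\norm{\tanh(\beta\tilde H)-\tanh(\beta\tilde H')}_{\op,1}$, which it then controls by a degree-$M$ Dirichlet-kernel approximation $t_M$ of $\tanh$: two truncation errors of size $O(q(\beta)/M)$ with $q(\beta)=O(\beta^3)$, plus a perturbative term $O(M\beta\delta)$ from $\norm{e^{im\tilde H}-e^{im\tilde H'}}$, balanced at $M\sim\delta^{-1/2}$ to produce the stated $\sqrt{\delta}$ scaling. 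You instead reduce, via entrywise Cauchy--Schwarz and the $O(1)$ row sparsity of the assembled observable matrix (so $\norm{\tilde o}_F=O(\sqrt{n})$), to the normalized Hilbert--Schmidt norm of $\Delta=\Gamma_H-\Gamma_{H'}$, and then invoke the elementary fact that a scalar $L$-Lipschitz function is $L$-Lipschitz as a matrix function in Hilbert--Schmidt norm; your divided-difference identity is exactly right, since $\norm{M}_F^2$ may be expanded in the two mismatched eigenbases, and $\tanh(\beta x/2)$ is globally $O(\beta)$-Lipschitz. This buys you three things the paper's route does not: a linear-in-$\delta$ bound $O(\beta\delta)$, strictly stronger than the claimed $O(\beta\sqrt{\delta})$ (the truncation-versus-perturbation balancing is intrinsically lossy); no need for the paper's rescaling $\norm{\tilde H}\leq\pi/2$, since the Lipschitz estimate is global on $\mathbb{R}$; and no essential use of translation invariance beyond the $1/n$ normalization, so your argument extends to arbitrary normalized weighted averages of $k$-local Gaussian observables. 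Your closing diagnosis is also the correct structural explanation of the paper's two equilibrium results: the same argument fails for the ground state precisely because $\sign$ is not Lipschitz at $0$, which is exactly where the paper's density-of-modes assumption and spectral splitting near zero energy enter. One cosmetic caveat: the paper's $k$ sites need not be geometrically contiguous, so ``banded with bandwidth $O(k)$'' should be weakened to the sufficient statement that each row of $\tilde o$ has at most $O(Dk^2)$ nonzero entries, each bounded by $O(k\norm{\tilde O_0})$; the estimate $\norm{\tilde o}_F^2=O(n)$, and hence your conclusion, is unaffected.
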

\noindent We point out that, in contrast to the corresponding result for ground states, this stability result corresponding to the Gibbs state does not rely on an assumption on the density of modes of the target Hamiltonian. However, $f(\delta)$ grows with $\beta$, so this result does not directly imply the stability of the ground state since $\beta$ would in general have to be increased with $n$ for the Gibbs state to approximate the ground state.

In the more general setting of a Markovian open quantum system, the fixed point of the master equation would capture its equilibrium properties. Here, the quantum simulator is configured to implement the Hamiltonian and jump operators in Eq.~\ref{eq:quadH_jump} but instead, due to coherent and Markovian incoherent errors, implements the perturbed Hamiltonian and jump operators in Eq.~\ref{eq:quadH_jump_pert}. Similar to the case of the Hamiltonian ground state problem, we make an assumption on the spectral properties of the target Lindbladian --- in particular, similar to assumption \ref{assumption:cont_eigenenergies}, we assume (Hölder) continuity of the fraction of modes with decay rates in the interval $[0, \eta]$,
\begin{assumption}[Informal]\label{assump:lindbladian_holder_cont}
The Gaussian Lindbladian has a unique fixed point and the number of eigenmodes $n_\eta$ with decay rates lying in the interval $(0, \eta]$, for sufficiently small $\eta$, satisfy the upper bound 
\[
n_\eta \leq n f_\ell(\eta) + \kappa(\eta, n),
\]
where $n = DL^d$ is the number of fermionic modes, $f_\ell(\eta) \leq O(\eta^\alpha)$ for some $\alpha > 0$ and $\kappa(\eta, n) \leq o(n)$ for any fixed $\eta$.
\end{assumption}
We provide a precise definition of eigenmode decay rate of a Gaussian Lindbladian in appendix \ref{app:fp_ff}. We remark that this assumption is very mild and is expected to be satisfied for physically relevant models --- in particular, similar to the case of Gaussian Hamiltonians, this assumption is satisfied for translationally invariant models. Furthermore, this assumption is expected to be satisfied for gaussian fermion models that are rapidly mixing in which case all the modes other than the fixed point mode typically have a system size independent decay rate (i.e.~there is a gap in the decay rate spectrum of the Lindbladian) \cite{cubitt2015stability}. Beyond rapidly mixing Lindbladians, assumption \ref{assump:lindbladian_holder_cont} includes systems which have eigenmodes with decay rates scaling as $O(1/n)$ (i.e.~the Lindbladian decay rate spectrum is gapless), and take a much longer time ($\sim \Theta(n)$) to reach its fixed point. For models satisfying assumption \ref{assump:lindbladian_holder_cont}, we show in appendix \ref{app:fp_ff} that 
\begin{proposition}
The quantum simulation task of measuring translationally invariant Gaussian observables generated $k-$locally, in the fixed point of a spatially local free-fermion is stable to coherent and Markovian incoherent errors with $f(\delta) = O(\delta^{1/2}) + O(f_\ell(\delta^{1/4})) \leq O(\delta^\beta)$ for some model-dependent constant $\beta$.
\label{prop:fixed_points_gaussian}
\end{proposition}

\begin{figure}
    \centering
\includegraphics[scale=0.325]{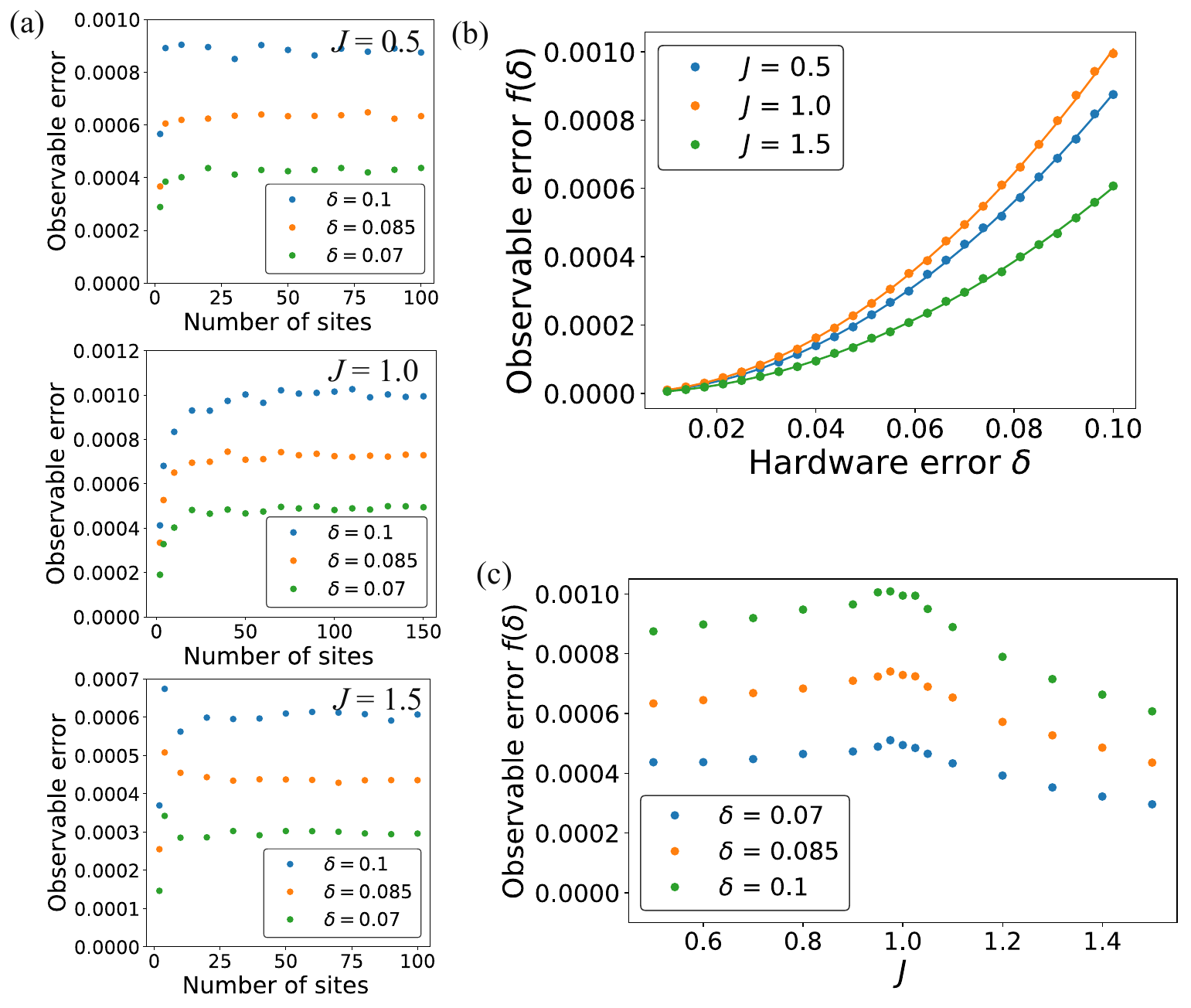}
    \caption{Numerical study of the error for an intensive translationally varying observable in the SSH model. The observable that we study here is $O = H_\text{SSH}[J]/n$, where $H_\text{SSH}[J]$ is the Hamiltonian of the ideal SSH model (Eq.~\ref{eq:SSH}) (a) The error in the expected value of the observable 
 $O$ in the ground state between the perturbed and unperturbed Hamiltonians, as a function of $\delta$, the hardware error, and the number of sites $n$. For both gapped ($J = 0.5, 1.5$) and gapless ($J = 1.0$) cases, we see that the error in $O$ becomes independent of $n$ as $N\to \infty$. (b) Numerically extracted error between the perturbed and unperturbed models for $n\to \infty$ as a function of $\delta$, and its fit with $\delta^2$. (c) The error between the perturbed and unperturbed model as a function of $J$ --- for the same hardware error $\delta$, this error peaks at $J = 1$ which is also the point at which the gap in the unperturbed model closes. All the errors are computed by averaging over 500 random instances of perturbed models.}
    \label{fig:ssh_gs_study}
\end{figure}

%%%%%%%%%%%%%%%%%%%%%%%%%%%%%%%%%%%%%%%%%%%%%%%%%%%%%%%%%%%%%%%%%%%%%%%%%%%%%%

\subsection{Quantum spin systems}

{While for free fermion models, we could prove tight stability results with minimal assumptions on the model, looser stability results hold for quantum spin systems under more restrictive assumptions on their many-body spectrum. In this section, we consider the more general setup described in section \ref{sec:stability_setup} and show the stability of several quantum simulation tasks, in both dynamics and equilibrium, using locality results that have already been established in the many-body literature \cite{lieb1972finite, hastings2004lieb, hastings2005quasiadiabatic, hastings2004locality, poulin2010lieb, nachtergaele2006lieb, bachmann2012automorphic, brandao2019finite}.

%%%%%%%%%%%%%%%%%%%%%%%%%%%%%%%%%%%%%%%%%%%%%%%%%%%%%%%%%
\emph{Finite time dynamics}. Consider first the setting where an initial state $\rho(0) =  (\ket{0}\bra{0})^{\otimes n}$ is evolved under a Lindbladian $\mathcal{L}$ (Eq.~\ref{eq:target_lind_main_text}) for a time $t$ that is independent of $n$. We consider observables $O$ that are either local (i.e.~they only act non-trivially on an $n-$independent subset of spins), or of the form
\begin{align}\label{eq:k_local}
O = O_1 O_2 \dots O_k,
\end{align}
where $O_1, O_2 \dots O_k$ are local and $k$ is independent of $n$, or
\[
O = \sum_{i = 1}^M w_i O_i,
\]
where $\sum_{i = 1}^M \abs{w_i} = 1$, $O_i$ are of the form of Eq.~\ref{eq:k_local} and $M$ can possibly grow with $n$.
For these observables, the stability of this quantum simulation task can be stated:
\begin{proposition}
The quantum simulation task of measuring $k-$local observables, or their weighted averages, for constant-time dynamics under a spatially local Lindbladian is stable under coherent and incoherent errors with $f(\delta) =  O(t^{d + 1}\delta)$.
\label{prop:t_evol}
\end{proposition}
\noindent The proof of this result, provided in appendix \ref{app:t_evol} uses the Lieb-Robinson bounds \cite{lieb1972finite, hastings2004lieb, poulin2010lieb}.  We note that a similar result has been proven for coherent errors and Markovian incoherent errors in Ref.~\cite{cubitt2015stability}. Our contribution is to show that this bound holds in the more general setting of coherent and non-Markovian incoherent errors, and thus is more directly applicable to experimentally realistic quantum simulators.

Note also that for large $t$, the error between the target observable and the observable measured on the quantum simulator grows as $t^{d + 1}$ --- this is looser than the corresponding result in Gaussian fermion models (proposition 1), where the error grows only as $t$. Furthermore, since this error bound becomes loose with $t$, it prevents us from using this result to understand the stability of quantum simulation tasks which are aimed at studying the ground state properties of many-body Hamiltonians, and use the adiabatic algorithms \cite{daley2022practical} that evolve a Hamiltonian for $t \sim \text{poly}(n)$. To address these problems, we separately consider the stability of equilibrium problems (ground states, Gibbs states and fixed points).

%%%%%%%%%%%%%%%%%%%%%%%%%%%%%%%%%%%%%%%%%%%%%%%%%%%%%%%%%
\emph{Equilibrium}. We next study the stability of the task of simulating the ground state and Gibbs states of $H$, and focus only on understanding the impact of coherent Hamiltonian errors. We first consider the problem of measuring $k-$local observables, and their weighted averages, in the ground state. We assume that $H$ is gapped i.e.~the energy difference between the ground state and the first excited state is larger than an constant $\Delta$ independent of $n$. Furthermore, we also assume that the Hamiltonian remains gapped in the presence of errors --- we refer to such a target Hamiltonian $H$ to be \emph{stably gapped}. We point out that the stability of the gap in the presence of errors or perturbations has only been shown for certain frustration free models with local topological order \cite{bravyi2010topological, bravyi2011short, michalakis2013stability, cirac2013robustness}, although we posit it as a reasonable physical assumption. The stability of this quantum simulation task is a direct consequence of the spectral flow method developed by Hastings and co-workers \cite{hastings2005quasiadiabatic, bachmann2012automorphic} which shows that there exists a unitary taking the ground state of $H$ to the ground state of $H'$ that is quasi-local. We thus obtain the following proposition and we include a proof of this in appendix \ref{app:gs}.
\begin{proposition}
The quantum simulation task of measuring $k$-local observables, or their weighted averages, in the ground state of stably gapped spatially local Hamiltonians is stable to coherent Hamiltonian errors with $f(\delta) = O(\delta)$.
\label{prop:gs}
\end{proposition}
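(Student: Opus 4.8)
The plan is to interpolate between $H$ and $H'$ and transport the ground state along the path with the quasi-adiabatic continuation (spectral flow) of Hastings, exploiting the stable gap to keep the flow generator quasi-local with $\delta$-controlled strength. Concretely, I would set $H(s) = H + sV$ with $V = H' - H = \sum_{x} v_x$, $v_x = h_x' - h_x$, $\norm{v_x} \leq \delta$, for $s \in [0,1]$. Each $H(s)$ is of the form $\sum_x h_x(s)$ with $\norm{h_x(s) - h_x} = s\norm{v_x} \leq \delta$, so by the stably-gapped assumption every $H(s)$ has gap $\geq \Delta$ uniformly in $s$. Let $\ket{\psi(s)}$ be its (unique) ground state; the spectral flow then furnishes a unitary $U(s)$ with $\ket{\psi(s)} = U(s)\ket{\psi(0)}$ and $\partial_s U(s) = i\mathcal{D}(s)U(s)$, where the Hermitian generator inherits the additive structure $\mathcal{D}(s) = \sum_x \mathcal{D}_x(s)$ with $\mathcal{D}_x(s) = \int_{-\infty}^{\infty} W(t)\, e^{iH(s)t} v_x e^{-iH(s)t}\, dt$ for the standard filter $W$ whose support and decay are set by $\Delta$.

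Next I would write the observable difference as a telescoping integral along the flow. Since $\tr(O\rho_{H'}) = \bra{\psi(0)}U(1)^\dagger O U(1)\ket{\psi(0)}$ and $U(0)=I$, differentiating $U(s)^\dagger O U(s)$ gives
\[
\tr(O\rho_H) - \tr(O\rho_{H'}) = i\int_0^1 \bra{\psi(s)}[\mathcal{D}(s), O]\ket{\psi(s)}\, ds,
\]
so it suffices to bound $\norm{[\mathcal{D}(s), O]} \leq \sum_x \norm{[\mathcal{D}_x(s), O]}$ uniformly in $s$. For each term I would use two estimates: the crude bound $\norm{[\mathcal{D}_x(s), O]} \leq 2\norm{O}\,\norm{\mathcal{D}_x(s)} \leq 2\norm{O}\,\delta \int\abs{W}$, and, when $x$ is far from the support $\mathcal{S}$ of $O$, the quasi-locality of the spectral flow. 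Lieb--Robinson bounds for $H(s)$ together with the decay of $W$ allow me to approximate $\mathcal{D}_x(s)$ by an operator supported on a ball of radius $r$ about $x$ up to error $C\delta\, g(r)$, with $g$ decaying faster than any polynomial; choosing $r \approx d(x,\mathcal{S})$ so that the localized piece commutes with $O$ yields $\norm{[\mathcal{D}_x(s), O]} \leq 2C\delta\norm{O}\, g\big(d(x,\mathcal{S})\big)$.

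Summing over the lattice then gives $\norm{[\mathcal{D}(s), O]} \leq 2C\delta\norm{O}\sum_x g(d(x,\mathcal{S}))$, and since the number of sites at distance $r$ from $\mathcal{S}$ grows only polynomially ($\sim r^{d-1}$) while $g$ decays subexponentially, the sum converges to a constant independent of $n$. Integrating over $s\in[0,1]$ produces $\bigabs{\tr(O\rho_H) - \tr(O\rho_{H'})} = O(\delta)$, the claimed $f(\delta)$. The extension to weighted averages $O=\sum_i w_i O_i$ with $\sum_i\abs{w_i}=1$ is immediate by linearity and the triangle inequality, since the constant in the single-observable bound depends only on $\Delta$, $d$, $R$, $J$ and the (fixed) diameter of the support, not on the position of $O_i$ or on $n$.

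I expect the main obstacle to be the quasi-locality estimate on $\mathcal{D}_x(s)$: one must carefully combine the Lieb--Robinson velocity for the interpolating family $H(s)$ (uniform in $s$ because $J$ and $R$ are fixed) with the tails of the filter $W$ to obtain a decay $g(r)$ that is simultaneously explicit and summable against the volume growth $r^{d-1}$. This is precisely where the stable gap enters quantitatively, as the shape of $W$ is dictated by $\Delta$; the remainder of the argument is bookkeeping. It is worth emphasizing that, although $V$ is extensive with $\norm{V}=\Theta(n)$, only $O(1)$ generator terms (plus summable tails) fail to commute with the fixed-support observable $O$, which is exactly why the extensive perturbation does not spoil the $n$-independent bound.
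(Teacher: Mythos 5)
Your proposal is correct and follows essentially the same route as the paper's proof: interpolating $H_s = H + s(H'-H)$, invoking the stable gap to apply the Bachmann--Michalakis--Nachtergaele--Sims spectral flow uniformly in $s$, bounding $\abs{\tr(O\rho_H)-\tr(O\rho_{H'})}$ by $\int_0^1\norm{[O,D(s)]}\,ds$, and summing the per-site commutator bounds using Lieb--Robinson estimates together with the superpolynomial tail decay of the filter $W_\Delta$ against the polynomial volume growth of the lattice. The only cosmetic difference is that you localize each generator term $\mathcal{D}_x(s)$ onto a ball before commuting with $O$, whereas the paper splits the time integral at $T_x = \mu\, d(S_O, S_{v_x})/2v$ and bounds the near and far pieces directly --- these are equivalent formulations of the same quasi-locality estimate.
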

\noindent The choice of observables here is crucial to having a stable quantum simulation task --- it is well understood that even for stably gapped Hamiltonians, non-local observables would not be stable. Furthermore, we point out that for the case of Gaussian Fermions and for translationally invariant local observables, the corresponding stability result is less restrictive --- in particular, it does not require even the existence of a gap in the target Hamiltonian. 
%A counter-example here would be the unperturbed Hamiltonian $H = \sum_{i=1}^n Z_i$, and a perturbed Hamiltonian $H' = \sum_{i=1}^n Z_i + \delta X_i$. It can be seen that both these models are gapped for small $\delta$ --- the gap of $H$ is 2, and gap of $H'$ is $2\sqrt{1 + \delta^2}$. However, the non-local observable $O = (\ket{0}\bra{0})^{\otimes n}$, when measured in the ground state of $H$ is 1 and when measured in the ground state of $H'$ evaluates to $(1 + \delta^2)^{-n/2}$ --- thus, at any $\delta > 0$, the error in this observable $\to 1$ as $n \to \infty$, making it unstable in the sense of definition 1.

We next consider the Gibbs state of $H$ at some temperature $\beta$ independent of $n$, and assume that the Gibbs state has an exponential clustering of correlation \cite{brandao2019finite} i.e.~for any two observables $A$, $B$ separated by distance $l$,
\[
\bigabs{\langle A\otimes B \rangle - \langle A \rangle \langle B \rangle }\leq  \norm{A}\norm{B} O(e^{-c_2 l}),
\]
for some model-dependent constant $c_2$. Furthermore, as in the case of ground states, we assume that this exponential clustering of correlations is stable under errors. In this case, we obtain that the problem of measuring $1-$local observables and their weighted averages is stable --- a proof of this is included in appendix \ref{app:Gibbs}.
\begin{proposition}
The quantum simulation task of measuring 1-local observables, or their weighted averages, in the Gibbs state of spatially local Hamiltonians with stable exponential clustering of correlations is stable to coherent Hamiltonian errors with $f(\delta) =  O(\log^{1 - 1/d}(1/\delta)e^{-\Omega(\log^{1/d}(1/\delta))}\big)$.
\label{prop:Gibbs}
\end{proposition}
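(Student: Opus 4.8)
The plan is to exploit the stable exponential clustering of correlations to argue that a $1$-local observable in the Gibbs state is essentially determined by the Hamiltonian in its immediate vicinity, so that the part of the perturbation acting far from the observable cannot appreciably change its expectation value. The weighted-average case reduces immediately to the single-observable case: since $\sum_i \abs{w_i} = 1$ and any bound obtained for a single $1$-local $O_i$ will be uniform in $i$ and independent of $n$, the triangle inequality transfers the bound to $\sum_i w_i O_i$. So fix a $1$-local observable $O$ supported on a site $z$ and introduce a length scale $\ell$ to be optimized. Let $B_\ell$ be the ball of radius $\ell$ around $z$, split the perturbation as $H'-H = \delta\sum_x v_x = V_{\mathrm{in}} + V_{\mathrm{out}}$ according to whether $v_x$ is supported inside $B_\ell$ or not, and let $H_\ell$, $H_\ell'$ denote the truncations of $H$, $H'$ to $B_\ell$.

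The first and conceptually central step is a ``locality of temperature'' statement: under exponential clustering, $\tr(O\rho_H)$ agrees, up to an error set by the correlation length, with its value in the Gibbs state of the truncated Hamiltonian $H_\ell$, and likewise for $H'$ (this is legitimate precisely because clustering is assumed to be \emph{stable}, hence holds for $H'$ and for every interpolant). Concretely I would interpolate along $H(s)$ between $H$ and $H_\ell$ and use the thermal (Kubo--Mori) response identity
\[
\frac{d}{ds}\tr\!\big(O\rho_{H(s)}\big) = -\int_0^\beta \big[\langle O\, W(\tau)\rangle_s - \langle O\rangle_s\langle W\rangle_s\big]\,d\tau,
\]
where $W$ is the boundary difference between $H$ and $H_\ell$ and $W(\tau)$ its imaginary-time evolution. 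Every term of $W$ is supported at distance $\gtrsim\ell$ from $z$, so after approximating the imaginary-time-evolved operators by quasi-local ones the connected correlator is suppressed as $O(e^{-c\ell})$ by clustering; summing over the $O(\ell^{d-1})$ boundary terms gives a localization error $O(\ell^{d-1}e^{-c\ell})$ for each of $H$ and $H'$.

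The second step compares the two \emph{finite}-volume Gibbs states $\rho_{H_\ell}$ and $\rho_{H_\ell'}$, which now differ only by $V_{\mathrm{in}}$, supported entirely on $B_\ell$ with $\norm{V_{\mathrm{in}}} \le \delta\,\abs{B_\ell} = O(\delta\ell^d)$. Expanding $e^{-\beta H_\ell'}-e^{-\beta H_\ell}$ by Duhamel and estimating against the unnormalized Gibbs weights introduces operator norms of the form $\norm{e^{\pm\beta H_\ell}}$, which on a ball of volume $\ell^d$ are themselves $e^{O(\beta J\ell^d)}$; this produces a near-region error of the form $\delta\,e^{O(\ell^d)}$. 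Balancing the two contributions, $\ell^{d-1}e^{-c\ell}$ against $\delta\,e^{O(\ell^d)}$, forces (to leading order) $\ell^d \sim \log(1/\delta)$, i.e.~$\ell \sim \log^{1/d}(1/\delta)$, at which point the dominant (localization) error is of order $\ell^{d-1}e^{-c\ell} = \log^{1-1/d}(1/\delta)\,e^{-\Omega(\log^{1/d}(1/\delta))}$ --- exactly the claimed $f(\delta)$, which is independent of $n$ and vanishes as $\delta\to 0$.

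The hard part will be the localization step, i.e.~upgrading the \emph{static} two-point clustering hypothesis into the statement that the nearby Hamiltonian determines the local observable. The obstruction is that the response identity produces imaginary-time-evolved operators $W(\tau)$ which are not strictly supported near the boundary, so one must first establish their quasi-locality (a Gibbs-state analogue of a Lieb--Robinson bound) before clustering can be invoked; this is also the origin of the volume factor $e^{O(\ell^d)}$ that ultimately degrades the estimate from the $O(\delta\,\mathrm{polylog})$ one might naively hope for down to the stated stretched-exponential form. A secondary point is to check that clustering, and hence the localization bound, holds uniformly along the whole interpolation path, which is guaranteed by the stability hypothesis since each interpolant is a $\delta$-perturbation of $H$ in the sense of Eq.~\ref{eq:perturbation_hamiltonian}.
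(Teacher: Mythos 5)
Your proposal follows essentially the same route as the paper's proof: localize the observable using the clustering hypothesis (the paper invokes the local-indistinguishability lemma of Ref.~\cite{brandao2019finite} as a black box, whose quantum-belief-propagation mechanism you sketch via the Kubo--Mori identity), then compare the truncated Gibbs states on $B_\ell$ by a Duhamel-type bound carrying the $e^{O(\beta \ell^d)}$ volume factor (the paper's lemma~\ref{lemma:pert_theory_gibbs}), and finally balance the two errors at $\ell \sim \log^{1/d}(1/\delta)$, exactly as in the paper. One small correction to your closing remark: the interpolants between $H$ and its truncation $H_\ell$ are \emph{not} $\delta$-perturbations of $H$ in the sense of Eq.~\ref{eq:perturbation_hamiltonian} (the boundary terms being removed have strength $O(J)$), so the clustering needed for the localization step is supplied not by the stability hypothesis but by the paper's definition of exponential clustering, which quantifies over all restricted Hamiltonians $H_X$ --- stability is needed only so that this same property holds for $H'$.
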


More generally, we can consider the problem of finding local observables in the fixed points of spatially local Lindbladians. For this, we restrict ourselves to rapidly mixing Lindbladians which were identified in Ref.~\cite{cubitt2015stability} --- a Lindbladian $\mathcal{L}$ on $n$ spins with fixed point $\sigma$ is rapidly mixing if $\norm{e^{\mathcal{L}t} - \text{Tr}(\cdot) \sigma}_\diamond \leq \text{poly}(n)e^{-\Theta(t)}$ i.e.~irrespective of the initial state, the state of the system converges exponentially fast to the fixed point $\sigma$. Under this assumption, we can show the following stability result.
\begin{proposition}
The quantum simulation task of measuring $k$-local observables, or their weighted averages, in the fixed point of spatially local Lindbladians which satisfy rapid mixing is stable to coherent and incoherent errors with $f(\delta) =  O(\delta\big)$.
\label{prop:fixed_points}
\end{proposition}
\noindent The proof of this proposition, presented in appendix \ref{app:fp}, builds on the analysis in Ref.~\cite{cubitt2015stability} --- in particular, we point out that Ref.~\cite{cubitt2015stability} already establishes that local observables in fixed points of spatially local rapidly mixing Lindbladians is stable to coherent errors and incoherent Markovian noise. Our key contribution is to extend this to the more general and experimentally realistic setting of non-Markovian noise.

%%%%%%%%%%%%%%%%%%%%%%%%%%%%%%%%%%%%%%%%%%%%%%%%%%%%%%%%%%%%%%%%%%%%

%%%%%%%%%%%%%%%%%%%%Section: Quantum advantage%%%%%%%%%%%%%%%%%%%%%%%
\section{Quantum advantage with noisy quantum simulators}

\subsection{Ideal quantum simulators}\label{sec:ideal_qsim}
In the absence of any noise, the advantage of a quantum algorithm over a classical algorithm is often formulated in terms of their run-time scaling with respect to the system size. However, in many-body physics problems, the quantities of interest are the value of certain intensive observables in the thermodynamic limit i.e.~when the system size $n\to \infty$. Consequently, it is less meaningful to consider the algorithm's complexity as a function of system size and instead consider it as a function of the target precision 
$\varepsilon$ demanded in the computed thermodynamic limit \cite{aharonov2022hamiltonian, watson2022computational}. More precisely, let us consider a many-body model defined as a family of Lindbladians $\{\mathcal{L}_n\}_{n \in \mathbb{N}}$ and observables $\{O_n\}_{n\in \mathbb{N}}$, where $\mathcal{L}_n, O_n$ act on $n-$spins. We are interested in the expected value of $O_n$ in a many body quantum state, e.g.~in equilibrium or dynamics, associated with the Lindbladian $\mathcal{L}_n$, $\rho_{\mathcal{L}_n}$. We furthermore assume that the models and observables under consideration have a well-defined thermodynamic limit i.e.
\begin{equation}
O^* := \lim_{n\to \infty}\text{Tr}(\rho_{\mathcal{L}_n} O_n)
\end{equation}
exists.
% Examples of physically relevant $\rho_{H_n}$ would include the
% states obtained on evolving an initial state of $n$ spins after some
% time $t$, the Gibbs state corresponding to $H_n$ or the ground
% state of $H_n$. We will mostly focus on  families of local 
% Hamiltonians
% \begin{align}\label{eq:hamiltonian_n_spins}
% H_n = \sum_{x \in \mathbb{Z}_{L_n }^d} h_x,
% \end{align}
% where $h_x$ acts on spins in a cube of unit length with $x$ being
% its lower left corner and satisfies $\norm{h_x} \leq J$ for some $J
% > 0$. Typically, one considers translationally invariant models,
% where $h_x$ is just some $h_0$ translated to the position $x$. We will also
% only consider here observables $O_n$ are then acting on a single or
% few lattice sites.
Now, given a precision $\varepsilon$, we can then choose
$n$ as a function of $\varepsilon$ such that
\begin{equation}
\label{Othermlim}
\abs{O^* - \textnormal{Tr} (O_n\rho_{H_n})} \leq \varepsilon,
\end{equation}
i.e.~approximate the thermodynamic limit by a
finite-size problem. The run-time of a quantum simulation or a
classical simulation for the finite-size problem can thus be
expressed in terms of the precision $\varepsilon$ demanded
in the thermodynamic limit. This allows us to then compare the
scalings of the run-time of these algorithms with the precision
$\varepsilon$, and declare an algorithm \emph{to have an advantage
in precision} compared to others depending on their respective
scaling.  For instance, if a quantum algorithm has a complexity $T_{\rm Q}={\rm poly}(1/\varepsilon)$, then we will have a superpolynomial advantage over a classical algorithm with complexity $T_{\rm cl}={\rm exp}({\log^2(1/\varepsilon})) $ and exponential advantage over a classical algorithm with complexity $T_{\rm cl}={\rm
exp}({O(1/\varepsilon)})$. 

% We note that there are also classical algorithms which directly operate in the thermodynamic limit \cite{white1992density, jordan2008classical}. We will not consider them here, since for equilibrium problems in dimensions higher than two, it is not possible to give rigorous scalings, and for the problem of dynamics, they give the same scaling as the ones considered here.

\begin{figure}
    \centering
\includegraphics[scale=0.45]{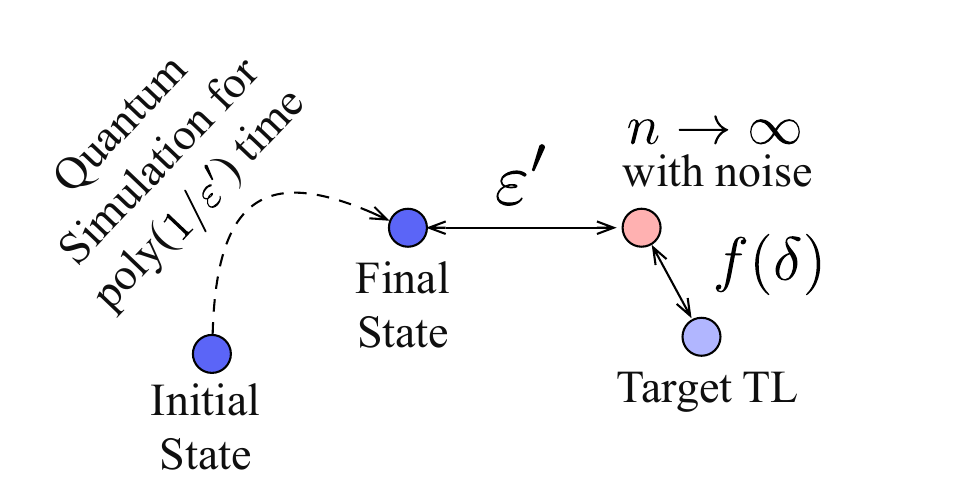}
    \caption{An erroneous quantum simulator can obtain the thermodynamic
    limit of the perturbed model to a precision $\varepsilon'$ in time $\text{poly}(1/\varepsilon')$ --- this thermodynamic limit, however, can have an error $f(\delta)$ from the target thermodynamic limit in the presence of hardware error $\delta$.}
\end{figure}

%%%%%%%%%%%%%%%%%%%%%%%%%%%%%%%%%%%%%%%%%%%%%%%%%%%%%%%%%
\emph{Finite time quantum dynamics}: We consider first an initial product state $\ket{0}^{\otimes n}$, and for $t > 0$, we take $\rho_{\mathcal{L}_n} = e^{\mathcal{L}_n t}\big((\ket{0}\bra{0})^{\otimes n}\big)$. The observable of interest is a fixed \emph{local observable} $O_n = O$. The existence of the thermodynamic limit is obtained directly using
the Lieb-Robinson bounds \cite{lieb1972finite, hastings2004lieb}, which also characterize the error
between the thermodynamic limit and its finite-size approximation -- for the problem of computing a local observable after evolving
$\ket{0}^{\otimes n}$ for finite-time $t$ with respect to a $d-$dimensional spatially-local Hamiltonian, $O^*$ exists and fulfills Eq.~\ref{Othermlim}
for $n = \Omega(\log^d(1 / \varepsilon)+t^d)$.
On an ideal quantum simulator, one would evolve $n = \Theta(\log^d(1 / \varepsilon)+t^d)$ qubits for a time  $t$ and measure the observable. The procedure would be repeated $\Theta(1/\varepsilon^2)$ to reduce the measurement error in the observable to $\varepsilon$, yielding a total run-time of $O(t/\varepsilon^2)$

On a classical computer, in general, the only algorithm with a rigorous guarantee known for the problem of computing local observables for a finite number of spins $n$ are either exact diagonalization or Krylov subspace methods \cite{brenes2019massively}. Using either of these methods has a worst-case run-time that scales at least exponentially with the number of spins $n$. To compute the thermodynamic limit to a precision $\varepsilon$, we would first approximate the thermodynamic limit by a finite-size problem and then use these classical algorithms on the resulting finite-size problem. In appendix \ref{app:lower_bounds_dyn} (proposition \ref{prop:lower_bound_dynamics}), we exhibit a local observable in a simple nearest-neighbour tight-binding model on $\mathbb{Z}^d$ such that the system size needed to approximate its thermodynamic limit to a precision $\varepsilon$, for a fixed $t$, is at least $\Omega(\log^d(\Theta(\varepsilon^{-1})) / \log^d \log(\Theta(\varepsilon^{-1})))$. Thus, exact diagonalization or Krylov subspace methods scale \emph{at least} super polynomially with $\varepsilon^{-1}$ on worst-case instances, yielding a super polynomial advantage of using quantum simulators.

We note that there are several classical heuristic algorithms, which use an efficiently contractable tensor network ansatz, that in many problems are much faster than the worst case  \cite{white1992density, jordan2008classical}. However, these methods do not have rigorous guarantees --- since, when noiseless, the quantum simulation of dynamics is not a heuristic, we only compare it to classical algorithms with rigorous guarantees. We also point out that one could also use other classical methods that work directly in the thermodynamic limit and obtain a better scaling than Krylov subspace methods with the required precision. For instance, in Ref.~\cite{alvaro} a method based on cluster expansion was analyzed for which the computational time is upper bounded by poly$(\varepsilon^{-1})$, although this upper bound scales super-exponentially with time. To the best of our knowledge, providing a lower bound on this classical algorithm is an open problem. However, assuming that it has the same scaling with $t$ as the upper bound provided in Ref.~\cite{alvaro}, a quantum simulator will have an exponential quantum advantage with respect to this method for evolution times $t \sim \textnormal{poly}(\varepsilon^{-1})$.

%%%%%%%%%%%%%%%%%%%%%%%%%%%%%%%%%%%%%%%%%%%%%%%%%%%%%%%%%
\emph{Ground state}: Consider next the problem of estimating local observables in the ground state of many-body Hamiltonians in the thermodynamic limit. The convergence rate of a finite-size approximation of a local observable to its thermodynamic limit for the ground state problem is expected to depend on whether the model is gapped (and hence the ground state has exponentially decaying correlations \cite{hastings2004locality, hastings2006spectral}) or gapless. While it is generally hard to rigorously characterize the rate of convergence of a finite-size approximation to the thermodynamic limit for ground states, it is physically reasonable to assume
\begin{itemize}
\item \textbf{Logarithmic Convergence}, Eq.~\ref{Othermlim} holds for $n = \Omega(
\log^d(1/\varepsilon))$, where $d$ is the lattice dimension, or

\item \textbf{Power-Law Convergence}, Eq.~\ref{Othermlim} holds for $n = \Omega({\rm
poly}(\varepsilon^{-1}))$.
\end{itemize}
\noindent The first case is expected to hold for gapped models, and can be rigorously established for models satisfying local topological quantum order condition \cite{bravyi2010topological, cirac2013robustness, michalakis2013stability}. Additionally, for gapped models, Logarithmic Convergence is expected to be tight since we can easily construct examples of spatially local gapped Hamiltonians (such as the AKLT model \cite{affleck2004rigorous}), for which a system-size of at least $\Omega(\log^d(\varepsilon^{-1}))$ is needed to approximate the thermodynamic limit of a local observable to a precision $\varepsilon$ (see appendix \ref{app:lower_bound_gs}, proposition \ref{prop:lower_bound_gapped}). Power-Law Convergence is expected to hold for critical (gapless) models --- for instance, this is the case for the Gaussian fermionic Hamiltonians analyzed in the previous section, under very general conditions for the Fermi surface. Similar to the situation with dynamics, currently available classical algorithms with rigorous guarantees to compute a ground state observable use either exact diagonalization or a Krylov subspace method on the finite-size Hamiltonian approximating the thermodynamic limit. Thus, for models satisfying the Logarithmic Convergence condition, a classical computer would require time $\exp(\Omega(\log^d(\varepsilon^{-1})))$ in the worst case. Instead, for models satisfying the Power-Law Convergence condition, a classical computer is expected to require time $\exp(\Omega(\text{poly}(\varepsilon^{-1})))$.

%\redit{Similar to the case for gapped models, in appendix \ref{app:lower_bound_gs}, proposition \ref{prop:lower_bound_gapless}, we show a simple example of a local free-fermion Hamiltonian for which a system size of at least $\Omega(\varepsilon^{-1})$ would be needed to approximate the thermodynamic limit of a local observable to a precision $\varepsilon$, thus forbidding a faster convergence to the thermodynamic limit for gapless models.}
Furthermore, to ensure that there is a quantum algorithm that reaches the ground state we will assume that $H_n$ is adiabatically connected to a family of Hamiltonians $H_n^{(0)}$ with efficiently preparable ground states such that the minimal gap, $\Delta_n$, along the adiabatic path fulfills $\Delta_n \ge \Omega(1/{\rm poly}(n))$. This assumption ensures that using the adiabatic algorithm one can reach the ground state within an error $\varepsilon$ in a time $T_{\rm Q}= {\rm
poly}(n,1/\varepsilon)$, or $T_{\rm Q} ={\rm poly}(1/\varepsilon)$ if framed entirely in terms of the precision of the thermodynamic limit\footnote{Note that we could have also
considered a constant gap, in which case, at least under certain
further assumptions on the Hamiltonian, it is provably possible to reach the ground state in $T_{\rm Q}={\rm polylog}(n,1/\varepsilon)$ \cite{ge2016rapid}.}, and is expected to hold for physically relevant gapped or gapless models. Comparing this run-time with those of the classical algorithms discussed above, we then expect a superpolynomial quantum advantage for (gapped) models satisfying Logarithmic Convergence to the thermodynamic limit, and (gapless) models satisfying Power-Law Convergence to the thermodynamic limit.

\emph{Fixed points}: Similar to ground states of many-body Hamiltonians, we can consider the problem of computing local observables in the fixed points of many-body Lindbladians. Depending on the spectral properties of the Lindbladian, a local observable in the fixed point may exhibit Logarithmic or Power-Law Convergence to the thermodynamic limit. In particular, for rapidly mixing Lindbladians, it has been shown in Ref.~\cite{cubitt2015stability} that local observables in the fixed point exhibit a logarithmic convergence to the thermodynamic limit. Additionally, this convergence is tight i.e.~we can exhibit a specific rapidly mixing Lindbladian for which a system-size of at least $\Omega(\log^d(\varepsilon^{-1}))$ is needed to approximate the thermodynamic limit of a local observable to a precision $\varepsilon$ (see appendix \ref{app:lower_bound_fp}, proposition \ref{prop:log_conv_fixed_point}). Furthermore, Lindbladians which are not rapidly mixing but take time polynomial in the system size to reach their fixed points would have local observables satisfying Power-Law Convergence. Examples of such Lindbladians would be those corresponding to Glauber dynamics corresponding to the 2D critical Ising model \cite{lubetzky2012critical}.

As with the ground state problem, Krylov subspace or exact methods, which have a rigorous guarantee, would require a worst-case time $\exp(\Omega(\log^d(\varepsilon^{-1})))$ for models with logarithmic convergence and $\exp(\Omega(\text{poly}(\varepsilon^{-1})))$ for models with power-law convergence. Furthermore, under the physically-motivated assumption that Lindbladian dynamics, for a finite system with system-size $n$, reaches its fixed point in at most $\textnormal{poly}(n)$ time, an ideal quantum simulator that implements this Lindbladian would require a time $O(\textnormal{poly}(1/\varepsilon))$ to approximate the thermodynamic limit of local observable to a precision $\varepsilon$. Thus, we expect to obtain a superpolynomial quantum advantage for problems with logarithmic convergence, and exponential quantum advantage for problems with power-law convergence.

%%%%%%%%%%%%%%%%%%%%%%%%%%%%%%%%%%%%%%%%%%%%%%%%%%%%%%%%%%%%%%%%
\subsection{Noisy quantum simulators}

In the presence of errors, the arguments formulated in the previous subsection no longer hold. For unstable quantum simulation tasks, we expect thermodynamic limits in the presence of errors to be a bad approximation to the target thermodynamic limit. However, as we previously discussed, for stable quantum simulation tasks in particular, the noisy quantum simulator can still produce a faithful approximation of the thermodynamic limit with the hardware error $\delta$ setting a limit on the obtained precision. More precisely, in time
$\textnormal{poly}(1/\varepsilon')$, the quantum simulator is
expected to compute the thermodynamic limit of the noisy model
to a precision $\varepsilon'$ --- the precision of the target
thermodynamic limit obtained is thus upper bounded by
\[
\varepsilon \leq O(\text{max}(\varepsilon', f(\delta))),
\]
where $f(\delta)$ is given in Definition 1. Therefore, in the presence of hardware errors, the quantum simulator need not be run 
beyond a time needed to obtain $\varepsilon' = f(\delta)$, and we can expect to compute the target thermodynamic limit to a
precision of $O(f(\delta))$. As summarized in Table \ref{tab:stability}, we typically obtain $f(\delta) =
\text{poly}(\delta)$ for most stable many-body simulation tasks, and
thus to be able to obtain the thermodynamic limit to a precision of
$O(\text{poly}(\delta))$, determined entirely by the hardware error
$\delta$, in quantum-simulation time $O(\text{poly}(1/\delta))$.

A numerical illustration of this analysis is shown in Fig.~\ref{fig:gapless_adb} --- here, we use the adiabatic quantum algorithm to find the energy density observable in the ground state of the critical SSH model (i.e.~Eq.~\ref{eq:SSH} with $J = 1$). Figure \ref{fig:gapless_adb}(a) shows the convergence of the energy density observable, in the absence of errors, to its thermodynamic limit --- we see that a power-law convergence is obtained, as physically expected for gapless models. In Fig.~\ref{fig:gapless_adb}(b), we use a system-size that yields a precision of $O(f(\delta))$, as determined by the stability bounds on the ground-state of this model, and simulate an adiabatic algorithm to find the ground state in the presence of hardware error. We see that, in the presence of errors, the accuracy in the achieved precision is fundamentally limited by the hardware precision $\delta$ --- Fig.~\ref{fig:gapless_adb} shows the run-time of the adiabatic algorithm as a function of this hardware-limited precision. We see that this run-time scales polynomially with $1/\varepsilon$, where $\varepsilon$ is the hardware-limited precision that is achieved by the adiabatic algorithm.
\begin{figure*}
    \centering
    \includegraphics[scale=0.3]{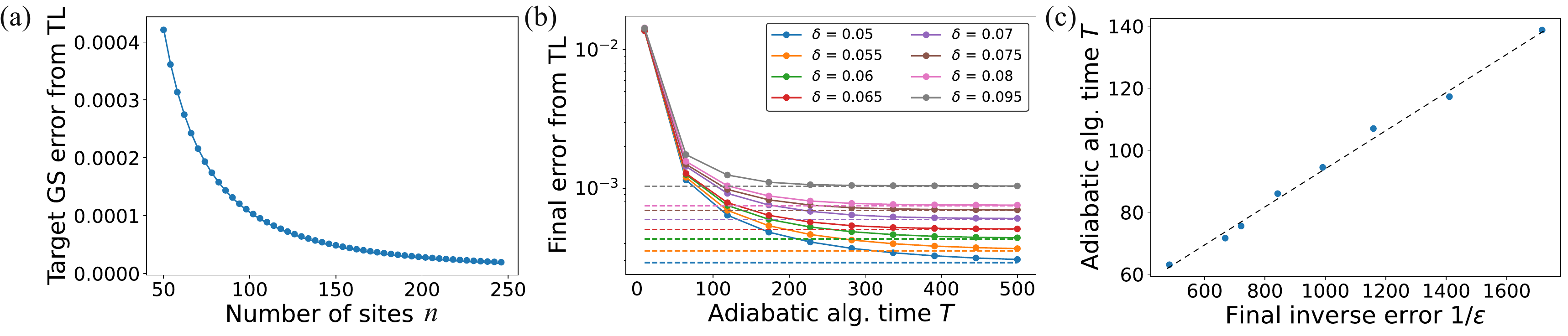}
    \caption{Numerical study of quantum adiabatic algorithm in the presence of error. We consider using the adiabatic algorithm to find the energy density observable for the critical SSH model in the thermodynamic limit (TL). (a) Convergence of the energy density to the thermodynamic limit as $n\to \infty$ --- the scaling of $\varepsilon$ with $n$ reveals a power-law scaling that is expected for gapless models. (b) The adiabatic algorithm in the presence of hardware errors --- the quantity being plotted is the error of the noisy adiabatic algorithm from the thermodynamic limit of the noiseless model. The precision achieved by the adiabatic algorithm is fundamentally limited by hardware errors. (c) The final precision ($\varepsilon$) achieved by an adiabatic algorithm in the presence of errors as a function of the adiabatic algorithm run-time, confirming that $T\sim \text{poly}(1/\varepsilon)$ as expected from our analysis. Thus, on decreasing the hardware error, the error achievable by the noisy quantum simulator decreases and the run-time of the quantum algorithm increases at most polynomially.}
    \label{fig:gapless_adb}
\end{figure*}

To define a notion of advantage in the presence of noise, we can now compare the classical and quantum run-times needed to achieve this hardware-limited precision. Assuming $f(\delta) = \text{poly}(\delta)$, it follows from the discussion in the previous subsection that we would need classical run-times that are either superpolynomial or exponential in $\text{poly}(1/\delta)$ to achieve the precision that can be achieved by quantum simulators in time $\text{poly}(1/\delta)$. That is, if $\delta$ is decreased by a constant factor, then the run-time of the quantum simulator will only increase at-most polynomially with this factor, while the run-time of the classical simulator will increase by a super-polynomial or exponential factor. We summarize our expectation of noisy quantum advantage for the quantum simulation task of finding local observables in dynamics, ground states and fixed points in the table \ref{table:run_times}. This table is based on both the stability results provided in section~\ref{sec:stability}, as well as the classical and quantum-time analysis for many-body models performed in section \ref{sec:ideal_qsim}.

We emphasize the following two points regarding table~\ref{table:run_times}. \emph{First}, the ``provable" quantum advantage for dynamics, stably gapped ground states with logarithmic convergence, and fixed points of rapidly mixing Lindbladians is only with respect to Krylov subspace methods or exact methods, which are the only known classical algorithms with rigorous guarantees known for these models. To the best of our knowledge, it remains an open problem to provide a universal lower bound on \emph{any possible} classical algorithm or even connect a possible quantum advantage for the specific many-body problems that we consider to well-known complexity assumptions. \emph{Second}, we only conjecture the quantum advantage for critical models (both ground states and fixed points) --- our conjecture is based on the novel stability results that we provide for Gaussian fermion models and under the expectation that these models could capture the qualitative physics of more complex non-Gaussian models, even though Gaussian fermion models on their own can be solved efficiently on classical computers.

\begin{table*}
    \begin{tabular}{p{2.5cm} L{2.5cm} L{4cm} L{3.2cm}  L{3.7cm}}
    \hline 
       \textbf{Problem} & 
       \textbf{Error model} &
       \textbf{Assumption} &
       \textbf{Classical run-time } &
       \textbf{Quantum run-time }\\ \hline 
        \multicolumn{3}{c}{\emph{Provable noisy quantum advantage over Krylov Subspace/Exact methods}} \\ \hline \\ 
        Dynamics &
        General errors & 
        None &
        $\exp(\tilde{\Omega}(\log^d(\Theta(\delta^{-1})))) $&
        $O(\text{poly}(\delta^{-1}))$ \\ \ \\
            
        Ground state  &
        Coherent Hamiltonian errors &
        \begin{itemize*}
        \item Stable gap
        \end{itemize*}
        \begin{itemize}[leftmargin=*]
        \item Logarithmic convergence
        \item Adiabatic path with a gap $>\Omega(1/\text{poly}(n))$
        \end{itemize}&
        $\exp({\Omega}(\log^d(\Theta(\delta^{-1}))))$ &
        $O(\text{poly}(\delta^{-1}))$ \\ \ \\

        Fixed points & 
        General errors &
        Rapid mixing &
        $\exp({\Omega}(\log^d(\Theta(\delta^{-1}))))$ &
        $O(\text{poly}(\delta^{-1}))$  \\ \ \\ \hline 
        \multicolumn{2}{c}{\emph{Conjectured noisy quantum advantage}}\\ \hline \\

        Ground states &
       Coherent Hamiltonian errors&
        \begin{itemize*}
        \item Stable $\Omega(1/\text{poly}(n))$ gap
        \end{itemize*}
        \begin{itemize}[leftmargin=*]
        \item Power-law convergence
        \item Adiabatic path with a gap $>\Omega(1/\text{poly}(n))$
        \end{itemize} &
         $\exp({\Omega}(\text{poly}(\delta^{-1})))$ &
        $O(\text{poly}(\delta^{-1}))$\\ \ \\

        Fixed points &
        Coherent and incoherent Markovian errors & 
        \begin{itemize*}
            \item Reaches $\varepsilon-$close to fixed point in $O(\text{poly}(n, 1/\varepsilon))$ time
        \end{itemize*}
        \begin{itemize}[leftmargin=*]
            \item Power-law convergence
        \end{itemize}
        & $\exp({\Omega}(\text{poly}(\delta^{-1})))$ 
        & $O(\text{poly}(\delta^{-1}))$\\ \ \\

    \hline
    \end{tabular}
    \caption{Summary of the classical and quantum run-times for thermodynamic limits needed to obtain the hardware-limited precision with hardware error/noise $\delta$. For classical run-times, we only consider Krylov subspace methods or exact diagonalization as the classical algorithm for the provided lower bounds, and not heuristics which do not have rigorous convergence guarantees. Note that $\tilde{\Omega}$ suppresses $\log \log(\delta^{-1})$ factors.}
    \label{table:run_times}
\end{table*}

\section{Conclusion}
We have considered both the stability and quantum advantage of using near-term analogue quantum simulators for thermodynamic limits of many-body problems in physics. Based on both existing theoretical results in many-body literature, and new technical results for free-fermion models, we argue that many physically relevant many-body problems are stable to a constant rate of error on the quantum hardware being used to solve them and thus are accessible in near-term experiments. We  also hypothesize that these algorithms have an advantage, with respect to the obtained precision, in computing thermodynamic limits of many-body problems. Our formulation and results provides some evidence for near-term analogue quantum simulators being useful for solving many-body problems.

Extending the stability results for gapless/critical models to the case of quantum spins, or non-Gaussian fermionic systems is an immediate open problem suggested by our work. While previous work by Hastings \cite{hastings2005quasiadiabatic} indicates that, under some assumption on the density of states of the many-body model, such a stability result could hold for gapless spin systems, it would be interesting to see if restricting observables to being translationally invariant could help improve these results. Similarly, understanding the stability of Lindbladian dynamics and fixed point problems for quantum spin systems or non-Gaussian fermionic systems beyond the rapid mixing assumption would also be an important extension of our work.

\begin{acknowledgements}
We acknowledge useful discussions from Dorit Aharonov and Álvaro M. Alhambra. This research is funded by the German Federal Ministry of Education and Research (BMBF) through EQUAHUMO (Grant No. 13N16066) within the funding program quantum technologies - from basic research to market and by the Munich Quantum Valley (MQV), which is supported by the Bavarian state government with funds from the Hightech Agenda Bayern Plus. R.T. also acknowledges funding from the Max-Planck Harvard Research Center for Quantum Optics (MPHQ) postdoctoral fellowship. A.F.R. is supported by the Alexander von Humboldt Foundation through a postdoctoral fellowship.

\end{acknowledgements}

\bibliography{references.bib}

\newpage
\onecolumngrid
\appendix

\section{Notational preliminaries}
The following is a list of notations that we will use for the mathematical proofs in the following appendices. Here $v$ denotes a vector, and $A$ a matrix.

\begin{itemize}
    \item $\|v\|_p\equiv \left(\sum_i{v_i^p}\right)^{\frac{1}{p}}$ denotes the $p$-norm of the vector $v$. In the $p\to\infty$ limit, it becomes the max norm,  ${\|v\|_{\infty}\equiv\max_{i}{|v_i|}}$.
    \item $\norm{A}_{\op, p}$ is the $p^\text{th}$ Schatten norm of $A$ i.e.~the $p-$norm of the singular values of $A$. Note that $\norm{A}_{\op} := \norm{A}_{\op, \infty}$ is also the operator norm of $A$, i.e.~the $\norm{A}_{\op, \infty} = \sup_{x, \norm{x}_2 = 1}\norm{Ax}_2$ and $\norm{A}_{\op,1}$ denotes the trace norm, i.e. the 1-Schatten norm of $A$.
    \item For a superoperator $\mathcal{A}$, we define the $\norm{\mathcal{A}}_{p \to p} = \max_{O, \norm{O}_{\op, p} = 1} \norm{\mathcal{A}(O)}_{\op, p}$. We define the completely bounded version of this norm, $\norm{\mathcal{A}}_{p \to p, cb} = \sup_{n \geq 2} \norm{\mathcal{A} \otimes \textnormal{id}_n}_{p \to p}$, which is stable under tensor product. Furthermore, as is standard, we will use $\norm{\mathcal{A}}_\diamond := \norm{\mathcal{A}}_{1 \to 1, cb}$.
    \item $\text{vec}(A)$ denotes the vectorization of $A$, i.e. the vector whose components are the matrix elements of $A$. The precise order in which the matrix elements are arranged in the vector will not be relevant for our proofs, and can be arbitrarily chosen.
    \item Unless otherwise mentioned, $\norm{v}$, where $v$ is a vector, will denote its $\ell^2$ norm and $\norm{O}$, where $O$ is an operator, will be its operator norm.
\end{itemize}

In our analysis of Gaussian fermion models, we will assume that for each site $x \in \mathbb{Z}_L^d$, there are $D$ fermionic modes. Suppose $a_x^n$, for $x \in \mathbb{Z}_L^d$ and $n \in \{1, 2 \dots D\}$ be the fermionic annihilation operator corresponding to the $n^\text{th}$ fermion mode at the site $x$. The Majorana operators, $c_x^{2n - 1}, c_x^{2n}$ associated with this mode will be 
\[
c_x^{2n - 1} = \frac{1}{\sqrt{2}}\big(a_x^{n^\dagger} + a_x^n\big) \text{ and }c_x^{2n} = \frac{1}{\sqrt{2} i}\big(a_x^{n^\dagger} - a_x^n\big).
\]
In most of the proofs, we will not need to distinguish between the two Majorana operators and will represent them as $c_x^\alpha$ with the index $\alpha \in \{1, 2 \dots 2D\}$. For a Hermitian operator $O$ expressed as a quadratic form over the Majorana operators $c_x^\alpha$,
\begin{align}\label{eq:quadH}
O = \sum_{x, y \in \mathbb{Z}^d_L}\sum_{\alpha, \beta = 1}^{2D} o_{x, y}^{\alpha, \beta}c_x^\alpha c_y^\beta,
\end{align}
we will denote by $\tilde{O}$ the matrix of coefficients $o_{x, y}^{\alpha, \beta}$, with the indices $(x, \alpha)$ corresponding to the rows and $(y, \beta)$ corresponding to the columns. We will assume, without loss of generality and unless otherwise mentioned, that $\tilde{O}$ is a Hermitian matrix with purely imaginary matrix elements.

Unless otherwise mentioned, $n$ will be used for the number of spins or fermionic modes in the lattice system under consideration. For Gaussian fermionic models considered in this paper, defined on the lattice $\mathbb{Z}_L^d$ and with $D$ fermions per site, we will use $N = L^d$ to denote the number of lattice sites and therefore $n = D N$.

We will need the following two lemmas:
\begin{lemma}\label{lemma:bound_op_norm}
Let $M \in \mathbb{C}^{n \times m}$ such that $|M_{ij}|\leq \delta, \forall i,j$, the rows of $M$ have at most $m_r$ nonzero elements and the columns of $M$ have at most $m_c$ nonzero elements, then $\norm{M}_{\op}\leq \sqrt{m_c m_r} \delta$.
\end{lemma}
\begin{proof} 
Let $v \in \mathbb{C}^{m}$ be a vector, and denote $\mc R_i\equiv\{j|M_{ij}\neq 0\}$, $\mc C_j\equiv\{i|M_{ij}\neq 0\}$. Note that, by assumption, $\abs{\mathcal{R}_i}\leq m_r \ \forall i, \abs{\mathcal{C}_j} \leq m_c \ \forall j$. Then,
\begin{align*}
    \|Mv\|^2 &=\sum_{i=1}^n{\bigg|\sum_{j\in \mc R_i}{M_{ij}v_j}\bigg|^2}\nonumber\\
    &\leq\sum_{i=1}^{ n}{\left(\sum_{j\in\mc R_i}{|M_{ij}|^2}\sum_{k\in\mc R_i}{|v_k|^2}\right)}\nonumber\\
    &\leq \sum_{i=1}^{n} \abs{\mathcal{R}_i} \delta^2 \sum_{k\in\mc R_i}{|v_k|^2}\leq m_r\delta^2\sum_{i=1}^{ n}\sum_{k\in\mathcal{R}_i}{|v_k|^2} \nonumber\\
    &= m_r\delta^2\sum_{k=1}^{ n}\sum_{i\in\mc C_k}{|v_k|^2} = m_r\delta^2\sum_{k=1}^{ n}\abs{\mathcal{C}_k}{|v_k|^2}\\
    &\leq m_r m_c\delta^2\|v\|^2\implies \|M\|_\op\leq \sqrt{m_r m_c} \delta.
\end{align*}
\end{proof}\begin{lemma}\label{lemma:pert_theory} Given bounded Hermitian operators $H$ and $H'$, for any bounded operator $O$
\[
\bignorm{e^{iH' t} O e^{-iH't} - e^{iHt}Oe^{-iHt}}_{\op} \leq 2\norm{O}_\textnormal{op}\norm{H - H'}_\textnormal{op}t.
\]
\end{lemma}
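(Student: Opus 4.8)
The plan is to reduce the statement about conjugated (Heisenberg-picture) observables to an estimate on the propagators $e^{-iHt}$ and $e^{-iH't}$ alone, and then to control the difference of propagators by a Duhamel-type integral identity. Writing $U = e^{-iHt}$ and $U' = e^{-iH't}$, the quantity of interest is $U'^\dagger O U' - U^\dagger O U$. Since $H, H'$ are Hermitian, both $U$ and $U'$ are unitary, so I would first split the difference symmetrically by adding and subtracting $U^\dagger O U'$:
\[
U'^\dagger O U' - U^\dagger O U = (U' - U)^\dagger O U' + U^\dagger O (U' - U).
\]
Applying the triangle inequality and submultiplicativity of the operator norm, together with $\norm{U}_\op = \norm{U'}_\op = 1$ and $\norm{(U'-U)^\dagger}_\op = \norm{U' - U}_\op$, yields
\[
\bignorm{U'^\dagger O U' - U^\dagger O U}_\op \leq 2 \norm{O}_\op \norm{U' - U}_\op.
\]

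The remaining step is to establish the standard propagator estimate $\norm{e^{-iH't} - e^{-iHt}}_\op \le t\norm{H - H'}_\op$. Here I would use the interpolating path $s \mapsto e^{-iH's}e^{-iH(t-s)}$, whose endpoints at $s = 0$ and $s = t$ are exactly $e^{-iHt}$ and $e^{-iH't}$. Differentiating in $s$ (noting that $H'$ commutes with $e^{-iH's}$ and $H$ with $e^{-iH(t-s)}$) and integrating gives the Duhamel identity
\[
e^{-iH't} - e^{-iHt} = -i\int_0^t e^{-iH's}(H' - H)e^{-iH(t-s)}\,ds,
\]
and bounding the integrand by $\norm{H - H'}_\op$ (using unitarity of the flanking exponentials, which have norm $1$) produces the factor of $t$. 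Combining this with the previous display gives the claimed bound $2t\norm{O}_\op\norm{H - H'}_\op$.

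This argument is essentially routine: both steps are elementary once the right interpolation is chosen, and I do not expect a genuine obstacle. The only points requiring care are that conjugation by unitaries preserves the operator norm (so that no spectral-gap or locality assumption enters), and that the Duhamel integral is arranged so that one Hamiltonian flanks each side of the difference $H' - H$. The symmetric triangle-inequality split in the first step is what makes this clean: it lets the single propagator estimate carry the whole argument, rather than forcing a direct analysis of the conjugated dynamics, which would instead produce the more awkward commutator $[\,H' - e^{iH's}He^{-iH's},\,\cdot\,]$ whose norm is not simply $\norm{H - H'}_\op$.
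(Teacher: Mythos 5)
Your proof is correct, but it takes a different route from the paper's. You reduce the conjugated difference to a bound on the propagators via the symmetric split $U'^\dagger O U' - U^\dagger O U = (U'-U)^\dagger O U' + U^\dagger O(U'-U)$, and then prove $\norm{e^{-iH't}-e^{-iHt}}_\op \leq t\norm{H-H'}_\op$ by differentiating the interpolation $s \mapsto e^{-iH's}e^{-iH(t-s)}$; the factor of $2$ comes from the triangle inequality. The paper instead differentiates the single interaction-picture operator $\tilde{O}(t) = e^{-iHt}e^{iH't}Oe^{-iH't}e^{iHt}$ in $t$ directly: the derivative produces two terms, each of the form (unitary)$(H'-H)$(unitary)$O$(unitaries), so $\norm{\frac{d}{dt}\tilde{O}(t)}_\op \leq 2\norm{H-H'}_\op\norm{O}_\op$, and integrating gives the bound in one stroke, with the factor of $2$ coming from the two commutator-like terms rather than from a split. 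Your version is more modular --- it isolates the reusable propagator estimate, which is a standard lemma in its own right --- while the paper's is a single differentiation with no intermediate object. One side remark in your writeup is off: you suggest that a direct analysis of the conjugated dynamics would force the awkward commutator $[H' - e^{iH's}He^{-iH's},\,\cdot\,]$, but that only arises if one interpolates along a Hamiltonian path $H + s(H'-H)$; the paper's interpolation in the time variable avoids it entirely and yields the clean sandwiched difference $H'-H$ with the same constant. Both arguments implicitly take $t \geq 0$ (otherwise replace $t$ by $\abs{t}$), and both use only unitarity of the exponentials, so neither needs anything beyond Hermiticity and boundedness.
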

\begin{proof}
Consider the operator $\tilde{O}(t) \equiv e^{-iHt} e^{iH't} O e^{-iH't} e^{iHt}$. Note that
\[
\frac{d}{dt}\tilde{O}(t) = i\bigg[e^{-iHt} (H' - H) e^{iH't} Oe^{-iH't}e^{iHt} -  e^{-iHt} e^{iH't} Oe^{-iH't}(H' - H)e^{iHt}\bigg],
\]
and consequently,
\[
\bignorm{\frac{d}{dt}\tilde{O}(t)} \leq 2\norm{H - H'}\norm{O}.
\]
We then immediately obtain that
\[
\bignorm{e^{iH' t} O e^{-iH't} - e^{iHt}Oe^{-iHt}}\leq \int_0^t \bignorm{\frac{d}{ds}\tilde{O}(s)} ds \leq 2\norm{O}\norm{H - H'} t.
\]
\end{proof}
\section{Gaussian fermion models}
\subsection{Proof of proposition \ref{prop:t_evol_ff} (Dynamics of free fermion models)}
\label{app:t_evol_ff}
We first derive an equation of motion for the correlation matrix $\Gamma$ of the Gaussian fermionic quantum simulator model, including the non-Markovian decohering errors, described in section~\ref{sec:stability_gaussian_fermion}. It is convenient to consider a system-environment Hamiltonian $\hat{H}(t)$
\[
\hat{H}(t) = H + \sum_{j, x} \bigg(B_{j, x}^\dagger(t) L_{j, x} + L_{j, x}^\dagger B_{j, x}(t)\bigg) + \sum_{j, x} \bigg(A_{j, x}^\dagger(t) Q_{j, x} + Q_{j, x}^\dagger A_{j, x}(t)\bigg),
\]
where $\{B_{j, x}(t), B_{j', x'}^\dagger(t')\} = \delta_{x, x'} \delta_{j, j'} \delta(t - t')$ and, as described in section~\ref{sec:stability_gaussian_fermion}, $\{A_{j, x}(t), A_{j', x'}^\dagger(t')\} = \delta_{j, j'}\delta_{x, x'}K_{j, x}(t - t')$. Furthermore, we also assume that the environments corresponding to the operators $B_{j, x}(t)$ and $A_{j, x}(t)$ are independent by enforcing $\{A_{j, x}(t), B_{j', x'}^\dagger (t')\} = 0$. Since the operators $B_{j, x}(t)$ are delta function correlated, it can easily be verified that tracing out the environment correspnding to these operators effectively yields a master equation with jump operators $L_{j, x}$. Now, we derive a set of dynamical equations for the correlation matrix
\[
(\Gamma(t))_{x, y}^{\alpha, \beta} =\frac{1}{2} \text{Tr}\big([c_x^\alpha, c_y^\beta] \rho(t)\big) =\frac{1}{2}\text{Tr}\big([c_x^\alpha(t), c_y^\beta(t)] \rho_0\big) = \text{Tr}\big(c_x^\alpha(t) c_y^\beta(t) \rho_0\big) - \frac{1}{2}\delta_{\alpha, \beta}\delta_{x, y}.
\]
where $c_x^\alpha(t) = \hat{U}(0, t) c_x^\alpha \hat{U}(t, 0)$, with $\hat{U}(t, s)$ being the propagator corresponding to $\hat{H}(t)$ and $\rho_0$ is the initial system-environment state, which we assume to be vacuum for the environment. It is convenient to introduce the two-point correlation matrix
\[
\big(\Lambda(t, s)\big)_{x, y}^{\alpha, \beta} = \text{Tr}\big(c_x^\alpha(t)c_y^\beta(s)\rho_0)
\]
It can be noted that $\Gamma(t) = \Lambda(t, t) - I_{2n}/2$ and $\Lambda^{T}(t, t) = {I}_{2n} - \Lambda(t, t)$. 
\begin{lemma}\label{lemma:gaussian_open_quantum_dynamics}
Assuming that the environment is initially in the vacuum state, the correlation matrix $\Gamma(t)$ satisfies,
\[
\frac{d}{dt}\Gamma(t) = X\Gamma(t) + \Gamma(t) X^\text{T} + Y + Z(t),
\]
where
\begin{align*}
&X = -i\tilde{H} - \frac{1}{2}(\tilde{Q} + \tilde{Q}^*), \\
&Y = \frac{1}{2}\big(\tilde{Q} - \tilde{Q}^*\big), \\
&Z(t) = \frac{1}{2}\int_0^t \bigg(\tilde{L}^*(s - t)\Lambda^\text{T}(t, s) + \Lambda^{\text{T}}(t, s) \tilde{L}^\dagger(s - t) - \tilde{L}(t -s) \Lambda(s, t) - \Lambda(s, t) \tilde{L}^\text{T}(t- s)\bigg) ds,
\end{align*}
with $\tilde{Q}, \tilde{L}(\tau) \in \mathbb{C}^{2n \times 2n}$ being given by,
\begin{align*}
\big(\tilde{Q}\big)^{\alpha, \beta}_{x, y} = \sum_{j, z}q_{j, z; x}^{\alpha} q^{\beta^*}_{j, z; y} \text{ and }\big(\tilde{L}(\tau)\big)^{\alpha, \beta}_{x, y} = \sum_{j, z}K_{j, z}(\tau) l_{j, z; x}^{\alpha} l^{\beta^*}_{j, z; y}
\end{align*}\end{lemma}
\noindent\emph{Proof}: Our starting point is the Heisenberg equations of motion for $c_x^\alpha(t)$,
\begin{align}\label{eq:cx_derivative}
i\frac{d}{dt}c_x^\alpha(t) = 2\sum_{y, \beta} h_{x, y}^{\alpha, \beta} c_y^\beta(t) + \sum_{j, z} \bigg( q^{\alpha*}_{j, z; x} A_{j, z}(t; t) - q_{j, z; x}^{\alpha} A_{j, z}^\dagger(t; t)\bigg) +  \sum_{j, z} \bigg( l^{\alpha*}_{j, z; x} B_{j, z}(t; t) - l_{j, z; x}^{\alpha} B_{j, z}^\dagger(t; t)\bigg),
\end{align}
where $A_{j, z}(\tau; t) = U(0, t) A_{j, z}(\tau) U(t, 0)$ and $B_{j, z}(\tau; t) = U(0, t) B_{j, z}(\tau) U(t, 0)$. Furthermore, we can also obtain and integrate the Heisenberg equations of motion for $A_{j, z}(\tau; t)$ and $B_{j, z}(\tau; t)$ to obtain
\begin{align*}
   &\frac{d}{dt} A_{j, z}(\tau; t) = -i\sum_{\beta, y}q^{\beta}_{j, z; y} c_y^\beta(t) \delta(t - \tau) \implies A_{j, z}(t; t) = A_{j, z}(t) - \frac{i}{2}\sum_{\beta, y}q^{\beta}_{j, z; y}c^\beta_y(t) \text{ and }\\
   &\frac{d}{dt}B_{j, z}(\tau; t) = -i\sum_{\beta, y} l^{\beta}_{j, z; y}c_y^\beta(t) K_{j, z}(\tau - t) \implies B_{j, z}(t; t) = B_{j, z}(t) - i \sum_{\beta, y} q^{\beta}_{j, z; y}\int_0^t K_{j, z}(t - s) c_y^\beta(s) ds.
\end{align*}
Since $\rho_0$ is assumed to be in the vacuum state in the environments corresponding to annihilation operators $A_{j, x}(t)$ and $B_{j, x}(t)$, it then follows that
\begin{align}\label{eq:input_out_eq_vacuum}
A_{j, z}(t; t)\rho_0 =  - \frac{i}{2}\sum_{\beta, y}q^{\beta}_{j, z; y}c^\beta_y(t)\rho_0 \text{ and } B_{j, z}(t; t) \rho_0 = - i \sum_{\beta, y} q^{\beta}_{j, x; y}\int_0^t K_{j, z}(t - s) c_y^\beta(s) \rho_0 ds.
\end{align}
Now, we can obtain a differential equation for the correlation matrix element $(\Gamma(t))^{\alpha, \beta}_{x, y}$ --- from its definition, it follows that
\begin{align*}
    \frac{d}{dt}\big(\Gamma(t)\big)^{\alpha, \beta}_{x, y} &=\frac{1}{2} \text{Tr}\bigg(\frac{d}{dt}c_x^\alpha(t) c_y^\beta(t) \rho_0\bigg) +\frac{1}{2}\text{Tr}\bigg(c_x^\alpha(t)\frac{d}{dt} c_y^\beta(t) \rho_0\bigg),\\
    &=-i\big([\tilde{H}, \Gamma(t)]\big)_{x, y}^{\alpha, \beta} + \frac{i}{2} \sum_{j , z} \bigg( q^{\alpha}_{j, z; x}\text{Tr}\big( c_y^\beta(t)\rho_0 A_{j, z}^\dagger(t; t)\big) + q^{\alpha^*}_{j, z; x} \text{Tr}\big( c_y^\beta(t)A_{j, z}(t; t) \rho_0\big)\bigg).
\end{align*}
Furthermore, using Eqs.~\ref{eq:cx_derivative} and \ref{eq:input_out_eq_vacuum}, we obtain that
\begin{align}\label{eq:product_rule_term_1}
&\text{Tr}\bigg(\frac{d}{dt}c_x^\alpha(t) c_y^\beta(t) \rho_0\bigg)  \nonumber\\
&\qquad  =-i\big(\tilde{H}\Lambda(t, t)\big)_{x, y}^{\alpha, \beta} + i \sum_{j , z} \bigg( q^{\alpha}_{j, z; x}\text{Tr}\big( c_y^\beta(t)\rho_0 A_{j, z}^\dagger(t; t)\big) + q^{\alpha^*}_{j, z; x} \text{Tr}\big( c_y^\beta(t)A_{j, z}(t; t) \rho_0\big)\bigg) \nonumber\\
&\qquad \qquad \quad + i\sum_{j, z} \bigg(l^{\alpha}_{j, z; x}\text{Tr}\big(c_y^\beta(t) \rho_0 B^{\dagger}_{j, z}(t; t)\big) + l^{\alpha^*}_{j, z; x}\text{Tr}\big(c_y^\beta(t)B_{j, z}(t; t) \rho_0\big)\bigg), \nonumber\\
&\qquad  =-i\big(\tilde{H}\Lambda(t, t)\big)_{x, y}^{\alpha, \beta} - \frac{1}{2}\sum_{{j, z, \alpha',x'}}\bigg(q^{\alpha}_{j,z ; x}q^{\alpha'^*}_{j, z; x'} \text{Tr}\big(c_{x'}^{\alpha'}(t) c_y^\beta(t)\rho_0\big) - q^{\alpha^*}_{j, z; x}q^{\alpha'}_{j, z; x'} \text{Tr}\big(c_y^\beta(t) c^{\alpha'}_{x'}(t)\rho_0\big)\bigg) \nonumber \\
&\qquad \qquad \qquad  -\frac{1}{2} \sum_{j, z, \alpha', x'} \int_0^t K_{j, z}(t - s) \bigg(l^{\alpha}_{j, z; x}l^{\alpha'^*}_{j, z; x'} \text{Tr}\big(c_{x'}^{\alpha'}(s) c_y^\beta(t)\rho_0\big) - l^{\alpha^*}
_{j, z; x}l^{\alpha'}_{j, z; x'} \text{Tr}\big(c_y^\beta(t) c_{x'}^{\alpha'}(s) \big)\bigg)ds, \nonumber \\
&\qquad = \bigg(-i\tilde{H}\Lambda(t, t) - \frac{1}{2}\big( \tilde{Q} \Lambda(t, t) - \tilde{Q}^*\Lambda^\text{T}(t, t)\big) - \frac{1}{2} \int_0^t \big(\tilde{L}(t - s)\Lambda(s, t)- \tilde{L}^*(s - t)\Lambda^{\text{T}}(t, s)\big)ds\bigg)_{x, y}^{\alpha, \beta},
\end{align}
A similar manipulation yields that
\begin{align}\label{eq:product_rule_term_2}
    &\text{Tr}\bigg(c_x^\alpha(t) \frac{d}{dt}c_y^\beta(t) \rho_0\bigg) \nonumber \\
    &\qquad=\bigg(-i\Lambda(t, t) \tilde{H}^\text{T} - \frac{1}{2}\big(\Lambda(t, t) \tilde{Q}^\text{T} - \Lambda^\text{T}(t, t) \tilde{Q}^\dagger\big) - \frac{1}{2}\int_0^t \big(\Lambda(s, t) \tilde{L}^\text{T}(t - s) - \Lambda^\text{T}(t, s) \tilde{L}^\dagger(s - t)\big)ds\bigg)^{\alpha, \beta}_{x, y}
\end{align}
From Eqs.~\ref{eq:product_rule_term_1} and \ref{eq:product_rule_term_2}, together with the facts that $\Lambda(t, t) = \Gamma(t) + I_{2n}/2, \Lambda^\text{T}(t, t) = -\Gamma(t) + I_{2n}/2$, we then obtain the dynamical equation for $\Gamma(t)$. $\hfill \square$

Suppose that in the absence of any errors and noise, our target is to implement a spatially local Hamiltonian $H$ and jump operators $Q_{j, x}$. In the presence of errors and noise, we instead implement a perturbed Hamiltonian $H'$ described by coefficients $h_{x, y}^{\alpha, \beta'}$, perturbed jump operators $Q_{j, x}'$ described by coefficients ${q'}_{j, x; y}^{\alpha}$, as well as interaction with a decohering environment captured by the operators $L_{j,x}$ described by coefficients $l_{j, x; y}^{\alpha}$ which satisfy
\[
\abs{h_{x, y}^{\alpha, \beta} - {h'}_{x, y}^{\alpha, \beta}} \leq \delta, \abs{q_{j, x; y}^{\alpha} - {q'}_{j, x; y}^{\alpha}} \leq \delta \text{ and }\abs{l_{j, x; y}^{\alpha}}\leq \sqrt{\delta}.
\]
\begin{lemma}\label{lemma:bounds_geom_mat}
Let $a_{x, \alpha}^{y, \beta}, {a'}_{x, \alpha}^{y, \beta} \in \mathbb{C}$ for $x, y \in \mathbb{Z}_L^d$, $\alpha, \beta \in \{1, 2 \dots 2D\}$ satisfy $a_{x, \alpha}^{y, \beta}, {a'}_{x, \alpha}^{y, \beta} = 0$ if $ d(x, y) > R$ and $b_{j, x; y}^{\alpha}, {b'}_{j, x; y}^{\alpha}  \in \mathbb{C}$ for $x, y \in \mathbb{Z}_L^d$, $j \in \{1,2 \dots n_L\}, \alpha \in \{1, 2 \dots 2D\}$ be such that $b_{j, x; y}^\alpha, {b'}_{j, x; y}^\alpha = 0$ if $d(x, y) > R$. Furthermore, $\exists a_0 > 0: \abs{a_{x, \alpha}^{y, \beta}} \leq a_0 \ \forall x, y, \alpha, \beta$, $\exists b_0 > 0: \abs{b_{j, x; y}^\alpha} \leq b_0 \ \forall x, y, j, \alpha$ and $\exists \delta > 0 : \abs{a_{x, \alpha}^{y, \beta} - {a'}_{x, \alpha}^{y, \beta}}, \abs{b_{j, x; y}^{\alpha} - {b'}_{j, x; y}^{ \beta}} \leq \delta \ \forall x, y, j, \alpha, \beta$. Denote by $A, A' \in \mathbb{C}^{2n \times 2n}$ the matrices formed by $a_{x,\alpha}^{y, \beta}, {a'}_{x, \alpha}^{y, \beta}$ with $(x, \alpha)$ corresponding to the rows and $(y, \beta)$ corresponding to the columns, and by $B, B' \in \mathbb{C}^{n_L N \times 2n}$ the matrices formed by $b_{j, x; y}^{\alpha}, {b'}_{j, x; y}^{\alpha}$ with $(j, x)$ corresponding to the rows and $(y, \alpha)$ corresponding to the columns. Then,
\begin{align*}
&\norm{A} \leq 2D(2R + 1)^d a_0,   \norm{B} \leq \sqrt{2Dn_L} (2R + 1)^d b_0 , \\
&\norm{A - A'} \leq 2D(2R + 1)^d \delta, \norm{B^\dagger B - B'^\dagger B'} \leq 4Dn_L (2R + 1)^{2d}(2b_0 \delta + \delta^2).
\end{align*}
\end{lemma}
\begin{proof}
The proof of this lemma is a repeated application of lemma \ref{lemma:bound_op_norm}. We note that the matrices $A, A'$ have at most $2D(2R + 1)^d$ non-zero element in any row or column. The matrices $B, B'$ have at most $n_L (2R + 1)^d$ non-zero elements in their columns and at most $2D (2R + 1)^d$ non-zero elements in their rows. Thus, from lemma \ref{lemma:bound_op_norm}, we obtain the bounds
\[
\norm{A} \leq 2D(2R + 1)^d a_0, \norm{A'} \leq 2D(2R + 1)^d a_0', \norm{B} \leq \sqrt{2Dn_L}(2R + 1)^d b_0, \norm{B'} \leq \sqrt{2Dn_L} (2R + 1)^d b_0'.
\]
where $a_0' = a_0 + \delta$ and $b_0' = b_0 + \delta$ are upper bounds on the coefficients $\abs{{a'}_{x, \alpha}^{y, \beta}}$ and $\abs{{b'}_{j, x; y}^{\alpha}}$ respectively. Since $A - A'$ also has at most $2D(2R + 1)^d$ non-zero element in any row or column, it immediately follows that
\[
\norm{A - A'} \leq 2D(2R + 1)^d\delta.
\]
Furthermore, since $B - B'$ has at most $n_L (2R + 1)^d$ non-zero elements in its columns and at most $2D (2R + 1)^d$ non-zero elements in its rowsm
\[
\norm{B - B'} \leq \sqrt{2Dn_L}(2R + 1)^d \delta.
\]
We can now estimate $\norm{B^\dagger B - B'^\dagger B'}$:
\begin{align*}
    \norm{B^\dagger B - B'^\dagger B'} &= \norm{B^\dagger (B - B') + (B^\dagger - B'^\dagger)B'},\nonumber\\
    &\leq (\norm{B} + \norm{B'})\norm{B - B'}, \nonumber \\
    &\leq 4Dn_L(2R + 1)^{2d}(2b_0\delta + \delta^2).
\end{align*}
\end{proof}

Finally, we provide an upper bound on $Z(t)$ defined in lemma \ref{lemma:gaussian_open_quantum_dynamics} --- for this, we assume an upper bound on the kernels are $K_{j, z}(\tau) = \{B_{j, z}(\tau), B_{j, z}^\dagger(0)\}$. We consider kernels which can also contain delta functions i.e.~$K_{j, z}(\tau)$ to be of the form
\[
K_{j, z}(\tau) = K_{j, z}^c(\tau) + \sum_{i = 1}^M k_{j, z}^{i} \delta(\tau - \tau_i),
\]
where $K_{j, z}^c(\tau)$ is a continous function of $\tau$. Then, we define a kernel $K(\tau)$ by
\[
K(\tau) = K^c(\tau) + \sum_{i = 1}^M k^{i}\delta(\tau - \tau_i) \text{ where }K^c(\tau) = \sup_{j, z} \abs{k_{j, z}^c(\tau)} \text{ and }k^i = \sup_{j, z}\abs{k^i_{j, z}}.
\]
The kernel $K(\tau)$ can be considered as a distributional upper bound on $K_{j, z}(\tau)$ i.e.~for any continuous and compactly supported function $f$,
\[
\bigabs{\int_{\mathbb{R}} K_{j, z}^c(\tau) f(\tau) d\tau} \leq \int_{\mathbb{R}} K(\tau) 
\abs{f(\tau)}d\tau.
\]
In the following lemma, under the assumption that $K(\tau)$ has a bounded integral, we provide an upper bound on $Z(t)$.

\begin{lemma}\label{lemma:bound_z}
If $\int_{\mathbb{R}}K(\tau) d\tau \leq 1 $ and $\abs{l_{j, z; x}^{\alpha}} \leq \sqrt{\delta}$, then $\norm{Z(t)} \leq 2\sqrt{2Dn_L}(2R + 1)^d \delta$, where $Z(t)$ is defined in lemma \ref{lemma:gaussian_open_quantum_dynamics}.
\end{lemma}
\begin{proof} First, we will establish that $\norm{\Lambda(t, s)} \leq 1$ for any $t, s$. For this, we consider two vectors $v, u \in \mathbb{C}^{2n}$ and upper bound $\abs{v^\dagger \Lambda(t, s) u}$. From the definition of $\Lambda(t, s)$, we have that
\begin{align}\label{eq:variational_sigma_Lambda}
\abs{v^\dagger \Lambda(t, s) u} = \bigabs{\sum_{\substack{x, y}} \sum_{\alpha, \beta = 1}^{2D} v_{x}^{\alpha^*}\text{Tr}[c_x^\alpha(t) c_y^\beta(s) \rho_0] u_y^\beta} \leq \bigabs{\text{Tr}(c_{v}^\dagger(t) c_u(s) \rho_0)} \leq \norm{c_{v}(t)}\norm{c_u(s)} = \norm{c_{v}}\norm{c_u} ,
\end{align}
where we have defined the operator $c_v = \sum_{x}\sum_{\alpha = 1}^{2D} v_{x}^\alpha c_{x}^\alpha$. Now, to bound $\norm{c_v}$, we note that $\{c_v, c_v^\dagger\} = \norm{v}^2 I$, and therefore
\begin{align}\label{eq:bound_norm_cv}
c_v^\dagger c_v = \norm{v}^2 I - c_v c_v^\dagger \prec \norm{v}^2 I \implies \norm{c_v} \leq \norm{v}.
\end{align}
Therefore, from Eqs.~\ref{eq:variational_sigma_Lambda} and \ref{eq:bound_norm_cv}, we obtain that $\norm{\Lambda(t, s)} = \sup_{v, u \in \mathbb{C}^{2n}} \abs{v^\dagger \Lambda(t, s) u} / \norm{v}\norm{u} \leq 1$.

Let us now consider bounding $Z(t)$ defined in lemma \ref{lemma:gaussian_open_quantum_dynamics}. We will provide a detailed analysis for an upper bound on one of the four terms appearing in $Z(t)$ --- the other terms can be bounded similarly. Consider the first term --- we have that
\[
\bignorm{\int_0^t \tilde{L}^*(s - t) \Lambda^\text{T}(t, s) ds} = \sup_{\substack{v, u \in \mathbb{C}^{2n} \\ \norm{v}, \norm{u} = 1}} \bigabs{\int_0^t v^\dagger \tilde{L}^*(s - t) \Lambda^\text{T}(t, s) u ds} = \bigabs{\sup_{\substack{v, u \in \mathbb{C}^{2n} \\ \norm{v}, \norm{u} = 1}} \int_0^t v^\dagger \tilde{L}^*(s - t) w(s)ds},
\]
where $w(s) = \Lambda^\text{T}(t, s) u$.
Now, by the definition of $K(s)$, we obtain
\begin{align*}
\bigabs{\int_0^t v^\dagger \tilde{L}^*(s - t) w(s) ds} &= \bigabs{\int_0^t \sum_{j, z} \sum_{x, \alpha}\sum_{x', \alpha'} K_{j, z}^*(s - t) v_x^{\alpha^*}l_{j,z; x}^{\alpha^*} l_{j, z; x'}^{\alpha'} w_{x'}^{\alpha'}(s) ds},\nonumber\\
&\leq \int_0^t K(s - t)\sum_{j, z}  \bigabs{ \sum_{x, \alpha}\sum_{x', \alpha'}  v_x^{\alpha^*}l_{j,z; x}^{\alpha^*} l_{j, z; x'}^{\alpha'} w_{x'}^{\alpha'}(s) }ds, \nonumber\\
&\leq \int_0^t K(s - t) \abs{v}^\text{T} \tilde{L}_m \abs{w(s)} ds,
\end{align*}
where $\abs{v}$ and $\abs{w(s)}$ are vectors formed by taking the absolute values of $v$ and $w(s)$ respectively, and $\tilde{L}_m \in \mathbb{R}^{2n \times 2n}$ is a matrix given by
\[
\big(\tilde{L}_m\big)_{x, \alpha}^{x', \alpha'} = \sum_{j, z} \abs{l_{j, z; x}^\alpha} \abs{l_{j, z; x'}^{\alpha'}}.
\]
Now, from lemma \ref{lemma:bounds_geom_mat}, it follows that $\norm{\tilde{L}_m} \leq \sqrt{2Dn_L}(2R + 1)^d \delta$, and consequently $\abs{v}^\text{T}\tilde{L}_m \abs{w(s)} \leq \sqrt{2Dn_L}(2R + 1)^d \delta \norm{v} \norm{w(s)} \leq \sqrt{2Dn_L}(2R + 1)^d \delta \norm{v} \norm{u}$, where we have used the previously shown fact of $\norm{\Lambda(t, s)} \leq 1$. Thus, we have that
\[
\bigabs{\int_0^t v^\dagger \tilde{L}^*(t - s) w(s) ds }\leq \sqrt{2Dn_L}(2R + 1)^d \delta \norm{v} \norm{u} \int_0^t K(s - t)ds \leq  \sqrt{2Dn_L}(2R + 1)^d \delta \norm{v} \norm{u},
\]
and therefore
\[
\bignorm{\int_0^t \tilde{L}^*(t - s) \Lambda^\text{T}(t, s) ds } \leq \sqrt{2Dn_L}(2R + 1)^d \delta.
\]
Performing a similar analysis to the remaining three terms in $Z(t)$, we obtain that $\norm{Z(t)} \leq 2\sqrt{2Dn_L}(2R + 1)^d \delta$.
\end{proof}

% we first note that $c_v = (a_v + b_v^\dagger)/\sqrt{2}$ where $a_v = \sum_{x} \sum_{n = 1}^{D} \big(v_{x}^{2n - 1} + i v_x^{2n}\big) a_x^n $ and $b_v = \sum_{x} \sum_{n = 1}^{D} \big({v_{x}^{2n - 1}}^* + i {v_x^{2n}}^*\big) {a_x^n}$. Now, utilizing the fact that $a_v$ can be considered to be an unnormalized annihilation operator corresponding to another fermionic mode which has a particle number operator with maximum eigenvalue 1, we have that
% \[
% \norm{a_v} = \sqrt{\norm{a_v^\dagger a_v}} = \bigg( \sum_{x}\sum_{n = 1}^D \abs{v_x^{2n - 1} + v_x^n}^2 \bigg)^{1/2} \leq \sqrt{2} \norm{v}.
% \]
% A similar analysis yields $\norm{b_v} \leq \sqrt{2}\norm{v}$ --- we thus obtain $\norm{c_v} \leq (\norm{a_v} + \norm{b_v}) / \sqrt{2} \leq 2 \norm{v}$. We thus have that $\forall v, u \in \mathbb{C}^{2n}$, $\abs{v^\dagger \Lambda(t, s) u} \leq 4\norm{v}\norm{u} \implies \norm{\Lambda(t, s)} \leq 4$.

\begin{proof}[Proof (of proposition \ref{prop:t_evol_ff})]
In this proof, we will follow the notation introduced in lemma \ref{lemma:gaussian_open_quantum_dynamics} and use unprimed matrices for the noiseless (target) problem, and primed matrices for the noisy problem. From lemma \ref{lemma:bounds_geom_mat}, it follows that
\[
\norm{X - X'} \leq \norm{\tilde{H} - \tilde{H}'} + \norm{\tilde{Q} - \tilde{Q}'} \leq \delta_X := 2D (2R + 1)^d \delta + 4D n_L (2R + 1)^{2d}(2\delta + \delta^2),
\]
and
\[
\norm{Y - Y'} \leq \norm{\tilde{Q} - \tilde{Q}'} \leq \delta_Y := 4D n_L (2R + 1)^{2d}(2\delta + \delta^2).
\]
In the absence of noise and errors, the correlation matrix $\Gamma(t)$ is governed by the differential equation
\[
\frac{d}{dt}\Gamma(t) = X\Gamma + \Gamma X^\text{T} + Y \implies \Gamma(t) = e^{Xt}\Gamma(0) e^{X^\text{T}t} + \int_0^t e^{X(t - s)}Y e^{X^\text{T}(t - s)}ds.
\]
Note that we have set $Z(t) = 0$, since in the noiseless problem there is no decohering environment. In the presence of noise and errors, we instead obtain that
\begin{align*}
\frac{d}{dt}\Gamma'(t) &= X'\Gamma'(t) + \Gamma'(t) {X'}^\text{T} + Y' + Z'(t),
\end{align*}
from which it follows that
\begin{align}\label{eq:perturb_gamma}
    \Gamma'(t) - \Gamma(t) = \int_0^t e^{X(t - s)}\bigg((X' - X) \Gamma'(t) + \Gamma'(t)(X' - X)^\text{T}  + (Y' - Y) + Z'(s)\bigg)e^{X^\text{T}(t - s)}ds.
\end{align}
We have already established in lemma \ref{lemma:bound_z} that $\norm{Z'(t)} \leq \delta_Z  =2\sqrt{2Dn_L} (2R + 1)^d \delta$. Furthermore, we note that $\norm{e^{Xt}} = \norm{e^{X^\text{T}t}} \leq \norm{e^{(X + X^\text{T})t/2}} \leq 1$ (theorem IX.3.1  of Ref.~\cite{bhatia2013matrix}), where we have used the fact that $X + X^\text{T} \preceq 0$. Consequently, from Eq.~\ref{eq:perturb_gamma}, we have that
\[
\norm{\Gamma'(t) - \Gamma(t)} \leq t(2\delta_X + \delta_Y + \delta_Z) \leq O(\delta t).
\]
From this bound, we can now conclude a bound on the error in a quadratic observables. Suppose $O$ is a quadratic observable specified by the coefficients $o_{x, y}^{\alpha, \beta}$ (see Eq.~\ref{eq:gaussian_few_sites_obs}), and let $\mathcal{O}$ and $\mathcal{O}'$ be respectively the noisy and noiseless expected values of this observables. Then,
\[
\abs{\mathcal{O} - \mathcal{O}'} = \abs{\text{Tr}\big(\tilde{O}(\Gamma(t) - \Gamma'(t))\big)} \leq \norm{\tilde{O}}_{\op, 1} \norm{\Gamma(t) - \Gamma'(t)},
\]
where $\tilde{O}$ is the matrix of coefficients $o_{x, y}^{\alpha, \beta}$ with $(x, \alpha)$ corresponding to the row index and $(y, \beta)$ corresponding to the column index. Since we have already established $\norm{\Gamma(t) - \Gamma'(t)} \leq O(\delta t)$, it only remains to prove that $\norm{\tilde{O}}_{\op, 1}$ is bounded above by a system-size independent constant. Observe that since $O$ only acts on $k$ sites, $\tilde O$ has at most $2kD$ nonzero eigenvalues, thus $\norm{\tilde{O}}_{\op, 1} \leq {2kD} \norm{\tilde{O}}$. Assuming the observable to be normalized such that $\norm{O} \leq 1 \implies \norm{\tilde{O}}\leq 1$, we thus obtain that $\abs{\mathcal{O} - \mathcal{O}'} \leq O(\delta t)$.
\end{proof}}

\subsection{Proof of proposition \ref{prop:gs_ff} (Ground states of local Gaussian fermionic models)}
\label{app:gs_ff}

In this appendix we will prove the stability of the expectation value of a translationally invariant, k-locally generated Gaussian observable on the ground state of a quadratic Hamiltonian. We first provide a lemma that uses the translation invariance of a local observable to provide an error bound.
\begin{lemma}\label{lemma:translation_operator}Consider a quadratic operator $O$ which is translationally invariant and expressible as
\[
O = \frac{1}{n}\sum_{x \in \mathbb{Z}_L^d} \tau_x (O_0),
\]
where $n = L^d$ is the number of sites in $\mathbb{Z}_L^d$, $O_0$ is a quadratic operator with a support on at most $k$ sites and $\tau_x$ is a super-operator that translates an operator by $x$, then for any quadratic operator $A_0$,
\[
\bigabs{\textnormal{Tr}(O^\dagger A_0)} \leq \frac{4D^2 k}{n} \norm{\tilde{O}_0}\norm{\tilde{A}_0}_{\textnormal{op}, 1}.
\]
\end{lemma}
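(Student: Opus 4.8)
The plan is to pass from the Fock-space quantity $\tr(O^\dagger A_0)$ to a trace over the $2DL^d \times 2DL^d$ single-particle coefficient matrices, and then to exploit the translational structure of $O$ to convert the naive worst-case factor of $n$ into the favourable $k/n$ scaling. First I would interpret $\tr(O^\dagger A_0)$ as a Hilbert--Schmidt inner product of quadratic operators; by Wick's theorem applied to the infinite-temperature Fock trace of a product of Majorana bilinears, and using that $\tilde O$ and $\tilde A_0$ are imaginary antisymmetric (so their diagonals vanish), this reduces to a fixed multiple of the matrix trace $\text{tr}(\tilde O^\dagger \tilde A_0)$ of the coefficient matrices. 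This step is routine and merely fixes an overall $D$-dependent constant.

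Next I would substitute the translational decomposition $\tilde O = \frac{1}{n}\sum_{x \in \mathbb{Z}_L^d} T_x \tilde O_0 T_x^\dagger$, where $T_x$ is the unitary permutation of the single-particle index implementing $\tau_x$, so that $\text{tr}(\tilde O^\dagger \tilde A_0) = \frac{1}{n}\sum_x \text{tr}(T_x \tilde O_0^\dagger T_x^\dagger \tilde A_0)$. Because $O_0$ is supported on $k$ sites, each translated matrix $T_x \tilde O_0^\dagger T_x^\dagger$ is supported on the $2Dk$ modes $S_x$ of the shifted window; writing $\Pi_x$ for the orthogonal projector onto $S_x$ and using cyclicity of the trace with $\Pi_x^2 = \Pi_x$, I would apply Hölder's inequality $\abs{\text{tr}(MN)} \le \norm{M}_\op \norm{N}_{\op,1}$ to obtain, for each $x$,
\[
\bigabs{\text{tr}(T_x \tilde O_0^\dagger T_x^\dagger \tilde A_0)} \le \norm{\tilde O_0}\,\norm{\Pi_x \tilde A_0 \Pi_x}_{\op,1},
\]
where I have used that the permutation $T_x$ leaves the operator norm invariant, $\norm{T_x \tilde O_0 T_x^\dagger}_\op = \norm{\tilde O_0}$.

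The crux of the argument --- and the step I expect to be the main obstacle --- is to show that the sum over the $n$ windows of these local trace norms is still controlled by the single trace norm of $\tilde A_0$, namely $\sum_x \norm{\Pi_x \tilde A_0 \Pi_x}_{\op,1} \le O(k)\,\norm{\tilde A_0}_{\op,1}$, with the bound independent of $n$. The naive triangle inequality fails badly here: bounding each term by $\norm{\tilde A_0}_{\op}$ gives a useless factor of $n$ that cancels the $1/n$ prefactor. Instead I would expand $\tilde A_0 = \sum_m \sigma_m \ket{u_m}\bra{w_m}$ in its singular value decomposition, use the trace-norm triangle inequality together with $\norm{\Pi_x \ket{u_m}\bra{w_m} \Pi_x}_{\op,1} = \norm{\Pi_x u_m}\,\norm{\Pi_x w_m}$, and then apply Cauchy--Schwarz over $x$. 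The geometric input that makes this work is that each single-particle mode lies in exactly $k$ of the translated windows $\{S_x\}$ (the number of lattice positions whose $k$-site support covers a fixed site), so that $\sum_x \norm{\Pi_x u_m}^2 = k$ and likewise for $w_m$; this yields $\sum_x \norm{\Pi_x \tilde A_0 \Pi_x}_{\op,1} \le k \sum_m \sigma_m = k \norm{\tilde A_0}_{\op,1}$. Substituting back, the $n$-fold sum is absorbed into a constant multiple of $\norm{\tilde A_0}_{\op,1}$, the surviving $1/n$ produces the advertised $k/n$ scaling, and collecting the constants from the Wick reduction together with the $2D$ modes per site gives the stated prefactor $4D^2 k / n$.
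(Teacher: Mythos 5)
Your proof is correct, but it takes a genuinely different route from the paper's. The paper moves the translation average onto $A_0$: it defines $A = n^{-1}\sum_x \tau_{-x}(A_0)$, which is translationally invariant, block-diagonalizes $\tilde A$ with the Fourier matrix $F_D = F\otimes I_{2D}$, and applies H\"older to the vectorized matrices, $\abs{\tr(\tilde O_0^\dagger \tilde A)} \leq \norm{\text{vec}(F_D^\dagger \tilde O_0 F_D)}_\infty \norm{\text{vec}(F_D^\dagger \tilde A F_D)}_1$; the $\ell_\infty$ factor is $\frac{2Dk}{n}\norm{\tilde O_0}$ because $\tilde O_0$ has a $2Dk\times 2Dk$ support and Fourier columns have entries of magnitude $1/\sqrt n$, while the $\ell_1$ factor is bounded by $2D\norm{\tilde A}_{\op,1}\leq 2D\norm{\tilde A_0}_{\op,1}$ using the $2D\times 2D$ block structure --- this is the origin of the $4D^2$ in the stated constant. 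You instead keep the average on $O_0$ and control $\sum_x \norm{\Pi_x \tilde A_0 \Pi_x}_{\op,1}$ directly via the singular value decomposition, the rank-one identity $\norm{\Pi_x \ket{u_m}\bra{w_m}\Pi_x}_{\op,1} = \norm{\Pi_x u_m}\norm{\Pi_x w_m}$, and Cauchy--Schwarz over $x$, with the geometric fact $\sum_x \Pi_x = k\,I$ (each site is covered by exactly $k$ of the $n$ translated windows, independently of $D$). This is sound and in fact buys two things: a sharper constant ($k/n$ in place of $4D^2k/n$, since no Fourier or block-counting losses enter), and greater generality --- your covering argument only needs bounded multiplicity of the windows, so it survives averaging over a subset of translates, whereas the paper's argument needs the full group average to make $\tilde A$ block-circulant. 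One small remark: your opening Wick-reduction step is unnecessary, since the paper uses (and implicitly states) the lemma at the level of coefficient matrices --- in its application $\tilde A_0 = \sign(\tilde H) - \sign(\tilde H')$ is an arbitrary Hermitian matrix, not the coefficient matrix of a local operator --- so the cleanest reading is to prove the bound for $\tr(\tilde O^\dagger \tilde A_0)$ directly, which is exactly what the remainder of your argument does.
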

\begin{proof}
We note that
\[
\textnormal{Tr}(\tilde{O}^\dagger \tilde{A}_0) = \frac{1}{n}\sum_{x\in \mathbb{Z}_L^d} \text{Tr}\big(\tilde{O}_0 \tau_x^\dagger(\tilde{A}_0)) = \frac{1}{n}\sum_{x\in \mathbb{Z}^d} \text{Tr}\big(\tilde{O}_0 \tau_{-x} (\tilde{A}_0)\big).
\]
Define $A \equiv n^{-1}\sum_{x \in \mathbb{Z}_L^d} \tau_{-x} (A_0)$. We note that $A$ is translationally invariant on the underlying lattice --- consequently, if $F$ is the $n \times n$ Fourier transform matrix, then $F_D = F \otimes I_{2D}$ block diagonalizes $\tilde{A}$ i.e. $ F_D \tilde{A} F_D^\dagger$ will be a block diagonal matrix with $n$ blocks of size $2D \times 2D$. Then we use
\[
\bigabs{\tr(\tilde{O}^\dagger \tilde{A}_0)} = \bigabs{\tr(F_D^\dagger \tilde{O}^\dagger F^{\phdg}_{D} F_D^\dagger \tilde{A} F^{\phdg}_D)}\leq \bignorm{\text{vec}\big(F_D^\dagger \tilde{O}_0 F^{\phdg}_{D} \big)}_{ \infty} \bignorm{\text{vec}(F_D^\dagger \tilde{A} F^{\phdg}_D)}_{1}
\]
where we have applied Hölder's inequality to the norms of the vectorized matrices. Now we bound each of the factors in the right hand side. Since the operator $O_0$ has support only $k$ sites, $\tilde{O}_0$ only has $2D k \times 2D k$ non-zero elements. Suppose that $\Pi_{O_0}$ is a diagonal matrix with $1$s on the entries that correspond to non-zero elements of $\tilde{O}_0$ --- it then follows that $\tilde{O}_0 = \Pi_{O_0} \tilde{O}_0 \Pi_{O_0}$. We further note that if $f_i$ is the $i^\text{th}$ column of $F_D$ then
\[
\bignorm{\text{vec}\big(F_D^\dagger \tilde{O}_0 F_{D} \big)}_{ \infty} = \sup_{i, j} {\bigabs{f_i^\dagger \Pi_{O_0} \tilde{O}_0 \Pi_{O_0} f_j  }} \leq \norm{\tilde{O}_0} \sup_{i, j} \norm{\Pi_{O_0} f_i} \norm{\Pi_{O_0} f_j} = \frac{2D k}{n}\norm{\tilde{O}_0},
\]
where we have used that each entry of $F_D$ has magnitude $1 / \sqrt{n}$ since it is the Fourier transform matrix. Next, since $F_D^\dagger \tilde{A} F_D$ is block diagonal with $N$ $2D\times 2D$ blocks, and labelling by $A_1, A_2 \dots A_N$ these blocks, we obtain that
\[
\bignorm{\text{vec}(F_D^\dagger \tilde{A} F_D)}_{1} = \sum_{i = 1}^N \bignorm{\text{vec}(A_i)}_1 \leq 2D \sum_{i = 1}^N \bignorm{A_i}_{\text{op}, 1} = 2D \bignorm{F_D^\dagger \tilde{A} F_D}_{\text{op}, 1} = 2D \bignorm{ \tilde{A}}_{\text{op}, 1}.
\]
where we have used $\norm{\text{vec}(M)}_1\leq n\norm{M}_{\op,1}$ for an $n\times n$ matrix\footnote{To see this, let $\sigma_{ij}\equiv\text{sign}(M_{ji})$. Then ${\norm{\sigma}_\op\leq n\norm{\text{vec}(\sigma)}_\infty = n}$, and $\norm{\text{vec}(M)}_1=\tr(\sigma M)\leq\norm{\sigma}_\op\norm{M}_{\op,1}\leq n\norm{M}_{\op,1}$.}. Finally, since $A = \sum_{x \in \mathbb{Z}_L^d} \tau_{-x}(A_0) / N$, it follows that $\norm{\tilde{A}}_{\text{op}, 1 } \leq \norm{\tilde{A}_0}_{\text{op}, 1}$. Combining the above estimates, the lemma statement follows.
\end{proof}

The correlation matrix $\Gamma$ of the ground state of a quadratic Hamiltonian $H$ with matrix of coefficients $\tilde H$ (see Eq.~\eqref{eq:quadH}) is given by
\[
\Gamma = \text{sign}(\tilde{H}),
\]
where $\text{sign}(x) = x / \abs{x}$ for $x \neq 0$ and $0$ for $x = 0$\footnote{The reader may be familiar with the equivalent formulation in terms of complex fermions, where the function to be applied to the Hamiltonian matrix to obtain the correlation matrix of the ground state is of the Heaviside type, such that it populates negative energy states and depopulates positive energy states. The function of the sign function in the language of Majorana fermions is exactly analogous.}. The sign function applied on a matrix is to be understood as an operator function i.e.~as a function acting on the eigenvalues of the argument while keeping the eigenvectors unchanged. Our proof will rely on a Fourier series approximation to the sign function. Within the interval $(-\pi, \pi)$, we will investigate the approximation of $\text{sign}(x)$ with $\text{sign}_M(x)$, where
\[
\text{sign}_M(x) = \sum_{n = -M}^M c_n e^{inx} \ \text{where} \ c_n = \frac{1}{2\pi}\int_{-\pi}^\pi \text{sign}(x)e^{-inx}dx.
\]
To analyze the error between $\text{sign}_M(x)$ and $\text{sign}(x)$, it is convenient to express $\text{sign}_M(x)$ in terms of the Dirichlet kernel,
\[
\text{sign}_M(x) \equiv \int_{-\pi}^\pi D_M(x - y) \text{sign}(y) dy,
\]
where
\[
D_M(x) \equiv \frac{1}{2\pi} \sum_{n = -M}^M e^{-in x} = \frac{1}{2\pi} \frac{\sin[(M + 1/2) x]}{\sin(x/2)}.
\]
Below, we provide two technical lemmas about the $\text{sign}_M$ function --- one that quantifies the approximation error between it and the exact sign function, and the next that quantifies the maximum value of the $\text{sign}_M$ function. Both of these lemmas will be used for the perturbation theory analysis of the free-fermion ground state problem.
\begin{figure}
    \centering
    \includegraphics[width = .8\linewidth]{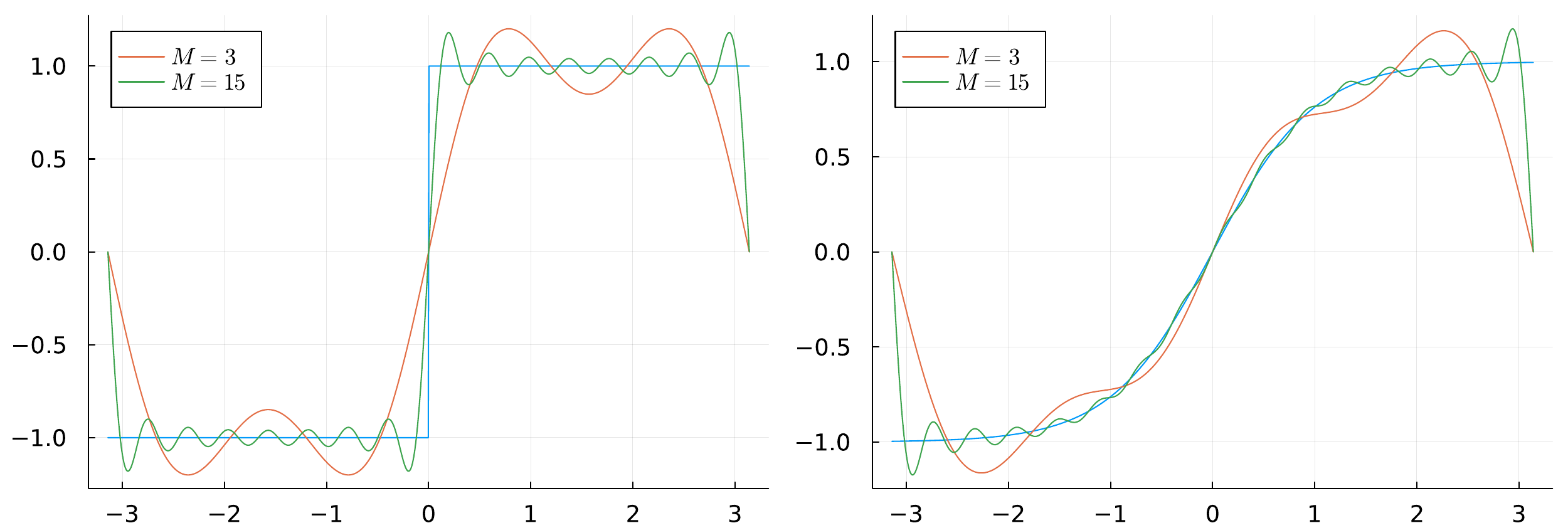}
    \caption{(left) Truncated Fourier series approximation $\sign_M(x)$ to the $\sign(x)$ function, used in the proof of proposition \ref{prop:gs_ff}. (right) Truncated Fourier series approximation $t_M(x)$ to the $\tanh(\beta x)$ function (for $\beta =1$), used in the proof of proposition \ref{prop:Gibbs_ff}.}
    \label{fig:Fourier}
\end{figure}
\begin{lemma}\label{lemma:sign_func_appx}
For all $\eta \leq \abs{x} \leq \pi - \eta$ and $M > 0$,
\[
 \big| \textnormal{sign}(x) - \textnormal{sign}_M(x)  \big| \leq \frac{1}{M} + \frac{1}{M\eta}.
\]
\end{lemma}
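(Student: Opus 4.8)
The plan is to exploit the Dirichlet-kernel representation of $\sign_M$ given just above the statement, together with the fact that $\sign$ is piecewise constant, so that the approximation error localizes near the jump discontinuities of $\sign$. The separation hypothesis $\eta\le|x|\le\pi-\eta$ is exactly what keeps $x$ a distance $\ge\eta$ away from those jumps.

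First I would record the normalization $\int_{-\pi}^\pi D_M(u)\,du=1$ (only the $n=0$ term of $D_M$ survives the integration), so that $\sign(x)=\int_{-\pi}^\pi D_M(x-y)\,\sign(x)\,dy$ and hence
\[
\sign(x)-\sign_M(x)=\int_{-\pi}^\pi D_M(x-y)\big(\sign(x)-\sign(y)\big)\,dy.
\]
Because both $\sign$ and $\sign_M$ are odd, it suffices to treat $x>0$. Then $\sign(x)-\sign(y)$ vanishes for $y>0$ and equals $2$ for $y<0$, so after the substitution $u=x-y$ the error collapses to $2\int_x^{x+\pi}D_M(u)\,du$. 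On this interval $u$ ranges over $[x,x+\pi]\subseteq[\eta,\,2\pi-\eta]$, hence stays a distance $\ge\eta$ from the singularities of $D_M$ at $u\in\{0,2\pi\}$; concretely $\sin(u/2)$ is bounded below by $\min\{\sin(x/2),\cos(x/2)\}\ge\sin(\eta/2)\ge\eta/\pi$.

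The decay in $M$ cannot come from bounding $|D_M|$ pointwise, since that integral does not shrink; it must come from the oscillation of the factor $\sin\!\big((M+\tfrac12)u\big)$ in $D_M(u)=\tfrac{1}{2\pi}\sin((M+1/2)u)/\sin(u/2)$. I would extract it by one integration by parts, differentiating the smooth factor $1/\sin(u/2)$ and integrating the oscillatory factor; equivalently, one may write the error as the Fourier tail $\tfrac{4}{\pi}\sum_{n>M,\ n\text{ odd}}\sin(nx)/n$ and apply Abel summation using the bounded partial sums $\big|\sum_{\text{odd }n}\sin(nx)\big|\le 1/|\sin x|$, which follow from the identity $\sum_{k=0}^{K-1}\sin((2k+1)x)=\sin^2(Kx)/\sin x$. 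Either route produces a factor $1/(M+\tfrac12)$ from the oscillation and a factor $O(1/\sin(\eta/2))=O(1/\eta)$ from the proximity to the jump, i.e.\ a bound of the form $C/(M|\sin x|)=O(1/(M\eta))$; splitting off the generic bulk behaviour (where $\sin(x/2),\cos(x/2)=\Theta(1)$ and the rate is the $n$-independent $O(1/M)$) from the enhanced near-jump rate $O(1/(M\eta))$ reproduces the two-term form $\tfrac1M+\tfrac1{M\eta}$ after tracking constants.

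The main obstacle is precisely this last step: obtaining a bound that still decays in $M$ despite the $1/\sin(u/2)$ prefactor, which blows up as $u\to 0$. This is where the hypothesis $\eta\le|x|\le\pi-\eta$ is essential — it guarantees that both $1/\sin(u/2)$ and its derivative are controlled by $O(1/\eta)$ uniformly on the interval of integration, so that the oscillatory cancellation is not overwhelmed by the singularity. The remaining ingredients (the normalization of $D_M$, the localization to $y<0$, and the odd-symmetry reduction to $x>0$) are routine, and the upper bound $\tfrac{1}{M}+\tfrac{1}{M\eta}$ is stated only up to the bookkeeping of these constants.
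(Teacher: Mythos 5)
Your primary route coincides with the paper's proof. The paper likewise reduces the error to $2\bigl|\int_x^{x+\pi}D_M(u)\,du\bigr|$ --- it phrases the reduction via the normalization $\int_{-\pi}^{\pi}D_M(u)\,du=1$ and the splitting $\sign_M(x)=\int_0^\pi D_M(x-y)\,dy-\int_0^\pi D_M(x+y)\,dy$ rather than your convolution against $\sign(x)-\sign(y)$, but the resulting quantity is identical --- and then integrates by parts once, differentiating $1/\sin(u/2)$ against the oscillatory factor, exactly as you propose. The only bookkeeping difference is that the paper restricts to $x\in[\eta,\pi/2]$, using $\cos(x/2)\ge 1/\sqrt2$ and $\sin(x/2)\ge \eta/\pi$, and extends by the symmetries $f(\pi-x)=f(x)$ and oddness, whereas you bound $\sin(u/2)\ge\sin(\eta/2)\ge\eta/\pi$ uniformly on $[\eta,2\pi-\eta]$; both work.

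One intermediate claim of yours is false as stated, though your conclusion survives: the derivative of $1/\sin(u/2)$ is $-\cos(u/2)/(2\sin^2(u/2))$, which near the endpoints is only $O(1/\eta^2)$ pointwise, not $O(1/\eta)$. Had you bounded the integration-by-parts remainder by a sup-times-length estimate, you would get $O(1/(M\eta^2))$ --- too weak for the downstream use in proposition \ref{prop:gs_ff}, where the choice $M=\delta^{-1/2}$, $\eta=\delta^{1/4}$ makes $1/(M\eta^2)=\Theta(1)$. What is genuinely $O(1/\eta)$ is the \emph{total variation} of $1/\sin(u/2)$ over $[x,x+\pi]$ (it decreases to a minimum at $u=\pi$, then increases), i.e.\ the remainder integral $\tfrac12\int_x^{x+\pi}|\cos(u/2)|/\sin^2(u/2)\,du=1/\sin(x/2)+1/\cos(x/2)-2$, which the paper evaluates exactly; you should make that accounting explicit. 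Your alternative Abel-summation route is cleaner on precisely this point and is genuinely different from the paper: writing the error as the tail $\tfrac{4}{\pi}\sum_{n>M,\ n\ \text{odd}}\sin(nx)/n$ and using the bounded partial sums $\sin^2(Kx)/\sin x$ gives $O(1/(M|\sin x|))=O(1/(M\eta))$ with no singular-integral estimate, needing only pointwise convergence of the Fourier series at points of continuity. On constants: strictly speaking neither your routes nor the paper's own computation attains the literal $\tfrac1M+\tfrac1{M\eta}$ (the paper derives $\tfrac{4}{\pi(2M+1)}(\sqrt2-1)+\tfrac{4}{(2M+1)\eta}$, whose second term exceeds $\tfrac{1}{M\eta}$ for $M\ge1$), but only the form $O((1+\eta^{-1})/M)$ is used in the application, so your ``up to bookkeeping'' caveat puts you at the same level of rigor as the paper.
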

\begin{proof}
We first consider $x \in [\eta, \pi]$. We note that
\begin{align*}
    \text{sign}_M(x) = \int_{0}^{\pi} D_M(x - y) dy - \int_0^\pi D_M(x + y) dy.
\end{align*}
Now, since $\int_{-\pi}^\pi D_M(y) dy = 1$, we obtain that
\[
\int_{0}^{\pi} D_M(x - y) dy = 1 - \int_{0}^\pi D_M(x + y) dy,
\]
and thus
\[
\abs{\text{sign}_M(x) - \text{sign}(x)} = 2 \bigabs{ \int_0^\pi D_M(x+ y) dy}.
\]
Next, we apply integration by parts to obtain
\begin{align*}
\int_0^\pi D_M(x + y) dy = \frac{1}{\pi(2M+1)}\bigg(\frac{\cos((M + 1/2)(\pi + x))}{\cos(x/2)} + \frac{\cos((M + 1/2)(\pi + x))}{\sin(x/2)} -\nonumber \\ \frac{1}{2}\int_0^\pi \frac{\cos((M + 1/2)(x + y)) \cos((x + y)/2)}{\sin^2((x + y)/2)} dy\bigg),
\end{align*}
and therefore
\begin{align*}
&\bigabs{ \int_0^\pi D_M(x+ y) dy} \leq \frac{1}{\pi(2M + 1)}\bigg(\frac{1}{\abs{\cos(x/2)}} + \frac{1}{\abs{\sin(x/2)}} + \frac{1}{2} \int_0^\pi \frac{\abs{\cos((x + y) / 2)}dy}{\sin^2((x + y)/2)}\bigg) \nonumber \\
&\qquad \qquad \leq \frac{2}{\pi(2M + 1)}\bigg(\frac{1}{\abs{\cos(x/2)}} + \frac{1}{\abs{\sin(x/2)}} - 1 \bigg),
\end{align*}
where in the last step we have used the integral
\[
\frac{1}{2}\int_x^{\pi + x} \frac{\abs{\cos(y / 2)}}{{\sin^2(y / 2)}} dy  = \frac{1}{\sin (x / 2)} + \frac{1}{\cos(x / 2)} - 2.
\]
Now, for $x \in (\eta, \pi/2)$, $\abs{\cos(x / 2)} \geq 1/\sqrt{2} $ and $\abs{\sin(x/2)} \geq x / \pi \geq \eta / \pi $. Therefore, we obtain that
\[
\bigabs{ \int_0^\pi D_M(x+ y) dy} \leq \frac{2}{\pi(2M + 1)}\bigg(\sqrt{2} + \frac{\pi}{\eta} - 1\bigg).
\]
While this bound is true for $x \in [\eta, \pi / 2]$, we note that both $\text{sign}, \text{sign}_M$ satisfy $f(x) = f(\pi - x)$ for $x \in [0, \pi]$ and consequently this bound also holds for $x \in [\pi /2, \pi - \eta]$. Finally, since for both $\text{sign}, \text{sign}_M$, $f(x) = -f(-x)$, it follows that this bound holds for $[-\pi + \eta, -\eta] \cup [\eta, \pi - \eta]$.
\end{proof}

\begin{lemma}\label{lemma:upper_bound_sign}
For all $x \in [-\pi, \pi]$, $\abs{\textnormal{sign}_M(x)} \leq 5$.
\end{lemma}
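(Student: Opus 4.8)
The plan is to make $\sign_M$ fully explicit and then bound it through an oscillatory-integral (Gibbs-phenomenon) argument, since the generous target constant $5$ means only qualitative boundedness is required. First I would compute the Fourier coefficients directly. Because $\sign$ is odd, the cosine part drops and $c_n = -\frac{i}{\pi}\int_0^\pi \sin(nx)\,dx = -\frac{i}{\pi}\cdot\frac{1-(-1)^n}{n}$, so $c_n$ vanishes for even $n$ and equals $-2i/(\pi n)$ for odd $n$. Pairing $\pm n$ collapses the exponential sum to the classical square-wave sine series
\[
\sign_M(x) = \frac{4}{\pi}\sum_{k=0}^{K}\frac{\sin\big((2k+1)x\big)}{2k+1},
\]
where $2K+1$ is the largest odd integer not exceeding $M$.

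Next I would convert this partial sum into a single oscillatory integral so that the cancellation becomes manifest. Writing $\frac{\sin((2k+1)x)}{2k+1}=\int_0^x \cos((2k+1)t)\,dt$ and invoking the closed form of the odd Dirichlet kernel, $\sum_{k=0}^{K}\cos((2k+1)t)=\frac{\sin(2(K+1)t)}{2\sin t}$, gives
\[
\sign_M(x) = \frac{2}{\pi}\int_0^x \frac{\sin(Nt)}{\sin t}\,dt, \qquad N \equiv 2(K+1).
\]
By the oddness of $\sign_M$ and the reflection symmetry $\sign_M(x)=\sign_M(\pi-x)$ already recorded in the proof of Lemma~\ref{lemma:sign_func_appx}, it suffices to bound this integral for $x\in[0,\pi/2]$.

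On this interval I would isolate the near-origin singularity by splitting $\frac{1}{\sin t}=\frac{1}{t}+\big(\csc t-\frac1t\big)$. The first piece yields the sine integral $\frac{2}{\pi}\int_0^x \frac{\sin(Nt)}{t}\,dt=\frac{2}{\pi}\,\mathrm{Si}(Nx)$, which is uniformly bounded in both $N$ and $x$ because $\mathrm{Si}$ attains its global maximum $\mathrm{Si}(\pi)<2$; this is exactly the step encoding the Gibbs overshoot. The remainder $\csc t-\frac1t$ is continuous, nonnegative and increasing on $[0,\pi/2]$, bounded there by $1-\tfrac{2}{\pi}$, so $\big|\int_0^x \sin(Nt)(\csc t-\tfrac1t)\,dt\big|\le \tfrac{\pi}{2}-1$. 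Combining the two estimates, $|\sign_M(x)|\le \frac{2}{\pi}\big(\mathrm{Si}(\pi)+\tfrac{\pi}{2}-1\big)<2$, comfortably below $5$.

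The main obstacle is that the integrand $\sin(Nt)/\sin t$ cannot be controlled by its absolute value: $\int_0^{\pi/2}|\sin(Nt)|/\sin t\,dt$ diverges like $\log N$, so a naive triangle-inequality bound fails and one genuinely must exploit the oscillation. Routing the singular part through $\mathrm{Si}$, whose boundedness is the quantitative form of the Gibbs phenomenon, is what makes a uniform bound possible. Since the target constant is generous, only the qualitative boundedness of $\mathrm{Si}$ together with the elementary estimate on $\csc t-1/t$ are needed, and every numerical constant may be taken crudely.
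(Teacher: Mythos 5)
Your proof is correct, and it takes a genuinely different route from the paper's. The paper splits the \emph{domain}: on $[\alpha_0/M,\pi/2]$ it recycles Lemma~\ref{lemma:sign_func_appx} (the integration-by-parts estimate on the Dirichlet-kernel integral), while on $[0,\alpha_0/M)$ it reads the partial sum $\tfrac{4}{\pi}\sum_{k\,\mathrm{odd}}\sin(kx)/k$ as a Riemann sum for $\int_0^1 \sin(\alpha u)/u\,du$ with a Taylor-estimated discretization error $O(\alpha_0^2/M)$, then optimizes $\alpha_0=1$ to land on the constant $5$ for $M\geq 1$. You instead split the \emph{kernel}: the closed form $\sign_M(x)=\tfrac{2}{\pi}\int_0^x \sin(Nt)\csc t\,dt$ with $\csc t = \tfrac1t + (\csc t - \tfrac1t)$ routes the entire singular part through the sine integral, uniformly bounded by $\mathrm{Si}(\pi)<2$, and kills the smooth remainder with the trivial sup-times-length bound. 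Both arguments ultimately rest on the same phenomenon --- the uniform boundedness of $\mathrm{Si}$, i.e.\ the Gibbs overshoot --- but your version buys a cleaner and stronger statement: no case analysis in $x$ beyond the symmetry reduction both proofs share, no $M$-dependent error terms, and the sharper uniform constant $\tfrac{2}{\pi}\big(\mathrm{Si}(\pi)+\tfrac{\pi}{2}-1\big)<2$ in place of $5$; the paper's domain-splitting, in exchange, reuses machinery already needed for Proposition~\ref{prop:gs_ff}, so it comes essentially for free there. Two small points to tighten in yours: the claimed monotonicity of $\csc t - \tfrac1t$ on $(0,\pi/2]$ reduces to $\sin^2 t \geq t^2\cos t$, which follows in one line from $\sin t \geq t - t^3/6$ and $\cos t \leq 1 - t^2/2 + t^4/24$ (and is anyway only needed to identify the sup $1-\tfrac{2}{\pi}$, which a cruder constant would also do); and the degenerate case $M=0$, where $\sign_0\equiv 0$, should be noted as trivial, consistent with the paper's implicit restriction to $M\geq 1$.
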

\begin{proof}This proof is an adaptation of the standard technique based on Riemann integration that is used to treat Gibbs phenomena in Fourier analysis. We repurpose that technique to provide error bounds as a function of $M$ instead of just concentrating on the asymptotic limit $M \to \infty$. Again, we only consider $x \in [0, \pi/2]$, and extend the bound on $\abs{\text{sign}_M(x)}$ to the remaining interval by symmetry. We divide the interval $[0, \pi / 2]$ into $[0, \alpha_0 / M] \cup [\alpha_0 / M, \pi / 2]$, where $\alpha_0$ is a constant that we pick later.

Consider first $x \in [\alpha_0 / M, \pi / 2]$. An application of lemma \ref{lemma:sign_func_appx} yields
\[
\abs{\text{sign}_M(x)} \leq 1 + \frac{2}{\pi(2M + 1)} \bigg(\sqrt{2} - 1 + \frac{\pi M}{\alpha_0}\bigg).
\]
For large $M$, this bound scales as $\sim 1/ \alpha_0$ and thus does not allow us to provide an upper bound on $\text{sign}_M(x)$ for $x$ close to $0$. For this, we use the representation of $\text{sign}_M(x)$ as a Fourier series which approximates a Riemann integral of $\sin(t) / t$. Consider $x \in [0, \alpha_0 / M)$ and let $\alpha = x M$ ($\alpha \leq \alpha_0$). Note that
\[
\text{sign}_M(x) =\frac{2}{\pi} \sum_{k \in [1, M] | k \text{ is  odd}} \frac{2}{k} \sin\bigg(\frac{k \alpha}{M}\bigg)
\]
To bound the term in the summation, we observe that it is an approximation of the Riemann integral of $\sin(\alpha x) / x$ in the interval $[0, 1]$. In particular, since $\sup_{x \in \mathbb{R}} \abs{(\sin x / x)'} \leq 2$, Taylor's theorem yields that
\[
\bigabs{\sum_{k \in [1, M] | k \text{ is odd}} \frac{2}{k} \sin \bigg(\frac{k \alpha}{M}\bigg) - \int_0^1 \frac{\sin \alpha x}{x}dx} \leq \frac{4 \alpha^2}{M} \leq \frac{4\alpha_0^2}{M}.
\]
Finally, we note that
\[
\int_0^1 \frac{\sin \alpha x}{x} dx \leq \alpha \leq \alpha_0.
\]
Thus, we obtain that for $x \in [0, \alpha_0 / M)$,
\[
\abs{\text{sign}_M(x)} \leq \alpha_0 + \frac{4\alpha_0^2}{M}.
\]
Thus, for the entire interval $[0, \pi / 2]$, we obtain that
\[
\abs{\text{sign}_M(x)} \leq \text{max}\bigg(\alpha_0 + \frac{4\alpha_0^2}{M}, 1 + \frac{2}{\pi(2M + 1)}\bigg(\sqrt{2} - 1 + \frac{\pi M}{\alpha_0}\bigg)\bigg).
\]
Since this holds for any $\alpha_0$, we choose $\alpha_0 = 1$. We then obtain that
\[
\abs{\text{sign}_M(x)} \leq \text{max}\bigg(1 + \frac{4}{M}, 1 + \frac{2}{\pi(2M + 1)}\big(\sqrt{2} - 1 + {\pi M}\big)\bigg) \leq 5 \text{ for }M \geq 1.\]
\end{proof}
%-----------------------------------------------
\begin{proof}[Proof (of proposition \ref{prop:gs_ff})]The expectation value of the observable $O$ in the ground state of the Hamiltonian $H$ is given by
\[
\langle O \rangle_{H} = \text{Tr}\big( \tilde{O}\ \text{sign}(\tilde{H})\big), \ \langle O \rangle_{H'} = \text{Tr}\big( \tilde{O} \ \text{sign}(\tilde{H'})\big).
\]
Without loss of generality, we will assume that $\tilde H, \tilde H'$ are normalized so that $\norm{\tilde H}, \norm{\tilde H'} \leq \frac{\pi}{2}$. This way all the eigenfrequencies lie in the interval $[-\frac{\pi}{2},\frac{\pi}{2}]$. Lemma \ref{lemma:bound_op_norm} guarantees that this can be done with a constant normalization factor, i.e. one that does not depend on the system size, and does not change the ground state (note however that $\delta$ and $f_h$ would have to be rescaled accordingly). Now, from lemma \ref{lemma:translation_operator}, it follows that
\[
\bigabs{\langle O \rangle_{H} -  \langle O \rangle_{H'} } \leq \frac{4D^2 k}{n} \norm{\tilde{O}_0} \norm{\text{sign}(\tilde{H}) - \text{sign}(\tilde{H}')}_{\text{op}, 1}.
\]
Furthermore,
\begin{align*}
&\norm{\text{sign}(\tilde{H}) - \text{sign}(\tilde{H}')}_{\text{op}, 1} \leq \nonumber\\
&\quad\quad\norm{\text{sign}(\tilde{H}) - \text{sign}_M(\tilde{H})}_{\text{op}, 1} + \norm{\text{sign}(\tilde{H}') - \text{sign}_M(\tilde{H}')}_{\text{op}, 1}  + \norm{\text{sign}_M(\tilde{H}) - \text{sign}_M(\tilde{H'})}_{\text{op}, 1}.
\end{align*}
We bound each term on the right hand side separately. Consider $\norm{\text{sign}(\tilde{H}) - \text{sign}_M(\tilde{H})}_{\text{op}, 1}$ --- denoting by $\lambda_i$ the eigenvalues of $\tilde{H}$ and for any $\eta > 0$, we can express it as
\[
\norm{\text{sign}(\tilde{H}) - \text{sign}_M(\tilde{H})}_{\text{op}, 1} = \sum_{i | \lambda_i \in[-\eta, \eta]} \abs{\text{sign}(\lambda_i) - \text{sign}_M(\lambda_i)} + \sum_{i | \lambda_i \notin [-\eta, \eta]} \abs{\text{sign}(\lambda_i) - \text{sign}_M(\lambda_i)}.
\]
The motivation behind splitting the error into these two terms is that, within the interval $[-\frac{\pi}{2},\frac{\pi}{2}]$, the approximation of $\sign{(\lambda)}$ by $\sign_M{(\lambda)}$ is only good outside the neighbourhood of $0$ (see Fig.~\ref{fig:Fourier}) --- consequently, we treat the eigenvalues of $\tilde{H}$ which lie within $\eta$ radius of $0$ separately from the rest. It now follows that from assumption \ref{assum:eigenfreqs} and lemma \ref{lemma:upper_bound_sign} that
\[
\sum_{i | \lambda_i \in [-\eta, \eta]} \abs{\text{sign}(\lambda_i) - \text{sign}_M(\lambda_i)} \leq 6  n f_h(\eta) + 6 \kappa(\eta, n).
\]
Furthermore, from lemma \ref{lemma:sign_func_appx},
\[
\sum_{i | \lambda_i \notin [-\eta, \eta]} \abs{\text{sign}(\lambda_i) - \text{sign}_M(\lambda_i)} \leq \frac{n}{M} \bigg(1 + \frac{1}{\eta}\bigg).
\]
Therefore, we obtain that
\[
\frac{1}{n}\norm{\text{sign}(\tilde{H}) - \text{sign}_M(\tilde{H})}_{\text{op}, 1} \leq 6f_h(\eta) + 6\frac{\kappa(\eta, n)}{n} + \frac{1}{M}\bigg(1 + \frac{1}{\eta}\bigg).
\]
We can similarly analyze $\norm{\text{sign}(\tilde{H}') - \text{sign}(\tilde{H}')}_{\text{op}, 1}$. Denote by $\lambda'_i$ the eigenvalues of $\tilde{H}'$ --- it follows from Weyl's theorem and lemma \ref{lemma:bounds_geom_mat} that $\abs{\lambda_i - \lambda_i'} \leq \norm{\tilde{H}-\tilde{H'}}_{\text{op}} \leq c_0 \delta $ where $c_0 = 2D (2R + 1)^d$. Consequently, for sufficiently small, but $\Theta(1)$, $\delta$, we obtain that
\[
\sum_{i | \lambda_i' \in [-\eta, \eta]} \abs{\text{sign}(\lambda_i') -\text{sign}_M(\lambda_i')} \leq 6n f_h(\eta + c_0 \delta) + 6 \kappa(\eta + c_0 \delta, n),
\]
and
\[
\sum_{i | \lambda_i' \notin [-\eta, \eta]}  \abs{\text{sign}(\lambda_i') -\text{sign}_M(\lambda_i')}  \leq \frac{n}{M}\bigg(1 + \frac{1}{\eta}\bigg).
\]
Therefore,
\[
\frac{1}{n}\norm{\text{sign}(\tilde{H}') - \text{sign}_M(\tilde{H}')}_{\text{op}, 1} \leq 6f_h(\eta + c_0 \delta) + 6\frac{\kappa(\eta + c_0 \delta, n)}{n} + \frac{1}{M}\bigg(1 + \frac{1}{\eta}\bigg).
\]
Finally, we consider $\norm{\text{sign}_M(H) - \text{sign}_M(H')}_{\text{op}, 1} \leq n \norm{\text{sign}_M(H) - \text{sign}_M(H')}$. Now, denoting by $\{c_m\}_{m \in \mathbb{Z}}$ the Fourier series components of $\text{sign}$ function, then
\[
 \norm{\text{sign}_M(H) - \text{sign}_M(H')} \leq \sum_{m = -M}^M \abs{c_m} \norm{e^{im \tilde{H}} - e^{im  \tilde{H}'}} \leq \sum_{m = -M}^M \abs{ m c_m} \norm{\tilde{H}-\tilde H'}.
\]
Using the explicit expression for $c_m$, we can immediately conclude that $\abs{m c_m} = 2 / \pi$ when $m$ is odd, and $0$ when $m$ is even. Therefore, we obtain that
\[
 \norm{\text{sign}_M(H) - \text{sign}_M(H')}  \leq \frac{2(M + 1)}{\pi} \norm{\tilde{H}-\tilde H'} \leq \frac{2(M + 1)}{\pi}c_0 \delta.
 \]
Combining all of these estimates, we obtain that
 \[
 \frac{1}{n} \norm{\text{sign}(\tilde{H}) - \text{sign}(\tilde{H}')}_{\text{op}, 1} \leq \frac{2(M + 1)}{\pi} c_0\delta  + \frac{2}{M}\bigg(1 + \frac{1}{\eta}\bigg) + 6\big(f_h(\eta) + f_h(\eta + c_0 \delta)\big) + 6\bigg(\frac{\kappa(\eta, n)}{n} + \frac{\kappa(\eta + c_0 \delta, n)}{n}\bigg).
 \]
 with $c,c'$ constants. Since this is valid for any $\eta$ and $M$, choosing $M = \delta^{-1/2}$ and $\eta = \delta^{1/4}$, we obtain the proposition.
 \end{proof}
 %%%%%%%%%%%%%%%%%%%%%%%%%%%%%%%%%%%%%%%%%%%%%%%%%%%%%%%%%%%%%%%%%%%%%%%%%

\subsection{Proof of proposition \ref{prop:Gibbs_ff} (Gibbs state of free fermion models)}
\label{app:Gibbs_ff}
The correlation matrix of a thermal state of a quadratic Hamiltonian can be written in terms of the coefficient matrix $H$ of the latter as $\Gamma = \tanh(\beta H)$. Note that the $\beta\to\infty$ limit yields the sign function, which was used in the previous appendix to compute the ground state correlation matrix. Indeed, the reasoning here will be similar to that of appendix \ref{app:gs_ff}, replacing the sign function with the hyperbolic tangent. The next couple of lemmas discuss the Fourier series approximation of $\tanh{\beta x}$, defined as
 \[
 {t}_M(x) \equiv \sum_{n = -M}^M c_n e^{inx}, \ \text{where } c_n = \frac{1}{2\pi} \int_{-\pi}^\pi \tanh{\beta x} e^{-inx} dx.
 \]
\begin{lemma}\label{lemma:tanh_appx}
For $M \geq 1$, and $x \in \left[-\frac{\pi}{2}, \frac{\pi}{2}\right]$,
\[
\abs{t_M(x) - \tanh{\beta x}} \leq \frac{q(\beta)}{M},
\]
where $q(\beta) \equiv 12\pi^2\beta^3+2\pi^2\beta^2+\left(2+\frac{\pi^2}{2}\right)\beta+\left(\frac{4\sqrt{2}}{\pi}+\frac{\pi^2}{2}\right)=O(\beta^3)$.
\end{lemma}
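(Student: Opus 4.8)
The plan is to mirror the Dirichlet-kernel analysis used for $\sign_M$ in Lemma \ref{lemma:sign_func_appx}, but to exploit the crucial difference that $\tanh(\beta x)$ is \emph{smooth} on the interior of $[-\pi,\pi]$: the only obstruction to rapid Fourier convergence is the jump of the $2\pi$-periodic extension $g$ of $\tanh(\beta\,\cdot)|_{[-\pi,\pi]}$ across the endpoints $\pm\pi$, and restricting to $x\in[-\tfrac{\pi}{2},\tfrac{\pi}{2}]$ keeps the evaluation point a fixed distance away from that jump. First I would write $t_M = D_M * g$ as a periodic convolution with the Dirichlet kernel $D_M$ defined above, so that, using $\int_{-\pi}^{\pi} D_M = 1$,
\[
t_M(x)-\tanh(\beta x)=\frac{1}{2\pi}\int_{-\pi}^{\pi}\frac{g(x-u)-g(x)}{\sin(u/2)}\,\sin\!\big((M+\tfrac12)u\big)\,du .
\]
This reduces the lemma to bounding a single oscillatory integral uniformly in $x$.

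Next I would extract the $1/M$ decay by one integration by parts against the antiderivative $-\cos((M+\tfrac12)u)/(M+\tfrac12)$ of the oscillatory factor. The boundary terms at $u=\pm\pi$ vanish because $\cos((M+\tfrac12)\pi)=0$. The subtlety is that $u\mapsto g(x-u)$ inherits one interior jump, located at $u_\ast = x\mp\pi$ with $|u_\ast|\ge \pi/2$; splitting the integral there produces a jump term proportional to $\tanh(\beta\pi)/(M+\tfrac12)$ times $1/\sin(u_\ast/2)$. Since $|\sin(u_\ast/2)|\ge \sin(\pi/4)=1/\sqrt{2}$ and $\tanh\le1$, this contributes the $\beta$-independent constant (the $\tfrac{4\sqrt2}{\pi}$ piece of $q$); crucially it is well separated from the kernel's singularity at $u=0$.

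It remains to bound $\tfrac{1}{M+1/2}\int_{-\pi}^{\pi}|\partial_u \psi_x(u)|\,du$ with $\psi_x(u)=\frac{g(x-u)-g(x)}{2\pi\sin(u/2)}$, and this is where I expect the main obstacle. A termwise estimate of $\partial_u\psi_x$ diverges logarithmically as $u\to0$, so one must use that the numerator vanishes to the appropriate order there: Taylor expanding $g$ about $x$ shows the $O(u)$ and $O(u^2)$ contributions cancel the $\sin(u/2)$ and $\cos(u/2)$ singularities, leaving $\partial_u\psi_x$ controlled by sup-norms of derivatives of $g$. Using $\|g'\|_\infty \le\beta$, $\|g''\|_\infty=O(\beta^2)$ and $\|g'''\|_\infty=O(\beta^3)$ (these being $\beta^j$ times bounded derivatives of $\tanh$), and treating the $u\approx0$ region and the bulk separately, one obtains $\int_{-\pi}^{\pi}|\partial_u\psi_x| = O(\beta^3)$ uniformly in $x$, which yields the leading $12\pi^2\beta^3/M$ term together with the lower-order $\beta^2$ and $\beta$ contributions. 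The genuinely delicate point is making this variation bound \emph{uniform} over $x\in[-\tfrac{\pi}{2},\tfrac{\pi}{2}]$ when the steep region of $\tanh$ (width $\sim 1/\beta$) may coincide with the kernel singularity at $u\approx0$; the $\pm\pi$ jump is routine by comparison, and I would not chase the exact numerical constants in $q(\beta)$, which simply fall out of this chain of estimates.

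As an alternative route, one could compute the Fourier coefficients $c_n$ directly by repeated integration by parts, isolating the slowly decaying jump part $\propto(-1)^{n+1}\tanh(\beta\pi)/n$ from a smooth remainder decaying like $\beta^{O(1)}/n^2$, and then bound the tail $\sum_{|n|>M}c_n e^{inx}$ by summation by parts using $|\sum_{k\le n}\sin(k\theta)|\le 1/|\sin(\theta/2)|\le\sqrt{2}$ on the shifted range $\theta=x+\pi\in[\tfrac{\pi}{2},\tfrac{3\pi}{2}]$. This reproduces the $\tfrac{4\sqrt2}{\pi}$ constant transparently, but requires the same care to keep the $\beta$-dependent remainder under control.
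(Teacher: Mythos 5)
Your proposal is correct and follows essentially the same route as the paper's proof: a Dirichlet-kernel representation of $t_M$, one integration by parts to extract the $1/(2M+1)$ decay, the interior jump of the periodic extension handled via $|\sin(u_\ast/2)|\geq 1/\sqrt{2}$ (yielding the $4\sqrt{2}/\pi$ constant), and a Taylor-cancellation argument at $u=0$ controlled by sup-norms of $g'$, $g''$, $g'''$, which gives the uniform-in-$x$ bound and the $O(\beta^3)$ scaling. The only cosmetic difference is that the paper symmetrizes the difference quotient, using $f_x(y)=2t(x)-t(x-y)-t(x+y)$ on $[0,\pi]$ so that $f_x(0)=f_x'(0)=0$ is manifest, whereas you use the one-sided difference $g(x-u)-g(x)$ on $[-\pi,\pi]$; both work because the post--integration-by-parts numerator $h'(u)\sin(u/2)-\tfrac{1}{2}h(u)\cos(u/2)$ still vanishes to second order at $u=0$, exactly as in the paper's bound on $g_x$.
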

\begin{proof}
We fix the value of $\beta$ and let $t(x)$ be the $2\pi$-periodic extension of $\tanh{\beta x}$
\begin{equation}
    t(x)\equiv\tanh{\beta (x-2n\pi)},\qquad x-2n\pi\in[-\pi,\pi],\quad n\in\mathbb{Z}
\end{equation}
Once again, it will be convenient to represent $t_M(x)$ in terms of the Dirichlet kernel $D_M$. We note,
\[
t_M(x) = \int_{-\pi}^\pi{D_M(x-y)t(y)dy}=\int_{-\pi}^\pi{D_M(y)t(x-y)dy}=\int_{-\pi}^\pi{D_M(y)t(x+y)dy},
\]
% \[
% t_M(x) = \frac{1}{2\pi} \int_{-\pi}^\pi \big(\tanh{\beta x} - \text{tanh}\beta(x - y)\big) \frac{\sin(M + 1/2) y}{\sin y/2} dy.
% \]
and therefore, using that the Dirichlet kernel is normalized, we write
\[
t(x)-t_M(x) = \dfrac{1}{2}\int_{-\pi}^\pi{D_M(y)\left(2t(x)-t(x-y)-t(x+y)\right)dy} = \int_{0}^\pi{D_M(y)f_x(y)dy},
\label{eq:tanherror}
\]
where in the last step we have defined $f_x(y)\equiv 2t(x)-t(x-y)-t(x+y)$. In the integration interval $[0,\pi]$, $f_x(y)$ is piecewise smooth with a single jump discontinuity at $y = \pi - x$. We thus split the integral into the two intervals $[0,\pi-x]$ and $[\pi-x,\pi]$ and apply integration by parts in each of them. For the first one,
\begin{align*}
    \int_{0}^{\pi-x}{D_M(y)f_x(y)dy} = -\dfrac{1}{\pi}\dfrac{\cos{\left(\left(M+\frac{1}{2}\right)y\right)}}{2M+1}\dfrac{f_x(y)}{\sin\frac{y}{2}}\Bigg|_{y=0}^{\pi-x} + \dfrac{1}{(2M+1)\pi}\int_{0}^{\pi-x}{g_x(y)\dfrac{\cos{\left(\left(M+\frac{1}{2}\right)y\right)}}{\sin^2\frac{y}{2}}dy}
\end{align*}
where $g_x(y)\equiv2\sin\frac{y}{2}f_x'(y)-\cos{\frac{y}{2}}f_x(y)$. To bound this expression, we will use the following properties of the functions $f_x(y), g_x(y)$ on the interval $[0,\pi-x]$, where they are smooth:
\begin{align*}
    f_x(0) =  f'_x(0) = 0, |f_x(y)|\leq 4, |f'_x(y)|\leq 2\beta, |f''_x(y)|\leq 4\beta^2, |f'''_x(y)|\leq 12\beta^3,\\
    g_x(0) = g'_x(0) = 0, |g_x''(y)|\leq 24\beta^3+4\beta^2+\beta+1.
\end{align*}
These bounds follow from direct computation, and in the case of $g_x(y)$ they are easiest to see when expressed in terms of $f_x(y)$. They imply (via Taylor's theorem with second order remainder) that
\[|g_x(y)|\leq (24\beta^3+4\beta^2+\beta+1)\dfrac{y^2}{2}
\]
which together with $\sin^2(y)\geq\frac{y^2}{\pi^2}$ will allow us to bound the integral. Putting it all together, we have
\[\bigabs{\int_{0}^{\pi-x}{D_M(y)f_x(y)dy}}\leq \dfrac{4\sqrt{2}}{(2M+1)\pi}+\dfrac{\pi^2}{(2M+1)}(24\beta^3+4\beta^2+\beta+1)
\]
Now we proceed on to the second interval $y\in[\pi-x,\pi]$ and similarly integrate by parts,
\begin{align*}
    \int_{\pi-x}^{\pi}{D_M(y)f_x(y)dy} = -\dfrac{1}{\pi}\dfrac{\cos{\left(\left(M+\frac{1}{2}\right)y\right)}}{2M+1}\dfrac{f_x(y)}{\sin\frac{y}{2}}\Bigg|_{y=\pi-x}^{\pi} + \dfrac{1}{(2M+1)\pi}\int_{\pi-x}^{\pi}{g_x(y)\dfrac{\cos{\left(\left(M+\frac{1}{2}\right)y\right)}}{\sin^2\frac{y}{2}}dy}.
\end{align*}
Now the bound on $g_x(y)$ from Taylor's theorem no longer holds, due to the discontinuity, but since $y=0$ is not in the integration interval, we can just use the constant bound $|g(x)|\leq 4\beta+4$ to obtain
\[\bigabs{\int_{\pi-x}^{\pi}{D_M(y)f_x(y)dy}}\leq \dfrac{4\sqrt{2}}{(2M+1)\pi}+\dfrac{4}{(2M+1)}(\beta+1),
\]
and putting everything together the lemma follows.
\end{proof}
\begin{lemma}\label{lemma:sum_coeffs}
If $\{c_n\}_{n \in \mathbb{Z}}$ are the Fourier series coefficients of $\tanh{\beta x}$ in the interval $[-\pi, \pi]$, then for $M \geq 1$
\[
\sum_{n = -M}^M \abs{n c_n} \leq 2M (\beta + 1).
\]
\end{lemma}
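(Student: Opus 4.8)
The plan is to extract the factor of $n$ by a single integration by parts and then bound the resulting pieces crudely. First I would observe that $\tanh(\beta x)$ is odd on $[-\pi,\pi]$, so the zeroth coefficient vanishes, $c_0 = 0$; hence the $n=0$ term drops out and the sum runs effectively over the $2M$ values $n = \pm 1, \dots, \pm M$. This reduces the statement to proving the per-term bound $\abs{n c_n} \leq 1 + \beta$, from which $\sum_{n=-M}^M \abs{n c_n} \leq 2M(1+\beta)$ follows immediately.

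To obtain this per-term bound, for $n \neq 0$ I would integrate by parts in $c_n = \frac{1}{2\pi}\int_{-\pi}^\pi \tanh(\beta x)\, e^{-inx}\,dx$, differentiating $\tanh(\beta x)$ (whose derivative is $\beta/\cosh^2(\beta x)$) and integrating $e^{-inx}$. Evaluating the boundary term with $e^{\mp in\pi} = (-1)^n$ and the oddness of $\tanh$ produces a contribution proportional to $(-1)^n\tanh(\beta\pi)/n$, while the remaining integral carries a prefactor $1/n$ together with $\beta/\cosh^2(\beta x)$. Multiplying through by $n$ then gives an expression of the form
\[
n c_n = \frac{1}{2\pi i}\Big(-2(-1)^n \tanh(\beta\pi) + \beta \int_{-\pi}^\pi \frac{e^{-inx}}{\cosh^2(\beta x)}\,dx\Big).
\]
Taking absolute values, using $\abs{\tanh(\beta\pi)} \leq 1$ for the boundary piece and the crude estimate $\cosh^{-2}(\beta x) \leq 1$ (so the integral has magnitude at most $2\pi$), yields $\abs{n c_n} \leq \frac{1}{\pi} + \beta \leq 1 + \beta$, which is exactly the per-term bound needed.

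There is no serious obstacle here, but the one point to watch is that the single integration by parts is genuinely necessary: bounding $c_n$ directly by $\frac{1}{2\pi}\int_{-\pi}^\pi \abs{\tanh(\beta x)}\,dx \leq 1$ would only give $\abs{n c_n} \leq M$ and a useless $O(M^2)$ sum with no $\beta$-dependence, so the derivative must be exposed to pull out the factor $1/n$ that cancels the $n$. I would also note in passing that a sharper estimate is available—since $\int_{-\pi}^\pi \cosh^{-2}(\beta x)\,dx = 2\tanh(\beta\pi)/\beta$, the $\beta$ multiplying the integral actually cancels, giving the $\beta$-independent bound $\abs{n c_n} \leq 2/\pi$—but the weaker $\beta$-dependent bound from $\cosh^{-2}\leq 1$ is all that is used downstream, so I would not pursue the refinement.
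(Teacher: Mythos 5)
Your proof is correct and takes essentially the same route as the paper's: a single integration by parts exposing the derivative $\beta/\cosh^2(\beta x)$, with the boundary term bounded via $\abs{\tanh(\beta\pi)} \leq 1$ and the integral via the crude estimate $\cosh^{-2}(\beta x) \leq 1$, giving the per-term bound $\abs{n c_n} \leq \beta + 1$ and hence the stated sum over the $2M$ nonzero terms. Your parenthetical sharpening (that $\beta \int_{-\pi}^{\pi} \cosh^{-2}(\beta x)\,dx = 2\tanh(\beta\pi) \leq 2$ yields the $\beta$-independent bound $\abs{n c_n} \leq 2/\pi$) is also correct, though, as you note, unnecessary for the application.
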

\begin{proof}
This follows by a straightforward manipulation of $c_n$ --- note that $c_0 = 0$, and for $n\neq 0$, we obtain from integration by parts that
\[
c_n = \frac{1}{2\pi} \int_{-\pi}^\pi \tanh{\beta x} \ e^{-inx} dx = \frac{1}{2\pi} \bigg(\frac{2i}{n}\tanh \beta \pi \ e^{-in\pi} + \frac{\beta}{in}\int_{-\pi}^\pi \frac{e^{-inx}}{\text{cosh}^2 \beta x} dx\bigg).
\]
Consequently,
\[
\abs{c_n} \leq \frac{1}{2\pi} \bigg(\frac{2}{n} +\frac{2\pi \beta}{n}\bigg) \leq \frac{\beta + 1}{n}.
\]
From this bound, the lemma follows.
\end{proof}
\begin{proof}[Proof (of proposition \ref{prop:Gibbs_ff})]
We bound the error between $\langle O \rangle_{H, \beta}$ and $\langle O \rangle_{H', \beta}$ using the same procedure as for the ground state (see appendix \ref{app:gs_ff}) --- the proof simplifies significantly because $\tanh{\beta x}$ does not have a discontinuity near $x = 0$ (unlike the sign function). From lemma \ref{lemma:translation_operator} it follows that
\[
\abs{\langle O \rangle_{H, \beta} - \langle O \rangle_{H', \beta}} \leq \frac{4D^2 k}{n}\norm{\tilde{O}_0} \norm{\tanh{\beta \tilde{H}} - \tanh{\beta \tilde{H}'}}_{\text{op}, 1}.
\]
We again split
\begin{align*}
 &\norm{\tanh{\beta \tilde{H}} - \tanh{\beta \tilde{H}'}}_{\text{op}, 1}  \leq \nonumber \\
 &\qquad \norm{\tanh{\beta \tilde{H}} - t_M(\tilde{H})}_{\text{op} , 1} + \norm{\tanh{\beta \tilde{H}'} - t_M(\tilde{H}')}_{\text{op} , 1}  + \norm{t_M(\tilde{H}') - t_M(\tilde{H})}_{\text{op} , 1}
\end{align*}
We will assume once again that $\norm{H}, \norm{H'} \leq \frac{\pi}{2}$, so that from lemma \ref{lemma:tanh_appx}, it follows that
\[
\norm{\tanh{\beta \tilde{H}} - t_M(\tilde{H})}_{\text{op} , 1}, \norm{\tanh{\beta \tilde{H}'} - t_M(\tilde{H}')}_{\text{op} , 1} \leq \frac{n q(\beta)}{M},
\]
and
\[
\norm{t_M(\tilde{H}) - t_M(\tilde{H}')}_{\text{op}, 1} \leq  n\norm{t_M(\tilde{H}) - t_M(\tilde{H}')}_{\text{op}} \leq n\sum_{m = -M}^M \abs{c_m} \norm{e^{im\tilde{H}} - e^{im\tilde{H}'}}_{\text{op}}.
\]
Furthermore, from lemmas \ref{lemma:bound_op_norm}, \ref{lemma:pert_theory} and \ref{lemma:bounds_geom_mat} we have $\norm{e^{im\tilde{H}} -e^{im\tilde{H}'}}_\text{op} \leq m c_0 \delta$, where $c = 2D (2R + 1)^d$. Thus, from lemma \ref{lemma:sum_coeffs}, it follows that
\[
\norm{t_M(\tilde{H}) - t_M(\tilde{H}')}_{\text{op}, 1}  \leq 2nM (\beta + 1) c_0\delta.
\]
Thus, we obtain that for any $M > 1$,
\[
\abs{\langle O \rangle_{H, \beta} - \langle O \rangle_{H', \beta}} \leq 4D^2k \norm{O_0}_\text{op} \bigg(\frac{2q(\beta)}{M} + 2(\beta + 1) c_0 M \delta\bigg).
\]
choosing $M = \sqrt{q(\beta) / c_0 (\beta + 1) \delta}$, we obtain the result.
\end{proof}

\subsection{Stability of fixed points}\label{app:fp_ff}
We will now consider translationally invariant local observables in the fixed point. Recall from section \ref{app:t_evol_ff}, lemma \ref{lemma:gaussian_open_quantum_dynamics} that the dynamics of the correlation matrix $\Gamma(t)$ under evolution by a Gaussian master equation is governed by
\[
\frac{d}{dt}\Gamma(t) = X\Gamma(t) + \Gamma(t) X^\text{T} + Y,
\]
where $X, Y$ are defined in lemma \ref{lemma:gaussian_open_quantum_dynamics}. Assuming $X$ to be invertible, this differential equation has a unique fixed point which can be expressed as
\[
\Gamma_\infty = \int_0^\infty e^{X t} Y e^{X^\text{T}t}dt.
\]
We also note from lemma \ref{lemma:gaussian_open_quantum_dynamics} is $X + X^\text{T}$ is a negative definite matrix if $X$ is invertible, and its eigenvalues can be interpreted as a measure of the decay rates of the eigenmodes of the open system. We now restate assumption \ref{assump:lindbladian_holder_cont} in terms of the eigenvalues of $X + X^\text{T}$.
\begin{repassumption}{assump:lindbladian_holder_cont}[Formal]
 The matrix $X + X^\text{T}$ has no zero eigenvalues. Furthermore, if its eigenvalues are $-\lambda_i$, for $i \in \{1, 2 \dots 2n\}$ where $0 < \lambda_1\leq \lambda_2 \dots \lambda_N$, then the number of eigenvalues in the interval $(0, \eta]$, $n(\eta)$ satisfies
 \[
 n(\eta) \leq n f_\ell(\eta) + \kappa(\eta, n),
 \]
 where $f_\ell$ is a function such that $f_\ell(\eta) \leq O(\eta^\alpha)$ for some $\alpha > 0$ and $\kappa(\eta, n) = o(n)$ for any fixed $\eta$.
\end{repassumption}
\noindent This assumption is expected to be satisfied for translationally invariant systems, as well as rapidly mixing systems where typically there would be a gap in the Lindbladian spectrum. However, it additionally includes systems where the minimum eigenvalue could have a real part that scales as $1/n$, and thus would generically need $\Theta(n)$ time to reach their fixed points.

Similar to the stability result for ground states, the observables we will consider will be translationally invariant local observables - let $O_0$ be a $k$-local observable , then we consider observables $O$ which can be expressed as $O =  \sum_{x\in \mathbb{Z}_L^d} \tau_x(O_0) / n$. We will denote by $\mathcal{O}$ and $\mathcal{O}'$ the expectation value of the observable $O$ in the unperturbed and perturbed fixed point i.e.~$\mathcal{O} = \textnormal{Tr}(O \Gamma_\infty)$ and $\mathcal{O}' = \textnormal{Tr}(O\Gamma_\infty')$. \\

\noindent \begin{proof}[Proof (of proposition \ref{prop:fixed_points_gaussian})] We start by using lemma~\ref{lemma:translation_operator} to obtain
\begin{align}\label{eq:fp_obeservable_error_corr_mat}
\abs{\mathcal{O} - \mathcal{O}'} \leq \frac{4D^2 k}{n} \norm{\tilde{O}_0}_\textnormal{op} \norm{\Gamma_\infty - \Gamma_\infty'}_{\text{op}, 1}.
\end{align}
Furthermore,
\[
\Gamma_\infty = \int_0^\infty e^{X t}Ye^{X^\text{T}t} dt \text{ and }\Gamma_\infty' = \int_0^\infty e^{X't}Y' e^{{X'}^\text{T}t} dt.
\]
For any $t_0 > 0$, it follows that
\begin{align}\label{eq:fp_split_corr_error}
\norm{\Gamma_\infty - \Gamma_\infty'}_{\text{op}, 1} &\leq \frac{1}{n}\bignorm{\int_0^{t_0} \bigg(e^{X t}Ye^{X^\text{T}t} - e^{X't}Y' e^{{X'}^\text{T}t}\bigg) dt}_{\textnormal{op}, 1} + \frac{1}{n} \bignorm{\int_{t_0}^\infty e^{X t}Ye^{X^\text{T}t} dt}_{\text{op}, 1} + \frac{1}{n} \bignorm{\int_{t_0}^\infty e^{X't}Y' e^{{X'}^\text{T}t} dt}_{\text{op}, 1}.
\end{align}
Furthermore, we note that
\[
\int_{t_0}^\infty e^{X t}(Y) e^{X^\text{T}t} dt = e^{X t_0}\Gamma_\infty e^{X^\text{T}t_0}.
\]
Let us now estimate $\norm{e^{Xt_0}\Gamma_\infty e^{X^\text{T}t_0}}_{\text{op}, 1}$. Note that since $\Gamma_\infty$ is a covariance matrix, it is positive semi-definite and satifies $\norm{\Gamma_\infty}_{\text{op}} \leq 1$. Let $\Gamma_\infty = \sum_{\alpha}\sigma_\alpha \ket{v_\alpha}\bra{v_\alpha}$ where $0\leq \sigma_\alpha \leq 1$, then
\[
e^{Xt_0}\Gamma_\infty e^{X^\text{T}t_0} = \sum_{\alpha} \sigma_\alpha e^{Xt_0} \ket{v_\alpha} \bra{v_\alpha}e^{X^\text{T} t_0},
\]
from which it follows that
\begin{align*}
\norm{\Gamma_\infty}_{\text{op}, 1} &\leq \sum_{\alpha} \sigma_\alpha\norm{e^{Xt_0} \ket{v_\alpha}}^2, \nonumber\\
&\leq \sum_{\alpha} \bra{v_\alpha}e^{X^\text{T}t_0} e^{Xt_0}\ket{v_\alpha}, \nonumber\\
&\leq \text{Tr}(e^{X^\text{T}t_0}e^{Xt_0}) = \norm{e^{Xt_0}}_{\text{op}, 2}^2, \nonumber \\
&\leq \norm{e^{(X + X^T)t_0/2}}_{\text{op}, 2}^2.
\end{align*}
Note that the first step follows from the variational definition of the Schatten-1 norm i.e.~for any rank 1 decomposition of a matrix $A = \sum_{\alpha}c_\alpha \ket{v_\alpha}\bra{u_\alpha}$ where $\norm{v_\alpha}, \norm{u_\alpha} = 1$, then $\norm{A}_{\text{op}, 1} \leq \sum_{\alpha} \abs{c_\alpha}$. The last step follows from theorem IX.3.1  of Ref.~\cite{bhatia2013matrix}. Next, $\norm{e^{(X + X^\text{T})t_0/2}}_{\text{op}, 2}$ can be written explicitly in terms of the eigenvalues of $X + X^\text{T}$ 
\begin{align*}
\norm{e^{(X + X^T)t_0/2}}_{\text{op}, 2}^2 &= \sum_{i = 1}^{2n} e^{-\lambda_i t_0} =  \sum_{i | \lambda_i \in (0, \eta]}e^{-\lambda_i t_0} + \sum_{i | \lambda_i \in (\eta, \infty)} e^{-2\lambda_i t_0} \leq n \varphi(\eta) + 2n e^{-\eta t_0} + o(n)
\end{align*}
Choosing $\eta = t_0^{-1 + \beta}$ for any $\beta \in (0, 1)$, we obtain that $\norm{e^{(X + X^\text{T})t_0/2}}_\textnormal{op, 2} \leq n(f_\ell(t_0^{-1 + \beta}) + 2 e^{-t_0^\beta}) + o(n)$, which yields that
\begin{align}\label{eq:fp_long_time_bound_1}
\frac{1}{n}\bignorm{\int_{t_0}^\infty e^{Xt} Y e^{X^\text{T}t}dt}_{\textnormal{op}, 1} \leq f_\ell(t_0^{-1 + \beta}) + 2e^{-t_0^\beta} + o(1).
\end{align}
Following a similar procedure, we obtain that
\begin{align}\label{eq:fp_long_time_bound_2}
\frac{1}{n}\bignorm{\int_{t_0}^\infty e^{X't} Y' e^{{X'}^\text{T}t} dt}_{\textnormal{op}, 1} \leq f_\ell(t_0^{-1 + \beta} + c_0 \delta) + e^{-t_0^\beta} + o(1),
\end{align}
where $c_0 = 4D(2R + 1)^d + 8D n_L (2R + 1)^{2d}(2 + \delta) \leq O(1)$ is a constant that is independent of $n$, but dependent on $R, D, d, n_L$. In arriving at this result, we just need to account for the fact that the eigenvalue $\lambda_i'$ of $X' + X'^{\textnormal{T}}$ could differ by at-most $2\norm{X - X'}_\textnormal{op}$ (which, from lemma \ref{lemma:bounds_geom_mat}, is $ \leq c_0 \delta$) from the corresponding eigenvalue $\lambda_i$ of $X + X^\textnormal{T}$, and consequently the number of eigenvalues of $X' + X'^{\text{T}}$ in the interval $(0, \eta]$ is upper bounded by the number of eigenvalues of $X + X^\text{T}$ in the interval $(0, \eta + c_0 \delta]$. In particular, this implies that $\sum_{i | \lambda_i' \in (0, \eta]} e^{-\lambda_i't_0} \leq \sum_{i | \lambda_i \in (0, \eta + c_0\delta]} 1 \leq nf_\ell(\eta + c_0\delta) + o(n)$.

Let us now estimate the remaining term in $\norm{\Gamma_\infty - \Gamma_\infty'}_{\text{op}, 1}$ --- we bound the Schatten 1 norm by the operator (or Schatten $\infty$ norm) in the trivial way
\begin{align}\label{eq:short_time_bound_start}
\frac{1}{n}\bignorm{\int_0^{t_0} \bigg(e^{Xt} Y e^{X^\text{T}t} - e^{X't} Y' e^{{X'}^\text{T}t}\bigg) dt}_{\text{op}, 1} \leq 2D \int_0^{t_0}\bignorm{e^{Xt} Y e^{X^\text{T}t} - e^{X't} Y' e^{{X'}^\text{T}t}}_{\text{op}}dt.
\end{align}
Now, we can bound the error $\norm{e^{Xt} Y e^{X^\text{T}t} - e^{X't} Y' e^{{X'}^\text{T}t}}_\op$ using standard perturbation theory. We begin by noting from lemma \ref{lemma:bounds_geom_mat} that $\norm{X - X'}_\text{op}, \norm{Y - Y'}_\text{op} \leq O(\delta)$ and $\norm{Y}_\text{op}, \norm{Y'}_\text{op} \leq O(1)$ . Furthermore, from theorem IX.3.1 of Ref.~\cite{bhatia2013matrix} and the fact that $X + X^\text{T}$, $X' + {X'}^\text{T}$ are negative-definite, $\norm{e^{Xt}}_\text{op} \leq  \norm{e^{(X + X^\text{T})t}}_\op \leq 1$ and $\norm{e^{X't}}_\text{op} \leq  \norm{e^{(X' + {X'}^\text{T})t}}_\op \leq 1$. Now, we note that
\begin{align*}
\norm{e^{X t} Y e^{X^\text{T}t} - e^{{X'}t} Y'e^{{X'}^\text{T} t}}_\textnormal{op} \leq \norm{e^{X t} (Y - Y') e^{X^\text{T}t}}_\text{op} + \norm{e^{X t} Y' e^{X^\text{T}t} - e^{X't}Y'e^{{X'}^\text{T}t}}_\text{op}.
\end{align*}
We can bound both of these terms separately. For the first term, we obtain that
\begin{align*}
&\norm{e^{Xt}(Y - Y') e^{X^\text{T}t}}_\text{op}\leq \norm{e^{Xt}}_\op\norm{Y - Y'}_\text{op} \norm{e^{X^\text{T}t}}_\op \leq \norm{Y - Y'}_\op \leq O(\delta).
\end{align*}
For the second term, we obtain that
\begin{align*}
&\norm{e^{X t} Y' e^{X^\text{T}t} - e^{X't}Y'e^{{X'}^\text{T}t}}_\text{op} \nonumber\\
&\leq \int_0^t\bignorm{ e^{X(t - s)} (X - X') e^{{X'}s} Y' e^{{X'}^\text{T}s}e^{X^\text{T}(t - s)}}_\text{op} ds + \int_0^t\bignorm{ e^{X(t - s)} e^{{X'}s} Y' e^{{X'}^\text{T}s} (X - X')^\text{T} e^{X^\text{T}(t - s)}}_\text{op}ds, \nonumber \\
&\leq 2\int_0^t \norm{e^{X(t - s)}}_\op^2 \norm{e^{X' s}}^2_\op \norm{X - X'}_\op \norm{Y}_\op ds \leq O(\delta t).
\end{align*}
Thus, from Eq.~\ref{eq:short_time_bound_start}, we obtain that 
\begin{align}\label{eq:fp_short_time_bound}
\frac{1}{n}\bignorm{\int_0^{t_0} \bigg(e^{X t} Y e^{X^\text{T}t} - e^{{X'}t}Y'e^{{X'}^\text{T} t} \bigg) dt}_{\text{op}, 1} \leq O(t_0^2 \delta)
\end{align}
Using the estimates in Eq.~\ref{eq:fp_obeservable_error_corr_mat}, \ref{eq:fp_split_corr_error}, \ref{eq:fp_long_time_bound_1}, \ref{eq:fp_long_time_bound_2} and \ref{eq:fp_short_time_bound}, we have that for any $\beta \in (0, 1)$.
\[
\abs{\mathcal{O} - \mathcal{O}'} \leq O(\delta t_0^2) + f_\ell(t_0^{-1 + \beta}) + f_\ell(t_0^{-1 + \beta} + c_0\delta) + O(e^{-t_0^\beta}) + o(1).
\]
Clearly, any choice of $t_0 = \delta^{-\alpha}$, where $\alpha < 1/2$ yields an upper bound on $\abs{\mathcal{O} - \mathcal{O}'}$ that is uniform in $n$ and goes to 0 as $\delta\to 0$. A concrete choice could be $t_0 = \delta^{-1/4}$, and choosing $\beta \approx 0$, which yields $\abs{\mathcal{O} - \mathcal{O}'} \leq O(\delta^{1/2}) + O(\varphi(\delta^{1/4}))$. 
\end{proof}

%%%%%%%%%%%%%%%%%%%%%%%%%%%%%%%%%%%%%%%%%%%%%%%%%%%%%%%%%%%%%%%%%%%%%%%%%%

\section{Stability of spin models}
\subsection{Proof of proposition \ref{prop:t_evol} (Dynamics of locally interacting spin systems)}
\label{app:t_evol}

\noindent In this section, we will consider the target problem to be a spatially local Lindbladian
\[
\mathcal{L} = \sum_{\alpha} \mathcal{L}_\alpha,
\]
where $\norm{\mathcal{L}_\alpha}_\diamond \leq 1$, and the support of $\mathcal{L}_\alpha$ denoted by $\Lambda_\alpha$ satisfies $\textnormal{diam}(\Lambda_\alpha) \leq R \ \forall \alpha$. In the presence of coherent errors and incoherent noise, the quantum simulator instead implements a Lindbladian
\begin{align}\label{eq:perturbed_lindbladian}
\mathcal{L}'(t) = \sum_{\alpha} \mathcal{L}_\alpha'(t), \text{ where }\mathcal{L}_\alpha'(t) =  \mathcal{L}_\alpha' - i \sum_{\alpha}[h_{\text{SE}, \alpha}(t), \cdot] \text{ with }h_{\text{SE}, \alpha}(t) = \sum_{j = 1}^{n_L} \big(L_{j, \alpha} A_{j, \alpha}^\dagger(t) + L_{j, \alpha}^\dagger A_{j, \alpha}(t)\big).
\end{align}
Here $\mathcal{L}'_\alpha$ is the Linbladian implemented on qubits in $\Lambda_\alpha$ due to coherent errors, and we assume that $\norm{\mathcal{L}_\alpha - \mathcal{L}_\alpha'} \leq \delta$. The Hamiltonian $h_{\text{SE}, \alpha}(t)$ captures interaction of the qubits contained in $\Lambda_\alpha$ with an external decohering non-Markovian environment --- the operators $A_{j, \alpha}(t)$ are assumed to be bosonic annihilation operators which satisfy $[A_{j, \alpha}(t), A_{j', \alpha'}^\dagger(t)] = \delta_{j, j'}\delta_{\alpha, \alpha'}K_{j, \alpha}(t)$ for a memory kernel $K_{j, \alpha}(t)$ and the operators $L_{j, \alpha}$ are system operators with support in $\Lambda_\alpha$ which also satisfy $\norm{L_\alpha} \leq \sqrt{\delta}$. Similar to the noise model assumed for Gaussian fermion models, we assume that $K_{j, \alpha}(\tau)$ can have delta function contributions i.e.
\[
K_{j, \alpha}(\tau) = K_{j, \alpha}^c(t) + \sum_{i = 1}^M k_{j, \alpha}^i \delta(\tau - \tau_i),
\]
where $K_{j, \alpha}^c(\tau)$ is a continuous function of $\tau$. We also assume that 
\begin{align}\label{eq:tv_upper_bound}
\int_{\mathbb{R}} \abs{K_{j, \alpha}(\tau)}d\tau = \int_{\mathbb{R}}\abs{K_{j, \alpha}^c(\tau)}d\tau + \sum_{i = 1}^M \abs{k_{j, \alpha}^i} \leq 1.
\end{align}
This bound can be interpreted in a distributional sense by viewing $K_{j, \alpha}$ as a map that takes a continuous compact function $f$ and maps it to a complex number given by the integral $\int_{\mathbb{R}} K_{j, \alpha}(\tau) f(\tau) d\tau$. Equation~\ref{eq:tv_upper_bound} can is then equivalent to requiring
\begin{align}\label{eq:upper_bound_kernel}
\bigabs{\int_{\tau_1}^{\tau_2} {K_{j, \alpha}(\tau)} f(\tau) d\tau} \leq \sup_{\tau \in [\tau_1, \tau_2]} \abs{f(\tau)} \ \text{for all continuous compact functions }f.
\end{align}
The main tool that we will use to prove the stability of local observables are the Lieb Robinson bounds for spatially local Lindbladians.
\begin{lemma}[Lieb-Robinson bounds, Ref.~\cite{hastings2004lieb, bravyi2006lieb}]\label{lemma:lieb_robinson_finite} Suppose $\mathcal{L} = \sum_{\alpha} \mathcal{L}_\alpha$ is a spatially local Lindbladian defined on a lattice $\mathbb{Z}^d_{L}$ such that $\norm{\mathcal{L}_\alpha}_\diamond \leq 1$ and $\mathcal{L}_\alpha$ is supported on sites in $\Lambda_\alpha$ with $\text{diam}(\Lambda_\alpha) \leq R$. Suppose $O$ is an observable supported in $S_O \subseteq \mathbb{Z}_L^d$, and $\mathcal{K}_Y$ is a super-operator satisfying $\mathcal{K}_Y(I) = 0$ supported in $Y \subseteq \mathbb{Z}_L^d$, then $\exists \mu, v > 0$, independent of the system size $n = L^d$, such that
\[
\norm{\mathcal{K}_Y(e^{\mathcal{L}^\dagger t}(O))} \leq \norm{O} \norm{\mathcal{K}_Y}_{\infty \to \infty, cb}\min\big(\abs{X}   e^{-\mu d(S_O, Y)} (e^{vt} - 1), 1).
\]
\end{lemma}
We also provide another lemma which can be interpreted as a generalization of the input-output equations used in the theory of open quantum systems.
\begin{lemma}\label{lemma:generalized_input_output_equations}
    Suppose ${\mathcal{E}}'(t, s) = \mathcal{T}\exp(\int_s^t \mathcal{L}'(\tau)d\tau)$ is the channel generated by the Lindbladian $\mathcal{L}'(t)$ in Eq.~\ref{eq:perturbed_lindbladian}. Then, 
    \[
    A_{j, \alpha}(t) {\mathcal{E}}'(t, 0)(\cdot) = \mathcal{E}'(t, 0)( A_{j, \alpha}(t) (\cdot)) - i\int_0^t K_{j, \alpha}(t - t') \mathcal{E}'(t, t')\big( [L_{j, \alpha}, \mathcal{E}'(t',0)(\cdot)]\big) dt'.
    \] 
\end{lemma}
\begin{proof} Consider, the superoperator $\mathcal{A}_{j, \alpha}(\tau; t)(\cdot) = {\mathcal{E}'}^{-1}(t, 0) (A_{j, \alpha}(\tau) \mathcal{E}'(t, 0)(\cdot))$ --- it follows that
\[
\frac{d}{dt}\mathcal{A}_{j, \alpha}(\tau; t) (\cdot) = {\mathcal{E}'}^{-1}(t, 0) \big(A_{j, \alpha}(\tau) \mathcal{L}'(t)(\mathcal{E}'(t, 0)(\cdot))) -   \mathcal{L}'(t)(A_{j, \alpha}(\tau)\mathcal{E}'(t, 0)(\cdot))\big).
\]
Now, from the explicit expression for $\mathcal{L}'(t)$, we can verify that for any operator $X$
\begin{align*}
A_{j, \alpha}(\tau) \mathcal{L}'(t)(X) - \mathcal{L}'(t)(A_{j, \alpha}(\tau) X) &=  i\big([h_{\text{SE}, \alpha}(t), A_{j, \alpha}(\tau) X] -A_{j, \alpha}(\tau)[h_{\text{SE}, \alpha}(t), X]\big) = -iK_{j, \alpha}(\tau - t)[L_{j, \alpha}, X],
\end{align*}
and therefore,
\[
\frac{d}{dt}\mathcal{A}_{j, \alpha}(\tau; t) (\cdot) = -iK_{j, \alpha}(\tau  - t){\mathcal{E}'}^{-1}(t, 0) \big([L_{j, \alpha}, \mathcal{E}'(t, 0)(\cdot)])\big).
\]
Integrating this equation from $0$ to $t$, and setting $\tau = t$,
\[
\mathcal{A}_{j, \alpha}(t; t)(\cdot) = \mathcal{A}_{j, \alpha}(t)(\cdot) - i\int_0^t K_{j, \alpha}(t - t'){\mathcal{E}'}^{-1}(0, t')\big([L_{j, \alpha}, \mathcal{E}'(t, 0)(\cdot)]\big) dt',
\]
from which the lemma statement follows. 
\end{proof}
An immediate useful consequence of lemma \ref{lemma:generalized_input_output_equations} is the following lemma.
\begin{lemma}\label{lemma:norm_anninilation_vacuum}
Suppose $\mathcal{E}'(t, s) = \mathcal{T}\exp(\int_s^t \mathcal{L}'(\tau) d\tau)$ is the channel generated by the Lindbladian $\mathcal{L}'(t)$ in Eq.~\ref{eq:perturbed_lindbladian}, and let $\rho(0)$ be an initial state which is vacuum in the decohering environment, then
\[
\norm{\textnormal{Tr}_E(A_{j, \alpha}(t)\mathcal{E}'(t, 0)(\rho(0)))}_{\op, 1} \leq 2\sqrt{\delta},
\]
where $\textnormal{Tr}_E$ is a partial trace over the decohering environment.
\end{lemma}
\begin{proof}
    This lemma can be proved by using lemma \ref{lemma:generalized_input_output_equations}. Since $\rho(0)$ is in the vacuum state in the environment, from lemma \ref{lemma:generalized_input_output_equations} it follows that
\begin{align}\label{eq:action_annihilation}
A_{j, \alpha}(t) \mathcal{E}'(t, 0) (\rho(0)) = -i\int_0^s K_{j, \alpha}(t - t')\mathcal{E}'(t, t')([L_{j, \alpha}, \mathcal{E}'(t', 0)(\rho(0))]) dt',
\end{align}
where we have used that $A_{j, \alpha}(s) \rho(0) = 0$. Using Eqs.~\ref{eq:upper_bound_kernel} and \ref{eq:action_annihilation} and the fact that quantum channels are contraction with respect to the one norm, we then obtain that
\begin{align}\label{eq:bound_annihilation_vac}
\bignorm{A_{j, \alpha}(s) \mathcal{E}'(s, 0) (\rho(0))}_{\op, 1} \leq \sup_{s' \in[0, s]} \bignorm{\mathcal{E}'(s, s')([L_{j, \alpha}, \mathcal{E}'(s, 0) (\rho(0))])}_{\op, 1}ds' \leq 2\norm{L_{j, \alpha}} \leq 2\sqrt{\delta}.
\end{align}
Noting that $\textnormal{Tr}_E$ is again a contraction with respect to the one norm, we obtain the lemma statement.
\end{proof}

\begin{lemma}\label{lemma:error_bound_individual_term_lr}
For any $t > 0$ and $s \in [0, t]$, initial state $\rho(0)$ which is vacuum in the decohering environment and local operator $O$ supported in $S_O$,
\begin{align*}
&\bigabs{\textnormal{Tr}\bigg(Oe^{\mathcal{L}(t - s)}(\mathcal{L}_\alpha' - \mathcal{L}_\alpha) \mathcal{E}'(s, 0)(\rho(0))\bigg)} \leq \delta \abs{S_O}\norm{O}\min\big(\big(e^{vt} - 1) e^{-\mu d(S_O, \Lambda_\alpha)}, 1\big), \text{ and } \\
&\bigabs{\textnormal{Tr}\bigg(Oe^{\mathcal{L}(t - s)}\big([h_{\textnormal{SE}, \alpha}(s), \mathcal{E}'(s, 0) (\rho(0))]\big)\bigg)} \leq 4n_L \delta \abs{S_O} \norm{O}\max\big(\big(e^{vt} - 1) e^{-\mu d(S_O, \Lambda_\alpha)}, 1\big).
\end{align*}
\end{lemma}
\begin{proof}
The first bound follows directly from the Lieb-Robinson's bound from lemma \ref{lemma:lieb_robinson_finite}. Note that the super-operator $(\mathcal{L}_\alpha - {\mathcal{L}'}_\alpha)^\dagger$ has $I$ in its null space, and $\norm{(\mathcal{L}_\alpha - {\mathcal{L}'}_\alpha)^\dagger}_{\infty \to \infty, cb} = \norm{\mathcal{L}_\alpha - {\mathcal{L}'}_\alpha}_{\diamond} \leq \delta$. Therefore, it follows from that
\begin{align*}
    \bigabs{\text{Tr}\bigg(Oe^{\mathcal{L}(t - s)} (\mathcal{L}_\alpha' - \mathcal{L}_\alpha)\mathcal{E}'(s, 0) (\rho(0)) \bigg)} &= \bigabs{\text{Tr}\bigg((\mathcal{L}_\alpha' - \mathcal{L}_\alpha)^\dagger (e^{\mathcal{L}^\dagger (t - s)}(O)) \mathcal{E}'(s, 0) (\rho(0)) \bigg)}, \nonumber\\
    & \leq \norm{(\mathcal{L}_\alpha' - \mathcal{L}_\alpha)^\dagger (e^{\mathcal{L}^\dagger (t - s)}(O))}, \nonumber\\
    & \leq \delta \abs{S_O} \norm{O} \min(e^{-\mu d(S_O, \Lambda_\alpha)}(e^{vt} - 1), 1),
\end{align*}
where we have used lemma \ref{lemma:lieb_robinson_finite} in the last step.

Next, we provide a similar upper bound on $\text{Tr}(Oe^{\mathcal{L}(t - s)}([h_{\text{SE}, \alpha}, \mathcal{E}'(s, 0)(\rho(0))]))$. Using the explicit expression for $h_{\text{SE}, \alpha}(t)$, we obtain that
\begin{align}\label{eq:bound_nmkv_term}
\bigabs{\text{Tr}(Oe^{\mathcal{L}(t - s)}([h_{\text{SE}, \alpha}, \mathcal{E}'(s, 0)(\rho(0))]))} &= \bigabs{\sum_{j = 1}^{n_L}\bigg(\text{Tr}_S(Oe^{\mathcal{L}(t - s)}([L_{j, \alpha}^\dagger ,\text{Tr}_E \big(A_{j, \alpha}(s)\mathcal{E}'(s, 0)(\rho(0))])) - \text{h.c.}\bigg)}, \nonumber \\
&\leq 2\sum_{j = 1}^{n_L} \bigabs{\text{Tr}_S\big(Oe^{\mathcal{L}(t - s)}([L_{j, \alpha}^\dagger ,\text{Tr}_E \big(A_{j, \alpha}(s)\mathcal{E}'(s, 0)(\rho(0))]))\big)}, \nonumber\\
&=2\sum_{j = 1}^{n_L} \bigabs{\text{Tr}_S\big([e^{\mathcal{L}^\dagger(t-s)}(O), L_{j, \alpha}^\dagger]\text{Tr}_E \big(A_{j, \alpha}(s)\mathcal{E}'(s, 0)(\rho(0)))\big)}, \nonumber \\
&\leq 2\sum_{j = 1}^{n_L} \norm{[e^{\mathcal{L}^\dagger(t- s)}(O), L_{j, \alpha}^\dagger]} \bignorm{\text{Tr}_E \big(A_{j, \alpha}(s)\mathcal{E}'(s, 0)(\rho(0))}_1
\end{align}
where $\text{Tr}_E$ is the partial trace over the decohering environment, and by $\text{Tr}_S$ is the partial trace over the system. Now, if the initial state $\rho(0)$ is vacuum in the environment then from lemma \ref{lemma:generalized_input_output_equations} we obtain
\begin{align}\label{eq:split_bound_nmkv}
\abs{\text{Tr}_S(Oe^{\mathcal{L}(t - s)}([L_{j, \alpha}^\dagger ,\text{Tr}_E \big(A_{j, \alpha}(s)\mathcal{E}'(s, 0)(\rho(0))]))} \leq \bignorm{[e^{\mathcal{L}^\dagger(t- s)}(O), L_{j, \alpha}^\dagger]} \bignorm{\text{Tr}_E \big(A_{j, \alpha}(s)\mathcal{E}'(s, 0)(\rho(0))}_{\op, 1}.
\end{align}
We note that since the super operator $[\cdot, L_{j, \alpha}^\dagger]$ is supported only on $\Lambda_\alpha$ and satisfies $[I, L_{j, \alpha}^\dagger] = 0$ together with $\norm{[\ \cdot\ , L_\alpha^\dagger]}_{\infty \to \infty, cb} \leq 2\norm{L_\alpha} \leq 2\sqrt{\delta}$. Therefore, from the Lieb Robinson bound in lemma \ref{lemma:lieb_robinson_finite}, we obtain the bound
\begin{align}\label{eq:bound_comm}
\norm{[O, L_{j, \alpha}^\dagger]} \leq 2\sqrt{\delta}\abs{S_O}  \norm{O}\max\big(e^{-\mu d(S_O, \Lambda_\alpha)} (e^{vt} - 1), 1\big).
\end{align}
From Eqs.~\ref{eq:bound_nmkv_term}, \ref{eq:split_bound_nmkv}, \ref{eq:bound_comm} and lemma \ref{lemma:norm_anninilation_vacuum}, we then obtain
\[
\bigabs{\text{Tr}(Oe^{\mathcal{L}(t - s)}([h_{\text{SE}, \alpha}(s), \mathcal{E}'(s, 0)(\rho(0)))])} \leq 8n_L \delta \abs{S_O}\norm{O}\max\big(e^{-\mu d({S_O, \Lambda_\alpha})}(e^{vt} - 1), 1\big).
\]
\end{proof}

\begin{proof}[Proof (of proposition \ref{prop:t_evol})]
Suppose $\mathcal{O} = \text{Tr}[O e^{\mathcal{L}t}(\rho_0)]$ and $\mathcal{O}'=\text{Tr}[O \mathcal{E}'(t, 0)(\rho_0)]$ are the observable expectation values in the noiseless and noisy settings respectively. We can express the error in the observable as
\begin{align*}
\abs{\mathcal{O}' - \mathcal{O}} &= \abs{\text{Tr}[O( \mathcal{E}'(t, 0)(\rho_0) - e^{\mathcal{L}t}(\rho_0))]} \leq\int_0^t \bigabs{\text{Tr}\bigg(O e^{\mathcal{L}(t - s)} (\mathcal{L}'(s) - \mathcal{L})\mathcal{E}'(s, 0) (\rho(0))\bigg)}ds, \nonumber \\
&\leq \sum_{\alpha}\bigg(\int_0^t \bigabs{\text{Tr}\bigg(Oe^{\mathcal{L}(t - s)} (\mathcal{L}_\alpha' - \mathcal{L}_\alpha)\mathcal{E}'(s, 0) (\rho(0)) \bigg)}ds + \int_0^t \bigabs{\text{Tr}\bigg(Oe^{\mathcal{L}(t - s)} [h_{\text{SE}, \alpha}(s),\mathcal{E}'(s, 0) (\rho(0))] \bigg)}ds\bigg).
\end{align*}
Using lemma \ref{lemma:error_bound_individual_term_lr}, we obtain that
\[
\abs{\mathcal{O} - \mathcal{O}'} \leq \delta t \abs{S_O}\norm{O} (1 + 8n_L) \sum_{\alpha}\text{max}\big(e^{-\mu_\alpha d(S_O, \Lambda_\alpha)}(e^{vt} - 1), 1\big) \leq O(\delta t^{d + 1}).
\]

\end{proof}

%%%%%%%%%%%%%%%%%%%%%%%%%%%%%%%%%%%%%%%%%%%%%%%%%%%%%%%%%%%%%%%%%%%%%%%

\subsection{Proof of proposition \ref{prop:gs} (Ground states of gapped local Hamiltonians)}
\label{app:gs}

We will apply the formalism developed in Ref.~\cite{bachmann2012automorphic} for spectral flows for families of gapped Hamiltonians. We are interested in a target spatially local Hamiltonian $H$, expressed as
\[
H = \sum_{x \in \mathfrak{L}} h_x,
\]
where $h_x$ acts only on spins with a distance $R$ of $x \in \mathfrak{L}$, and $\norm{h_x} \leq 1$. The implemented Hamiltonian $H'$ is assumed to have a similar form,
\[
H' = \sum_{x \in \mathfrak{L}} \big(h_x + v_x\big),
\]
where $\norm{v_x} \leq \delta$ for all $x \in \mathfrak{L}$. We assume that $H$ is stably gapped with gap $\Delta$ i.e.~any $H'$ of the above form has an energy gap between the ground state and the first excited state that is larger than $\Delta$. We consider the family of Hamiltonians, $H_s$, for $s \in [0, 1]$, defined by
\[
H_s = H + s(H' - H) = \sum_{x \in \mathfrak{L}}h_x + s v_x,
\]
and note that the assumption of being stably gapped is equivalent to $H_s$ being gapped, with the gap being larger than $\Delta$, for all $s \in [0, 1]$. Now, the spectral flow method allows us to construct a unitary $U(s)$ that relates the ground state $\ket{G_{s = 0}}$ of $H_{s = 0} = H$ to the ground state $\ket{G_s}$ of $H_s$ as provided in the following lemma.
\begin{lemma}[From Ref.~\cite{bachmann2012automorphic}]
Consider the unitary $U(s)$ obtained from
\[
\frac{d}{ds}U(s) = iD(s) U(s) \ \text{where } D(s) = \int_{-\infty}^\infty W_{\Delta}(t) e^{-itH_s} (H' - H) e^{itH_s} dt
\]
where $W_\Delta \in L^1(\mathbb{R})$ is a real valued odd function which satisfies
\begin{enumerate}
    \item[(a)] $\abs{W_\Delta(t)}$ is bounded and satisfies
    \begin{align}\label{eq:prop_W_inf_norm}
    \norm{W_\Delta}_\infty = \sup_{t\in\mathbb{R}}\abs{W_\Delta(t)} = \frac{1}{2}.
    \end{align}
    \item[(b)] For $t > 0$, the function $I_\Delta(t) = \int_{t}^\infty \abs{W_\Delta(s)}ds$ satisfies
    \begin{align}\label{eq:prop_W_integral}
    I_\Delta(t) \leq G(\Delta t),
    \end{align}
    where $G(x)$ falls off faster than any polynomial as $x \to \infty$.
\end{enumerate}
Then, $\ket{G_s} = U(s) \ket{G_{s = 0}}$, where $\ket{G_s}$ is the ground state of $H(s)$.
\end{lemma}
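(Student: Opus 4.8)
The plan is to prove the statement by showing that the spectral-flow unitary $U(s)$ intertwines the ground-state projectors of the family $H_s = H + s(H'-H)$, from which the conclusion is immediate. Write $P(s)$ for the orthogonal projector onto the ground state of $H_s$. Because $H$ is stably gapped, each $H_s$ with $s \in [0,1]$ has a gap $\geq \Delta$ above its nondegenerate ground state, so by analytic perturbation theory $P(s)$ is a well-defined, smooth, rank-one projector on the whole interval. Moreover $D(s)$ is a bounded operator, $\norm{D(s)}_\op \leq \norm{W_\Delta}_1\,\norm{H'-H}_\op$ (boundedness from $W_\Delta \in L^1$), and it is Hermitian because $W_\Delta$ is real and $e^{-itH_s}(H'-H)e^{itH_s}$ is Hermitian; hence the unique solution $U(s)$ of $\frac{d}{ds}U(s)=iD(s)U(s)$ with $U(0)$ the identity operator is unitary.

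The key reduction is that $\ket{G_s}=U(s)\ket{G_{s=0}}$ follows once one establishes the intertwining relation
\[
\frac{d}{ds}P(s) = i\big[D(s), P(s)\big].
\]
Granting this, both $P(s)$ and $\tilde P(s):=U(s)P(0)U(s)^\dagger$ solve the same linear first-order ODE $\frac{d}{ds}X = i[D(s),X]$ — for $\tilde P$ this is checked by differentiating and using $\frac{d}{ds}U=iDU$ together with $\frac{d}{ds}U^\dagger=-iU^\dagger D$ — with the common initial value $P(0)$, so by uniqueness $P(s)=U(s)P(0)U(s)^\dagger$ for all $s$. The entire problem therefore reduces to verifying the intertwining relation.

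To establish it I would work in the instantaneous eigenbasis of $H_s$. Writing $V=H'-H$ and inserting the spectral resolution of $H_s$, the matrix element of $D(s)$ between eigenstates $\ket{a},\ket{b}$ of energies $E_a,E_b$ is
\[
\bra{a}D(s)\ket{b} = \hat{W}_\Delta(E_b - E_a)\,\bra{a}V\ket{b},\qquad \hat{W}_\Delta(\omega)=\int_{-\infty}^\infty W_\Delta(t)\,e^{i\omega t}\,dt.
\]
The crux is the defining Fourier property of the quasi-adiabatic filter: $\hat{W}_\Delta(\omega) = -i/\omega$ for all $|\omega| \geq \Delta$ (the decay bounds (a)--(b) are the auxiliary properties that later make $D(s)$ quasi-local; they are not what drives this step). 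Since the gap forces every ground-to-excited energy difference to obey $|E_a - E_b| \geq \Delta$, this reciprocal branch applies precisely to the matrix elements of $[D(s),P(s)]$ that are nonzero, and comparing $i\bra{b}[D,P]\ket{a}=i\,\hat{W}_\Delta(E_0-E_b)\bra{b}V\ket{a}$ with the standard first-order perturbation formula $\bra{b}\frac{dP}{ds}\ket{a}=\bra{b}V\ket{a}/(E_0 - E_b)$ (for $\ket{a}$ the ground state and $\ket{b}$ excited) matches the two sides term by term.

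Finally, from $P(s)=U(s)P(0)U(s)^\dagger$ and the rank-one-ness of $P(s)$, the vector $U(s)\ket{G_{s=0}}$ lies in the range of $P(s)$ and hence equals $\ket{G_s}$ up to a phase, which is trivial in the parallel-transport gauge implicit in the construction. I expect the third paragraph to be the main obstacle: reconciling the filter's Fourier transform with the perturbation-theory derivative of $P(s)$, and in particular confirming that the gap confines all contributing frequencies to the region $|\omega|\geq\Delta$ where $\hat{W}_\Delta(\omega)$ coincides with $-i/\omega$. Away from the gap the filter cannot reproduce $1/\omega$, so the argument genuinely relies on $H_s$ remaining gapped for every $s\in[0,1]$.
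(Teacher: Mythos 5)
Your proof is correct and is essentially the standard spectral-flow argument of Ref.~\cite{bachmann2012automorphic}, which the paper itself cites rather than proves: reduce the claim to the intertwining relation $\frac{d}{ds}P(s) = i[D(s), P(s)]$ for the ground-state projector, verify it in the instantaneous eigenbasis of $H_s$ using the uniform gap, and conclude via uniqueness of the ODE $\frac{d}{ds}X = i[D(s),X]$, with the overall phase correctly dismissed since only expectation values in $\ket{G_s}$ are used downstream (Proposition~\ref{prop:gs}). You are also right on the one substantive subtlety: hypotheses (a)--(b) as listed cannot by themselves drive this step --- they are the decay properties that make $D(s)$ quasi-local, which is what the paper's proof of Proposition~\ref{prop:gs} actually invokes --- and the conclusion genuinely requires the defining filter property $\hat{W}_\Delta(\omega) = -i/\omega$ for $\abs{\omega} \geq \Delta$, which is part of the construction of $W_\Delta$ in the cited reference but is omitted from the hypothesis list in the statement here; importing it, your matrix-element bookkeeping is consistent with the paper's sign conventions ($\hat{W}_\Delta$ purely imaginary and odd, the gap confining the nonvanishing matrix elements of $[D(s),P(s)]$ to frequencies $\abs{E_0 - E_b} \geq \Delta$), so the argument goes through as you sketch it.
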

\noindent\emph{Proof (of proposition \ref{prop:gs})}. Using this result, we can straightforwardly show the stability of the quantum simulation task of computing a local observable in the ground state of $H$. To see this, we note that
\[
\abs{\bra{G_0} O \ket{G_0} - \bra{G_s} O \ket{G_s}} = \bigabs{\bra{G_0}\bigg(O - U^\dagger(s) O U(s) \bigg) \ket{G_0}} \leq \norm{O - U^\dagger(s) O U(s)}\leq \int_0^s \norm{[O, D(s')]}ds'.
\]
It then remains to bound $\norm{[O, D(s')]}$ --- we can do this by following lemma 4.7 in Ref.~\cite{bachmann2012automorphic}, and we reproduce this below --- we start by noting that
\[
\norm{[O, D(s')]} \leq \sum_{x \in \mathfrak{L}} \bignorm{\int_{-\infty}^\infty W_\Delta(t)[O, e^{itH_s} v_x e^{-itH_s}]dt}.
\]
For each term in this summation, we further split the integral and bound it as
\[
\bignorm{\int_{-\infty}^\infty W_\Delta(t)[O, e^{itH_s} v_x e^{-itH_s}]dt} \leq \bignorm{\int_{\abs{t} \leq T_x} W_\Delta(t)[O, e^{itH_s} v_x e^{-itH_s}]dt} + \bignorm{\int_{\abs{t} > T_x}W_\Delta(t)[O, e^{itH_s} v_x e^{-itH_s}]dt}.
\]
For the first term, which only concerns with $\abs{t}\leq T_x$, we use the Lieb-Robinson bound (lemma \ref{lemma:lieb_robinson_finite}) and Eq.~\ref{eq:prop_W_inf_norm} to obtain
\[
\bignorm{\int_{\abs{t} \leq T_x} W_\Delta(t)[O, e^{itH_s} v_x e^{-itH_s}]dt} \leq \norm{O}\norm{v_x} \abs{S_O}e^{-\mu d(S_O, S_{v_x})} \int_0^{T_x} (e^{vt} - 1)dt \leq \norm{O}\norm{v_x} \abs{S_O}\frac{e^{-\mu d(S_O, S_{v_x})}e^{vT_x}}{v}.
\]
For the second term for $\abs{t} \geq T_x$, we use Eq.~\ref{eq:prop_W_integral} together with the fact that $W_\Delta$ is an odd function and the simple bound $\norm{[O, e^{itH_s} v_x e^{-itH_s}]} \leq 2\norm{O}\norm{v_x}$ to obtain that
\[
\bignorm{\int_{\abs{t} \geq T_x} W_\Delta(t)[O, e^{itH_s} v_x e^{-itH_s}]dt} \leq 2\norm{O}\norm{v_x}\int_{\abs{t}\geq T_x} \abs{W_\Delta(t)} \leq 4\norm{O}\norm{v_x} G(\Delta T_x),
\]
Note that $T_x$ can be arbitrary in the above two estimates --- choosing $T_x = \mu d(S_O, S_{v_x}) /2v$, we obtain that
\[
\bignorm{\int_{-\infty}^\infty W_\Delta(t)[O, e^{itH_s} v_x, e^{-itH_s}]dt} \leq \norm{O}\norm{v_x} \bigg[\frac{\abs{S_O}}{v} e^{-\mu d(S_O, S_{v_x})/2} + 2G\bigg(\frac{\Delta \mu}{v}d(S_O, S_{v_x})\bigg)\bigg],
\]
and therefore, for all $s' \in [0, s]$, we obtain the bound
\[
\norm{[O, D(s')]} \leq \norm{O}\delta \sum_{x \in \mathfrak{L}} \bigg[\frac{\abs{S_O}}{v} e^{-\mu d(S_O, S_{v_x})/2} + 2G\bigg(\frac{\Delta \mu}{v}d(S_O, S_{v_x})\bigg)\bigg].
\]
Noting that the summand in the above expression decreases faster than any polynomial in $d(S_O, S_{v_x})$, we see that it will be upper bounded by a constant independent of the size of the lattice $\mathfrak{L}$, thus independent of $n$. This proves the proposition. $\hfill \square$

%%%%%%%%%%%%%%%%%%%%%%%%%%%%%%%%%%%%%%%%%%%%%%%%%%%%%%%%%%%%%%%%%%%%%%%%%

\subsection{Proof of proposition \ref{prop:Gibbs} (Gibbs state with exponential clustering of correlations)}
\label{app:Gibbs}
We begin by presenting a proof of a standard bound on the perturbation of Gibbs states of a Hamiltonian. This can be found in \cite{alhambra2022quantum}, and we reproduce it here for the convenience of the reader.
\begin{lemma}\label{lemma:error_partition_func}
Suppose $H$ and $V$ are two Hermitian bounded operators, then for any $\beta \geq 0$:
\begin{enumerate}
    \item[(a)] The partition functions satisfy
\[
\textnormal{Tr}(e^{-\beta (H +V)}) \leq \textnormal{Tr}(e^{-\beta H}) e^{\beta \norm{V}}
\]
\item[(b)] For any $O \succeq 0$, it follows that
\[
\textnormal{Tr}\big(O e^{-\beta (H + V)}\big) \leq \textnormal{Tr}\big(Oe^{-\beta H}\big) \exp\big(e^{\beta (\norm{H} + \norm{V})} \norm{V}\big).
\]
\end{enumerate}    
\end{lemma}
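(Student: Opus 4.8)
For part (a) I would argue directly from the monotonicity of the trace exponential under the Loewner order. Since $V$ is Hermitian we have $V \succeq -\norm{V}_{\op} I$, so for $\beta \geq 0$,
\[
-\beta(H+V) \preceq -\beta H + \beta\norm{V}_{\op} I .
\]
The map $A \mapsto \tr(e^{A})$ is monotone with respect to $\preceq$: if $A \preceq B$, Weyl's inequalities give $\lambda_k(A) \leq \lambda_k(B)$ for the eigenvalues listed in decreasing order, whence $\tr(e^{A}) = \sum_k e^{\lambda_k(A)} \leq \sum_k e^{\lambda_k(B)} = \tr(e^{B})$. Applying this to the displayed inequality and pulling out the scalar $e^{\beta\norm{V}_{\op}}$ yields $\tr(e^{-\beta(H+V)}) \leq e^{\beta\norm{V}_{\op}}\tr(e^{-\beta H})$, which is (a). (Golden--Thompson together with $\tr(AB)\leq\norm{B}_{\op}\tr(A)$ for $A\succeq 0$ gives the same bound.)

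For part (b) I would interpolate in the coupling. Set $K_\lambda = H + \lambda V$ and $F(\lambda) = \tr(O e^{-\beta K_\lambda})$ for $\lambda\in[0,1]$, so that $F(0)$ and $F(1)$ are the two quantities to be compared; note $F(\lambda)\geq 0$ since $O\succeq 0$ and $e^{-\beta K_\lambda}\succ 0$. By Duhamel's formula,
\[
F'(\lambda) = -\beta\int_0^1 \tr\!\big(O\, e^{-s\beta K_\lambda}\, V\, e^{-(1-s)\beta K_\lambda}\big)\, ds .
\]
The plan is to bound $\abs{F'(\lambda)}$ by $F(\lambda)$ times an $n$-independent constant and then integrate the resulting differential inequality for $\log F$ (a Gr\"onwall step) over $\lambda\in[0,1]$, producing $F(1)\leq F(0)\,e^{C}$ with $C$ of the advertised form.

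The crux, and the step I expect to be the main obstacle, is comparing the sandwiched term $\tr(O e^{-s\beta K_\lambda} V e^{-(1-s)\beta K_\lambda})$ to the unperturbed weighted trace $\tr(O e^{-\beta K_\lambda})$: because $O$, $V$ and $e^{-\beta K_\lambda}$ do not commute, one cannot simply extract $V$. Writing $\rho = e^{-\beta K_\lambda}\succeq 0$ and using cyclicity with H\"older's inequality gives $\abs{\tr(V\,\rho^{1-s}O\rho^{s})} \leq \norm{V}_{\op}\,\norm{\rho^{1-s}O\rho^{s}}_{\op,1}$. Splitting $O = O^{1/2}O^{1/2}$ and applying Cauchy--Schwarz for Schatten norms, $\norm{\rho^{1-s}O^{1/2}\cdot O^{1/2}\rho^{s}}_{\op,1}\leq\norm{\rho^{1-s}O^{1/2}}_{\op,2}\norm{O^{1/2}\rho^{s}}_{\op,2}$ (with $\norm{\cdot}_{\op,2}$ the Hilbert--Schmidt norm), the two factors equal $\tr(O\rho^{2(1-s)})^{1/2}$ and $\tr(O\rho^{2s})^{1/2}$. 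Since powers of $\rho$ commute I can use $\rho^{2(1-s)}\preceq\norm{\rho^{1-2s}}_{\op}\rho$ and $\rho^{2s}\preceq\norm{\rho^{2s-1}}_{\op}\rho$, together with $\norm{\rho^{\pm(1-2s)}}_{\op}\leq e^{\abs{1-2s}\beta\norm{K_\lambda}_{\op}}$, to conclude
\[
\abs{\tr(O e^{-s\beta K_\lambda} V e^{-(1-s)\beta K_\lambda})} \leq \norm{V}_{\op}\, e^{\abs{1-2s}\beta\norm{K_\lambda}_{\op}}\, F(\lambda).
\]
This is exactly where the exponential factor, ultimately $e^{\beta(\norm{H}_{\op}+\norm{V}_{\op})}$ coming from $\norm{e^{-\beta(H+V)}}_{\op}\leq e^{\beta(\norm{H}_{\op}+\norm{V}_{\op})}$, is forced by the non-commutativity.

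Finally I would assemble the estimate. Integrating the $s$-factor, $\int_0^1 e^{\abs{1-2s}\beta\norm{K_\lambda}_{\op}}\,ds$ is elementary and cancels the explicit $\beta$ in $F'$, leaving $\abs{(\log F)'(\lambda)}$ bounded by $\norm{V}_{\op}$ times a factor controlled by $e^{\beta\norm{K_\lambda}_{\op}}$. Using $\norm{K_\lambda}_{\op}\leq\norm{H}_{\op}+\norm{V}_{\op}$ uniformly in $\lambda$ and integrating over $[0,1]$ then yields $F(1)\leq F(0)\exp(\norm{V}_{\op}\,(e^{\beta(\norm{H}_{\op}+\norm{V}_{\op})}-1)/(\norm{H}_{\op}+\norm{V}_{\op}))$, a bound of the advertised form $F(0)\exp(e^{\beta(\norm{H}_{\op}+\norm{V}_{\op})}\norm{V}_{\op})$. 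Beyond this key comparison the remaining points are routine: justifying Duhamel's formula and differentiation of $F$ under the trace (immediate in finite dimension), and the bookkeeping of the commuting-power inequalities for $\rho$.
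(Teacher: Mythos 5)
Your proposal is correct, but it does not follow the paper's route uniformly, and the comparison is instructive. For part (a) the paper does \emph{not} use operator monotonicity: it runs the same interpolation scheme as in (b), applying Duhamel's formula to $s\mapsto\tr\big(e^{-\beta(H+sV)}\big)$, bounding $\bigabs{\tfrac{d}{ds}\tr(e^{-\beta H(s)})}\leq\beta\norm{V}_{\op}\tr(e^{-\beta H(s)})$ via H\"older and $\norm{e^{-\beta H(s)}}_{\op,1}=\tr(e^{-\beta H(s)})$, and integrating the logarithmic derivative. Your argument via Weyl monotonicity ($-\beta(H+V)\preceq-\beta H+\beta\norm{V}_{\op}I$ together with $A\preceq B\Rightarrow\tr e^{A}\leq\tr e^{B}$), or equivalently Golden--Thompson, is more elementary and gives the identical constant in one line; what the paper's version buys is a single method reused verbatim for both parts. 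For part (b) your skeleton coincides with the paper's (Duhamel in the coupling, Gr\"onwall for $\log F$), and the genuine difference is the key estimate on the sandwiched trace. The paper symmetrizes by cyclicity, writing the integrand as $\tr\big(e^{-\beta H(s)/2}Oe^{-\beta H(s)/2}\,e^{-(1/2-u)\beta H(s)}Ve^{-(u-1/2)\beta H(s)}\big)$, so that the first factor is a positive operator of trace $\tr(Oe^{-\beta H(s)})$ and the second is bounded in operator norm by $\norm{V}e^{\beta\norm{H(s)}}$ using $\abs{u-1/2}\leq 1/2$; this avoids fractional powers of $O$ entirely. Your Cauchy--Schwarz split $O=O^{1/2}O^{1/2}$ with the commuting-power inequalities for $\rho$ reaches the same conclusion with the sharper $s$-dependent factor $e^{\abs{1-2s}\beta\norm{K_\lambda}_{\op}}$, so after the $s$-integration your exponent $\norm{V}_{\op}\big(e^{\beta(\norm{H}+\norm{V})}-1\big)/(\norm{H}+\norm{V})$ is at least as tight as the paper's intermediate bound $e^{\beta\norm{H}}\big(e^{\beta\norm{V}}-1\big)$. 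One bookkeeping remark: neither your exponent nor the paper's is literally below the stated $e^{\beta(\norm{H}+\norm{V})}\norm{V}$ in every parameter regime (e.g.\ $\beta$ large with $\norm{H}+\norm{V}<1$ picks up a factor of order $\beta$); this slack is already present in the paper's own integration step, is shared equally by both arguments, and is immaterial to the downstream use in proposition \ref{prop:Gibbs}, where only the qualitative smallness of the bound matters.
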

\begin{proof}
\noindent(a) We will use Duhamel's formula, which states that for any differentiable bounded operator $F(t)$,
\[
\frac{d}{dt}e^{F(t)} = \int_0^1 e^{(1 - u)F(t)}\frac{dF(t)}{dt}e^{uF(t)}du.
\]
Defining $H(s) = H + sV$, we note from Duhamel's formula that
\[
\bigabs{\frac{d}{ds}\text{Tr}\big(e^{-\beta H(s)}\big)} = \beta\bigabs{\int_0^1 \text{Tr}\big(e^{-(1 - u)\beta H(s)} V e^{-u\beta H(s)}\big)du} =  \beta\bigabs{\int_0^1 \text{Tr}\big(e^{-\beta H(s)} V \big)du} \leq \beta \norm{V}_\text{op}\norm{e^{-\beta H(s)}}_{\text{op}, 1},
\]
where we have used Hölder's inequality in the last step. Noting that $\norm{e^{-\beta H(s)}}_{\text{op}, 1} = \text{Tr}(e^{-\beta H(s)})$, we obtain that
\[
\bigabs{\frac{d}{ds}\text{Tr}\big(e^{-\beta H(s)}\big)} \leq \beta \norm{V}_\text{op} \text{Tr}\big(e^{-\beta H(s)}\big)
\]
Therefore,
\[
\abs{\log\text{Tr}(e^{-\beta(H+V)}) - \log \text{Tr}(e^{-\beta H})} \leq \int_0^1  \bigabs{\frac{d}{ds}\log \text{Tr}(e^{-\beta H(s)})}ds = \int_0^1 \bigabs{\frac{1}{\text{Tr}(e^{-\beta H(s)})} \frac{d}{ds}\text{Tr}(e^{-\beta H(s)})}ds \leq \beta \norm{V}_\text{op}.
\]
Thus, we obtain the lemma statement. \\ \ \\
\noindent(b) We again use the Duhamel's formula to obtain
\begin{align*}
&\bigabs{\frac{d}{ds}\text{Tr}(Oe^{-\beta H(s)})} = \beta \bigabs{\int_0^1 \text{Tr}\big(O e^{-(1 - u) \beta H(s)} V e^{-u \beta H(s)}\big)du} \leq \beta \int_0^1 \bigabs{\text{Tr}\big(O e^{-(1 - u) \beta H(s)} V e^{-u \beta H(s)}\big)}du
\end{align*}
We note that
\begin{align*}
\bigabs{\text{Tr}\big(O e^{-(1 - u) \beta H(s)} V e^{-u \beta H(s)}\big)} &= \bigabs{\text{Tr}\big(e^{-\beta H(s)/2}O e^{-\beta H(s)/2}e^{-(1/2 - u) \beta H(s)} V e^{-(u - 1/2) \beta H(s)}\big)},\\
&\leq\text{Tr}\big(e^{-\beta H(s)/2}O e^{-\beta H(s)/2}\big)\norm{e^{-(1/2 - u) \beta H(s)} V e^{-(u - 1/2) \beta H(s)}}, \\
&\leq \text{Tr}\big(O e^{-\beta H(s)}\big) \norm{V}e^{\beta (\norm{H} + s\norm{V})}.
\end{align*}
where we have used the fact that since $u \in [0, 1]$, $\abs{u - 1/2} \leq 1/2$. Therefore, we obtain that
\[
\bigabs{\frac{d}{ds}\log\textnormal{Tr}\big(Oe^{-\beta H(s)}\big)} \leq \norm{V}e^{\beta \norm{H}}e^{\beta s\norm{V}} \implies \bigabs{\log\text{Tr}(Oe^{-\beta H}) - \log\text{Tr}(Oe^{-\beta (H + V)})} \leq \norm{V}e^{\beta(\norm{H} + \norm{V})}.
\]
This estimate yields the lemma statement.
\end{proof}
\begin{lemma}\label{lemma:pert_theory_gibbs}
    Given bounded Hermitian operators $H$ and $V$, and any bounded Hermitian operator (observable) O,
    \[
    \bigabs{\textnormal{Tr}\bigg(\frac{O e^{-\beta H}}{Z_H(\beta)}\bigg) - \textnormal{Tr}\bigg(\frac{O e^{-\beta (H + V)}}{Z_{H + V}(\beta)}\bigg)} \leq 2\norm{O}_\textnormal{op}\bigg(\bigabs{\exp(e^{\beta (\norm{H} + \norm{V})}\norm{V}) -1} +\bigg|{\exp(\beta\norm{V}) -1}\bigg| \bigg)
    \]
\end{lemma}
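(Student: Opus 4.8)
The plan is to reduce to a positive observable and then split the error into a piece coming from the partition function, controlled by Lemma~\ref{lemma:error_partition_func}(a), and a piece coming from the unnormalized expectation value, controlled by Lemma~\ref{lemma:error_partition_func}(b), arranging the decomposition so that the two contributions add rather than multiply. The factor of $2$ and the two additive terms $|K-1|$ and $|L-1|$ in the target should then fall out naturally, with $K\equiv\exp(e^{\beta(\norm{H}_\op+\norm{V}_\op)}\norm{V}_\op)$ and $L\equiv e^{\beta\norm{V}_\op}$.

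First I would dispose of the positivity hypothesis. Since both $e^{-\beta H}/Z_H(\beta)$ and $e^{-\beta(H+V)}/Z_{H+V}(\beta)$ are normalized states, replacing $O$ by $P\equiv O+\norm{O}_\op I\succeq 0$ does not change the difference $\textnormal{Tr}(O\,\cdot)-\textnormal{Tr}(O\,\cdot)$ of the two Gibbs expectations, while $\norm{P}_\op\le 2\norm{O}_\op$ and any normalized expectation of $P$ lies in $[0,\norm{P}_\op]$. Thus it suffices to prove, for $P\succeq 0$, that the difference of Gibbs expectations is at most $\norm{P}_\op\big((K-1)+(L-1)\big)$; the bound for general $O$ then follows from $\norm{P}_\op\le 2\norm{O}_\op$.

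Writing $a\equiv\textnormal{Tr}(Pe^{-\beta H})$, $b\equiv\textnormal{Tr}(Pe^{-\beta(H+V)})$, $Z\equiv Z_H(\beta)$ and $Z'\equiv Z_{H+V}(\beta)$, I would use the identity
\[
\frac{a}{Z}-\frac{b}{Z'}=\frac{a-b}{Z}+\frac{b}{Z'}\,\frac{Z'-Z}{Z}.
\]
The second term is the partition-function error: since $b/Z'\in[0,\norm{P}_\op]$, and Lemma~\ref{lemma:error_partition_func}(a) applied to $(H,V)$ and to $(H+V,-V)$ gives $L^{-1}\le Z'/Z\le L$ and hence $\bigabs{Z'/Z-1}\le L-1$, this term is bounded in magnitude by $\norm{P}_\op(L-1)$. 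For the first term I would use the two-sided estimate $\tfrac1K\le b/a\le K$, which yields $\abs{a-b}\le(K-1)a$ and therefore $\abs{a-b}/Z\le(K-1)(a/Z)\le\norm{P}_\op(K-1)$, since $a/Z\in[0,\norm{P}_\op]$. Adding the two contributions gives exactly $\norm{P}_\op\big((K-1)+(L-1)\big)$, completing the reduced statement.

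The one step requiring care — and the main obstacle — is the reverse bound $b\ge a/K$ with the \emph{same} constant $K$, i.e.\ the one built from $\norm{H}_\op$ rather than $\norm{H+V}_\op$. Simply re-applying the stated one-sided form of Lemma~\ref{lemma:error_partition_func}(b) with $H$ and $H+V$ interchanged would produce the larger constant $\exp(e^{\beta(\norm{H+V}_\op+\norm{V}_\op)}\norm{V}_\op)\ge K$, which is too weak to match the claimed bound. Instead I would invoke the symmetric estimate that the proof of Lemma~\ref{lemma:error_partition_func}(b) in fact establishes, namely $\bigabs{\log\textnormal{Tr}(Pe^{-\beta H})-\log\textnormal{Tr}(Pe^{-\beta(H+V)})}\le\norm{V}_\op e^{\beta(\norm{H}_\op+\norm{V}_\op)}=\log K$, whose two-sidedness directly supplies both $b\le Ka$ and $b\ge a/K$ with the correct $K$. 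The analogous two-sidedness for the partition function is automatic, since $\norm{-V}_\op=\norm{V}_\op$ makes Lemma~\ref{lemma:error_partition_func}(a) yield the same $L$ in both directions. I expect the only genuine subtlety to be this insistence on the $\norm{H}_\op$-constant in the reverse direction; everything else is a routine rearrangement.
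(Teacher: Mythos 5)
Your proposal is correct and follows essentially the same route as the paper's proof: reduce to the positive observable $O + \norm{O}_{\op} I$ (which leaves the difference of normalized expectations unchanged), split the error via the exact identity $\frac{a}{Z}-\frac{b}{Z'}=\frac{a-b}{Z}+\frac{b}{Z'}\cdot\frac{Z'-Z}{Z}$ into an unnormalized-expectation term and a partition-function term, and bound each factor by $\norm{P}_{\op}\le 2\norm{O}_{\op}$ together with Lemma~\ref{lemma:error_partition_func}. Your explicit insistence on the two-sided logarithmic estimate $\bigabs{\log\textnormal{Tr}(Pe^{-\beta H})-\log\textnormal{Tr}(Pe^{-\beta(H+V)})}\le \norm{V}_{\op}e^{\beta(\norm{H}_{\op}+\norm{V}_{\op})}$ --- which the proof of Lemma~\ref{lemma:error_partition_func}(b) actually establishes even though its statement is one-sided --- is precisely the reading the paper's argument implicitly relies on when it bounds $\bigabs{1-\textnormal{Tr}(O'e^{-\beta(H+V)})/\textnormal{Tr}(O'e^{-\beta H})}$ with the constant built from $\norm{H}_{\op}$, so that care is well placed rather than a deviation.
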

\begin{proof}
This is a straightforward application of lemma \ref{lemma:error_partition_func}. We denote by $H' = H + V$. For simplicity, we will analyze the operator $O' = O + \norm{O}I$, since $O'\succeq 0$ and 
\[
\text{Tr}\bigg(\frac{O'e^{-\beta H}}{Z_H(\beta)}\bigg) - \text{Tr}\bigg(\frac{O'e^{-\beta H'}}{Z_{H'}(\beta)}\bigg) = \text{Tr}\bigg(\frac{Oe^{-\beta H}}{Z_H(\beta)}\bigg) - \text{Tr}\bigg(\frac{Oe^{-\beta H'}}{Z_{H'}(\beta)}\bigg).
\]
We begin by noting that
\[
 \bigabs{\textnormal{Tr}\bigg(\frac{O' e^{-\beta H}}{Z_H(\beta)}\bigg) - \textnormal{Tr}\bigg(\frac{O' e^{-\beta H'}}{Z_{H'}(\beta)}\bigg)} \leq \frac{\textnormal{Tr}(O'e^{-\beta H})}{Z_H(\beta)}\bigabs{ 1 - \frac{\textnormal{Tr}(O'e^{-\beta H'})}{\textnormal{Tr}(O'e^{-\beta H})}} + \frac{\text{Tr}(O'e^{-\beta H'})}{Z_{H'}(\beta)}\bigabs{1 - \frac{Z_{H'}(\beta)}{Z_H(\beta)}}
\]
Noting that
\[
\frac{\textnormal{Tr}(O'e^{-\beta H})}{Z_H(\beta)}, \frac{\textnormal{Tr}(O'e^{-\beta H'})}{Z_{H'}(\beta)} \leq \norm{O'}_\text{op} \leq 2\norm{O}_\text{op},
\]
we obtain that
\[
\bigabs{\textnormal{Tr}\bigg(\frac{O' e^{-\beta H}}{Z_H(\beta)}\bigg) - \textnormal{Tr}\bigg(\frac{O' e^{-\beta H'}}{Z_{H'}(\beta)}\bigg)} \leq 2\norm{O}_\text{op}\bigg(\bigabs{\exp(e^{\beta (\norm{H} + \norm{V})}\norm{V}) -1} +\bigg|{\exp(\beta\norm{V}) -1}\bigg| \bigg)
\]
\end{proof}
%%%%%%%%%%%%%%%%%%%%%%%%%%%%%%%%%%%%%%%%%%%%%%%%%%%%%%%%%%%%%%%%%%%%%%%%%
\noindent We next need the notion of exponentially-clustered correlations in a Gibbs state --- which we reproduce below from Ref.~\cite{brandao2019finite}. We will consider Hamiltonians on $\mathfrak{L}\subset\mathbb{Z}^d$ expressed as
\[
H = \sum_{x \in \mathfrak{L}}h_x,
\]
where $h_x$ acts only on spins within a distance $R$ of $x\in \mathfrak{L}$ and $\norm{h_x}\leq 1$. We will denote by $\text{supp}(h_x) \subseteq \mathfrak{L}$ the support of $h_x$. Given $X \subseteq \mathfrak{L}$, we denote by $H_X$ the operator
\[
H_X  = \sum_{x | \text{supp}(h_x) \subseteq X} h_x,
\]
i.e.~$H_X$ is the Hamiltonian $H$ obtained on restricting $H$ to the set $X$.
\begin{definition}
A local Hamiltonian $H$ is said to have exponential clustering of correlations at inverse temperature $\beta$ if $\exists c_1, c_2 > 0$ such that for all $X \subset \mathfrak{L}$ and operators $A, B$ with $\textnormal{supp}(A), \textnormal{supp}(B) \subset X$ with $d(\textnormal{supp}(A), \textnormal{supp}(B)) \geq l$, 
\[
\bigabs{\textnormal{Tr}\big((A\otimes B) \sigma_X(\beta) \big) -\textnormal{Tr}\big(A \sigma_X(\beta) \big)\textnormal{Tr}\big(B \sigma_X(\beta) \big) } \leq c_2 \norm{A}\norm{B} e^{-c_1 l},
\]
where $\sigma_X(\beta) = e^{-\beta H_X} / \textnormal{Tr}[e^{-\beta H_X}]$ is the Gibbs state corresponding to $H_X$ at inverse temperature $\beta$.
\end{definition}
\noindent An important property of Hamiltonians with exponential clustering of correlations, which relies on quantum belief propagation \cite{hastings2007quantum} and is proved in Ref.~\cite{brandao2019finite}, is that local observables can be estimated locally.
\begin{lemma}[From Ref.~\cite{brandao2019finite}]\label{lemma:exp_clust_corr}
Suppose $H$ is a local Hamiltonian on a finite lattice $\mathfrak{L}\subset \mathbb{Z}^d$ with exponential clustering of correlations at inverse temperature $\beta$. If $\mathfrak{L} = \mathcal{A} \cup \mathcal{B} \cup \mathcal{C}$ such that $\text{dist}(\mathcal{A}, \mathcal{C})\geq l$, then $\exists c_1', c_2'$ such that
\[
\bignorm{\textnormal{Tr}_{\mathcal{B}, \mathcal{C}}(\sigma_\mathfrak{L}(\beta)) - \textnormal{Tr}_\mathcal{B}(\sigma_{\mathcal{A}\cup \mathcal{B}}(\beta))}_\textnormal{tr} \leq \abs{\partial \mathcal{C}}c_2' e^{-c_1'l},
\]  
where $\sigma_X(\beta)$ is the Gibbs state corresponding to $H_X$ and $\partial \mathcal{C}$ is the boundary between $B, C$.
\end{lemma}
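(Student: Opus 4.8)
\emph{Proof proposal.} The plan is to recast the trace-norm statement as a bound on local expectation values, and then establish that bound by a quantum belief-propagation interpolation in which exponential clustering controls the key correlation functions. The first step is a reduction via trace-norm duality. Since $\mathcal{A}$ is the only region surviving both partial traces, and since $\tr(X\,\sigma_\mathfrak{L}) = \tr\big(X\,\tr_{\mathcal{B},\mathcal{C}}(\sigma_\mathfrak{L})\big)$ and $\tr(X\,\sigma_{\mathcal{A}\cup\mathcal{B}}) = \tr\big(X\,\tr_{\mathcal{B}}(\sigma_{\mathcal{A}\cup\mathcal{B}})\big)$ for $X$ supported on $\mathcal{A}$, we have
\[
\bignorm{\tr_{\mathcal{B}, \mathcal{C}}(\sigma_\mathfrak{L}(\beta)) - \tr_\mathcal{B}(\sigma_{\mathcal{A}\cup\mathcal{B}}(\beta))}_{\textnormal{tr}} = \sup_{\substack{\norm{X}_\op \leq 1 \\ \textnormal{supp}(X) \subseteq \mathcal{A}}} \bigabs{\tr(X\sigma_\mathfrak{L}(\beta)) - \tr(X \sigma_{\mathcal{A}\cup\mathcal{B}}(\beta))}.
\]
Writing $V \equiv H - H_{\mathcal{A}\cup\mathcal{B}} = \sum_{x :\,\textnormal{supp}(h_x)\cap\mathcal{C}\neq\emptyset} h_x$ for the collection of terms coupling to $\mathcal{C}$, every term of $V$ has support at distance at least $l - R$ from $\mathcal{A}$, so the task is to show that switching $V$ off changes $\langle X\rangle$ by at most $\abs{\partial\mathcal{C}}\,c_2' e^{-c_1' l}$ for any bounded observable $X$ on $\mathcal{A}$.

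I would then interpolate. Set $H(s) = H_{\mathcal{A}\cup\mathcal{B}} + sV$ and $\rho(s) = e^{-\beta H(s)}/\tr(e^{-\beta H(s)})$, so that the left-hand side of the reduced task equals $\int_0^1 \tfrac{d}{ds}\tr(X\rho(s))\,ds$ (note $\rho(0)$ factorizes across $\mathcal{A}\cup\mathcal{B}$ versus $\mathcal{C}$ and reduces to $\tr_\mathcal{B}(\sigma_{\mathcal{A}\cup\mathcal{B}})$ on $\mathcal{A}$). By Hastings' quantum belief propagation there is a Hermitian quasi-local operator $\Phi_s = \int_{-\infty}^\infty w(t)\, e^{iH(s)t} V e^{-iH(s)t}\,dt$, with $w$ a real even kernel decaying faster than any polynomial, such that $\tfrac{d}{ds}\rho(s) = -\tfrac12\{\Phi_s,\rho(s)\} + \langle\Phi_s\rangle_s\,\rho(s)$. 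Pairing with $X$ and using $\tfrac12\{X,\Phi_s\} = X\Phi_s - \tfrac12[X,\Phi_s]$ gives
\[
\frac{d}{ds}\tr(X\rho(s)) = -\big(\langle X\Phi_s\rangle_s - \langle X\rangle_s\langle\Phi_s\rangle_s\big) + \tfrac12\langle [X,\Phi_s]\rangle_s,
\]
i.e.\ a connected correlation function plus a commutator, each between an operator on $\mathcal{A}$ and $\Phi_s$, which is essentially supported near $\mathcal{C}$.

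The heart of the argument is to bound these two pieces using the separation $l$. I would truncate the time integral defining $\Phi_s$ at $\abs{t}\leq T$: by the Lieb-Robinson bound (lemma \ref{lemma:lieb_robinson_finite}) the truncated operator is confined within distance $\sim vT$ of $\textnormal{supp}(V)$, hence at distance $\geq l - R - vT$ from $\mathcal{A}$, while the tail $\abs{t}>T$ contributes at most $\norm{V}_\op\int_{\abs{t}>T}\abs{w(t)}\,dt$, which is rapidly decaying in $T$. The commutator term is controlled directly by lemma \ref{lemma:lieb_robinson_finite}(a), and the connected correlator by the exponential-clustering hypothesis applied term-by-term to the sum defining $V$: each $h_x$ contributes $c_2\,e^{-c_1 d(\mathcal{A},\,\textnormal{supp}(h_x))}$, and summing the resulting geometric series over the terms of $V$ leaves the $O(\abs{\partial\mathcal{C}})$ terms adjacent to the $\mathcal{B}$-$\mathcal{C}$ boundary dominant, since terms deeper inside $\mathcal{C}$ are suppressed by their larger distance. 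Choosing $T\simeq l/(2v)$ balances the two error sources and yields $\abs{\tfrac{d}{ds}\tr(X\rho(s))}\leq\abs{\partial\mathcal{C}}\,c_2'e^{-c_1' l}$ uniformly in $s$; integrating over $s\in[0,1]$ and taking the supremum over $X$ proves the lemma.

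The main obstacle I anticipate is twofold. First, establishing the quasi-locality of $\Phi_s$ with quantitative, rapidly-decaying tails is the genuine technical core of belief propagation, requiring a careful choice of the kernel $w$ (whose Fourier transform is of $\tanh$ type) together with Lieb-Robinson confinement of $e^{iH(s)t}Ve^{-iH(s)t}$. Second, the exponential-clustering hypothesis is stated for the restricted Gibbs states $\sigma_X(\beta)$ of the \emph{original} $H$, whereas the covariance above must be controlled in the \emph{interpolating} states $\rho(s)$. I would handle the latter by invoking the \emph{stable} clustering assumption used in Proposition \ref{prop:Gibbs}, which guarantees that clustering persists for each $H(s)$, as this is a local Hamiltonian obtained from $H$ by rescaling the $\mathcal{C}$-boundary terms; the remaining bookkeeping of constants and the boundary sum are then routine.
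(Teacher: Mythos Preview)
The paper does not prove this lemma: it is quoted verbatim from Ref.~\cite{brandao2019finite}, with only the remark that the proof ``relies on quantum belief propagation \cite{hastings2007quantum}.'' There is therefore no proof in the paper to compare against; your proposal goes beyond the paper by actually sketching the underlying argument.

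That said, your sketch is a faithful outline of the belief-propagation strategy the paper alludes to. The duality reduction, the interpolation $H(s)=H_{\mathcal{A}\cup\mathcal{B}}+sV$, the differential identity $\tfrac{d}{ds}\rho(s)=-\tfrac12\{\Phi_s,\rho(s)\}+\langle\Phi_s\rangle_s\,\rho(s)$, and the decomposition of $\tfrac{d}{ds}\langle X\rangle_s$ into a connected correlator plus a commutator are all correct and standard. The commutator is indeed controlled by Lieb--Robinson exactly as you describe, and the quasi-locality of $\Phi_s$ with rapidly decaying tails is the technical heart of belief propagation.

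You also correctly identify the genuine subtlety: the covariance $\langle X\Phi_s\rangle_s-\langle X\rangle_s\langle\Phi_s\rangle_s$ must be bounded in the \emph{interpolating} state $\rho(s)$, but for $s\in(0,1)$ the Hamiltonian $H(s)$ is not of the form $H_X$ for any $X\subseteq\mathfrak{L}$, so the clustering hypothesis in the paper's definition does not literally cover it. Your proposed fix --- invoking the \emph{stable} clustering assumption from Proposition~\ref{prop:Gibbs} --- suffices for the paper's application (that proposition does assume stable clustering, and the lemma is only invoked there), but it strengthens the hypothesis of the lemma as stated. Since the lemma is imported wholesale from the cited reference, the way this point is handled is a matter for Ref.~\cite{brandao2019finite} rather than for this paper; your outline is as much as one could reasonably supply here.
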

\noindent\emph{Proof (of proposition \ref{prop:Gibbs})}. We assume that both $H$ and $H'$ have exponential clustering of correlations and satisfy lemma \ref{lemma:exp_clust_corr}. Suppose $O$ is a local observable with support $S_O$ and consider $\mathcal{B}$ to be a region around $S_O$ and $\mathcal{C}$ be the remainder of the lattice. We also assume that $d(\mathcal{C}, S_O) \geq l$, for some $l$ to be chosen later. We denote by $\sigma_l(\beta)$ and $\sigma_l'(\beta)$ the Gibbs state, at inverse temperature $\beta$, corresponding to $H_{S_O \cup \mathcal{B}}$ and $H'_{S_O \cup \mathcal{B}}$ respectively, and by $\sigma(\beta), \sigma'(\beta)$ the Gibbs state corresponding to $H$ and $H'$. Now, from lemma \ref{lemma:exp_clust_corr} it follows that
\[
\abs{\text{Tr}(O\sigma(\beta)) - \text{Tr}(O\sigma_l(\beta))}, \abs{\text{Tr}(O\sigma'(\beta)) - \text{Tr}(O\sigma_l'(\beta))} \leq \norm{O}d(2l + R_O)^{d - 1} c_2'e^{-c_1' l},
\]
where $R_O = \text{diam}(S_O)$ and we have used that $\abs{\partial \mathcal{C}} \leq d\times \textnormal{diam}(S_O \cup \mathcal{B})^{d - 1} \leq d(2l + R_O)^{d-1}$. Furthermore, lemma \ref{lemma:pert_theory_gibbs} can be used to bound $\abs{\text{Tr}(O\sigma_l(\beta)) - \text{Tr}(O\sigma_l'(\beta))}$. We note that
\[
\norm{H_{S_O\cup \mathcal{B}} - H_{S_O \cup \mathcal{B}}'} \leq \delta (2l + R_O)^d \text{ and }\norm{H_{S_O\cup \mathcal{B}}} \leq (2l + R_O)^d.
\]
Therefore,
\[
\abs{\text{Tr}(O\sigma_l(\beta)) - \text{Tr}(O\sigma_l'(\beta))} \leq 2\norm{O}O(e^{\beta (2l + R_O)^d} (2l + R_O)^d \delta).
\]
Thus, from the triangle inequality we obtain the bound that
\[
\abs{\text{Tr}(O\sigma(\beta)) - \text{Tr}(O\sigma'(\beta))} \leq \norm{O}\bigg[O\bigg((2l + R_O)^{d - 1} e^{-c_1' l}\bigg) + O\bigg(e^{\beta(2l + R_O)^d}(2l + R_O)^d \delta\bigg)\bigg].
\]
For $\beta = \Theta(1)$, choosing $2l + R_O = \Theta(\log^{1/d}(1/\sqrt{\delta}))$, we obtain that
\[
\bigabs{\text{Tr}(O\sigma(\beta)) - \text{Tr}(O\sigma'(\beta))} \leq \norm{O} O(\log^{1 - 1/d}(1/\delta)e^{-\Omega(\log^{1/d}(1/\delta))}\big),
\]
which proves the lemma statement. $\hfill\square$
\subsection{Fixed point of rapidly mixing Lindbladians}\label{app:fp}
In this section, we consider spatially local Lindbladians which are also rapidly mixing as defined in Ref.~\cite{cubitt2015stability}. We establish that local observables meaured in the fixed point is robust to both coherent and incoherent errors. We point out that Ref.~\cite{cubitt2015stability} already proved this statement if the incoherent errors are considered to be Markovian --- we show that this stability of local observables is retained in the more general setting of incoherent non-Markovian errors. We first reproduce the definition of rapidly mixing Lindbladians from Ref.~\cite{cubitt2015stability}.
\begin{assumption}\label{assump:rapid_mixing}
        Suppose $\mathcal{L} = \sum_{\alpha}\mathcal{L}_\alpha$ is a spatially local Lindbladian on $\mathbb{Z}_L^d$ where $\mathcal{L}_\alpha$ are supported in $S_\alpha$, and let $\mathcal{L}_{\Lambda_l} = \sum_{\alpha : S_\alpha \subseteq \Lambda_l}\mathcal{L}_\alpha$ be the Lindbladian obtained from $\mathcal{L}$ by only retaining $\mathcal{L}_\alpha$ which are supported entirely in sublattice ${\Lambda_l}\cong \mathbb{Z}_l^d$. Then, for any choice of $\Lambda_l$, $\mathcal{L}_{\Lambda_l}$ has a unique fixed point $\sigma_{\Lambda_l}$ and
        \[
        \norm{e^{\mathcal{L}_{\Lambda_l}t}(\cdot) - \sigma_{\Lambda_l}\textnormal{Tr}(\cdot)}_{\diamond} \leq \kappa(l)e^{-\gamma t},
        \]
        where $\gamma > 0$ is some constant and $\kappa(l) \leq O(\textnormal{poly}(l))$.
\end{assumption}
An important consequence of this assumption, established in Ref.~\cite{cubitt2015stability}, is the local rapid mixing property which we state below.
\begin{lemma}[Ref.~\cite{cubitt2015stability}]\label{lemma:local_rapid_mixing}
Suppose $\mathcal{L} = \sum_{\alpha}\mathcal{L}_\alpha$ is a spatially local Lindbladian which satisfies assumption \ref{assump:rapid_mixing} and $O$ is a local observable with support $S_O$. Then, 
\[
\norm{e^{\mathcal{L}^\dagger t}(O) - I \ \textnormal{Tr}(\sigma O)} \leq \norm{O} k(\abs{S_O}) e^{-\gamma t},
\]
for some $k(l) \leq O(\textnormal{poly}(l))$.
\end{lemma}
\begin{proof}
    We will show that for any time $t > 0$, the observable obtained by the noisy quantum simulator is closs to the true observable if assumption \ref{assump:rapid_mixing} is satisfied. Denoting by $\mathcal{O}$ the true observable, and by $\mathcal{O}'$ the observable in the presence of errors and noise, we start with
    \begin{subequations}
    \begin{align}\label{eq:perturbation_starting_fixed_point}
    &\abs{\mathcal{O} - \mathcal{O}'} = \bigabs{\int_0^t \text{Tr}\bigg(e^{\mathcal{L}^\dagger(t - s)}(O)(\mathcal{L}'(s) - \mathcal{L}) (\mathcal{E}'(s, 0)(\rho(0)))\bigg) ds} \nonumber,\\
    &\qquad \quad \ \ \leq \sum_{\alpha}\int_0^t \bigabs{\text{Tr}\bigg(e^{\mathcal{L}^\dagger(t - s)}(O)\mathcal{L}_\alpha(s) (\mathcal{E}'(s, 0)(\rho(0)))\bigg)}ds, \nonumber\\
    &\qquad \quad \ \ \leq \sum_{\alpha} \int_0^t \bigabs{\text{Tr}\bigg(e^{\mathcal{L}^\dagger s}(O)\mathcal{L}_\alpha(t - s) (\mathcal{E}'(t - s, 0)(\rho(0)))\bigg)}ds, \nonumber\\
    &\qquad \quad \ \ \leq \sum_{\alpha} \big(e_{\alpha, <}(t_\alpha, t) + e_{\alpha, >}(t_\alpha, t)\big).
    \end{align}
    where $t_\alpha$ remains to be chosen and
    \begin{align}
    &e_{\alpha, <}(t_\alpha, t) =  \bigabs{\int_0^{\text{min}(t, t_\alpha)} \text{Tr}\big(e^{\mathcal{L}^\dagger s}(O)\mathcal{L}_\alpha'(t - s) (\mathcal{E}'(t - s, 0)(\rho(0)))\big) ds}
    \text{ and } \\
    &e_{\alpha, >}(t_\alpha, t) = 
    \bigabs{\int_{\min(t, t_\alpha)}^{t} \text{Tr}\bigg(e^{\mathcal{L}^\dagger s}(O)\mathcal{L}_\alpha'(t - s)(\mathcal{E}'(t - s, 0)(\rho(0)))\bigg) ds}
    \end{align}
    \end{subequations}
    We now upper bound both $e_{<}(t_\alpha, t)$ and $e_{>}(t_\alpha, t)$ separately. We can use lemma \ref{lemma:error_bound_individual_term_lr} to obtain 
    \begin{align}\label{eq:error_short_time}
    e_{\alpha, <}(t_\alpha, t) \leq \delta t_\alpha \abs{S_O}\norm{O}(1 + 8n_L) \max(e^{-\mu d(S_O, \Lambda_\alpha)} (e^{vt_\alpha} - 1), 1).
    \end{align}
     Next, consider $e_{\alpha, >}(t_\alpha, t)$ --- this is $0$ unless $t_\alpha < t$. Furthermore, we use lemma \ref{lemma:local_rapid_mixing}, from which it follows that
    \begin{align*}
    e_{\alpha, >}(t_\alpha, t) &\leq \bigabs{\int_{\min(t, t_\alpha)}^t \text{Tr}(O\sigma)\text{Tr}\bigg(\mathcal{L}_\alpha'(t- 
 s)\big(\mathcal{E}'(t - s, 0)(\rho(0)))\bigg)ds} + \nonumber \\ &\qquad \qquad \norm{O}k(\abs{S_O})  \int_{\min(t, t_\alpha)}^t e^{-\gamma s}{\bignorm{\text{Tr}_E(\mathcal{L}_{\alpha}'(t - s)(\mathcal{E}'(t - s, 0)(\rho(0)))}_{\op, 1} }ds, \nonumber \\
 &= \norm{O}k(\abs{S_O})  \int_{\min(t, t_\alpha)}^t e^{-\gamma s}{\bignorm{\text{Tr}_E(\mathcal{L}_{\alpha}'(t - s)(\mathcal{E}'(t - s, 0)(\rho(0)))}_{\op, 1} }ds, \nonumber\\
 &\leq \norm{O}k(\abs{S_O})\bigg( \int_{\min(t, t_\alpha)}^t e^{-\gamma s} \bignorm{\text{Tr}_E((\mathcal{L}_\alpha - \mathcal{L}_\alpha')(\mathcal{E}'(t - s, 0)(\rho(0)))}_{\op, 1} ds + \nonumber\\
 &\qquad \qquad 2\sum_{j = 1}^{n_L} \int_{\min(t, t_\alpha)}^t e^{-\gamma s} \norm{L_\alpha^\dagger\text{Tr}_E( A_{j, \alpha} \mathcal{E}'(t - s, 0)(\rho(0)))}_{\op, 1}\bigg).
    \end{align*}
Using $\norm{\mathcal{L}_\alpha - \mathcal{L}_\alpha'}_\diamond \leq \delta$ and lemma \ref{lemma:norm_anninilation_vacuum}, we obtain that
\begin{align}\label{eq:error_long_time}
e_{\alpha, >}(t_\alpha, t) \leq \norm{O}k(\abs{S_O})\delta \big(1 + 4n_L) \int_{\min(t, t_\alpha)}^t e^{-\gamma s} ds \leq 
    \norm{O}k(\abs{S_O})\delta \big(1 + 4n_L) e^{-\gamma t_\alpha} .
\end{align}
Finally, from Eqs.~\ref{eq:perturbation_starting_fixed_point}, \ref{eq:error_short_time} and \ref{eq:error_long_time}, we obtain that
\[
\abs{\mathcal{O} - \mathcal{O}'} \leq \norm{O}\delta (1 + 8n_L) \sum_{\alpha} \bigg(k(\abs{S_O}) e^{-\gamma t_\alpha} + \abs{S_O} t_\alpha \max\big(e^{-\mu d(S_O, \Lambda_\alpha)} \big(e^{vt_\alpha} - 1), 1\big)\bigg).
\]
Now, we make a choice of $t_\alpha = \mu d(S_O, \Lambda_\alpha)/2v$, we obtain that
\[
\abs{\mathcal{O} - \mathcal{O}'} \leq \norm{O}\delta (1 + 8n_L) \sum_{\alpha} \bigg(k(\abs{S_O}) e^{-\gamma \mu d(S_O, \Lambda_\alpha)/2v} + \abs{S_O} \frac{\mu d(S_O, \Lambda_\alpha)}{2v} e^{-\mu d(S_O, \Lambda_\alpha)/2}\bigg).
\]
The summation in the above bound converges to a constant independent of the system size, and we thus obtain that $\abs{\mathcal{O} - \mathcal{O}'} \leq O(\delta)$.
\end{proof}
%}
    
    .

%{
\section{Lower bounds on convergence to thermodynamic limits}
\subsection{Dynamics}\label{app:lower_bounds_dyn}
\noindent We show that the simple example of a translationally invariant free-fermionic tight-binding model on a $d-$dimensional lattice saturates the lower bound on the convergence to thermodynamic limit provided in lemma \ref{lemma:tl_dynamics}. The model that we consider is simple fermionic tight-binding model on a $d-$dimensional lattice $\mathfrak{L}^d$, where $\mathfrak{L} = \{-L, -L + 1 \dots L-1\}$, with imaginary hopping amplitudes
\[
H_n = i\sum_{x\in \mathfrak{L}^d} \sum_{m = 1}^d \bigg(a_{x}^\dagger a^{\phantom{\dagger}}_{x + e_m} - \text{h.c.}\bigg), 
\]
where $n = (2L)^d$, $e_m$ is the unit vector along the $m^\text{th}$ lattice direction and a periodic boundary condition is assumed. For simplicity, we will assume that $L$ is even. This Hamiltonian can be analytically diagonalized to obtain
\[
H_n = \sum_{k \in \mathfrak{L}^d} \omega_k \tilde{a}_k^\dagger \tilde{a}^{\phantom{\dagger}}_k,
\]
where
\[
\tilde{a}_k = \frac{1}{(2L)^{d/2}} \sum_{x\in\mathfrak{L}^d} a_x e^{i\pi k\cdot x / L} \text{ and } \omega_k = 2 \sum_{m = 1}^d \sin \bigg(\frac{\pi k_m} {L}\bigg).
\]
We now consider the local observable $O = a_0^\dagger a^{\phantom{\dagger}}_0$, and an initial state $\ket{\psi_0} = a_0^\dagger \ket{\text{vac}}$, and provide a lower bound on the error 
\begin{align}\label{eq:app_H_dynamics_error_def}
\mathcal{E}_n = \bigabs{\bra{\psi_0} e^{iH_n t} O e^{-iH_n t} \ket{\psi_0} - \lim_{n \to \infty}\bra{\psi_0} e^{iH_n t} O e^{-iH_n t} \ket{\psi_0}}
\end{align}
We first establish some useful technical lemmas.
\begin{lemma}\label{lemma:trig_sums_app_h}
For $p \in \mathbb{N}$, it follows that
\item[(a)] If $L \in \mathbb{N}$ is even then,
    \[
   \sum_{m = 0}^{L - 1} \sin^{2p}\bigg(\frac{m\pi}{L}\bigg) = \frac{L}{2^{2p}} {2p \choose p} + \frac{L}{2^{2p - 1}} \sum_{1 \leq k\leq p/L} {2p \choose p + kL}.
    \]
\item[(b)] The integral
\[
\int_0^{\pi} \sin^{2p}(\theta) d\theta = \frac{\pi}{2^{2p}} {2p \choose p}.
\]
\end{lemma}
\noindent\emph{Proof}: (a) We begin by noting that
\begin{align}\label{ref:app_H1_trig_id_1}
\sum_{m = 0}^{L-1}\sin^{2p}\bigg(\frac{\pi m}{L}\bigg) =  2\sum_{m = 0}^{L/2 - 1}\cos^{2p}\bigg(\frac{m \pi}{L}\bigg) - 1.
\end{align}
Next, we have that
\[
\sum_{m = 0}^{L/2-1}\cos^{2p}\bigg(\frac{\pi m}{L}\bigg) = \frac{1}{2^{2p}} \sum_{m = 0}^{L/2 - 1} \sum_{k = 0}^{2p} {2p \choose k}\textnormal{Re}\bigg(e^{i2\pi m (p - k) / L}\bigg) =  \frac{1}{2^{2p}}  \sum_{k = 0}^{2p} {2p \choose k} \sum_{m = 0}^{L/2 - 1}\textnormal{Re}\bigg(e^{i2\pi m (p - k) / L}\bigg).
\]
Noting that
\[
\sum_{n = 0}^{L/2 - 1}\textnormal{Re}\bigg(e^{i2\pi n k / L}\bigg) = \begin{cases}
N/2 & \text{ if } k \in\{ 0, \pm L, \pm 2L \dots\}, \\
0 & \text{ if }k \text{ is odd}, \\
1 & \text{ if }k \text{ is even and }k\notin \{ 0, \pm L, \pm 2L \dots\},
\end{cases}
\]
we obtain
\begin{align}\label{ref:app_H1_trig_id_2}
\sum_{n = 0}^{L/2 - 1}\cos^{2p}\bigg(\frac{\pi n}{L}\bigg) = \frac{L}{2^{2p + 1}} {2p \choose p} + \frac{L}{2^{2p}} \sum_{1 \leq k \leq p / L} {2p \choose p + k} + \frac{1}{2}.
\end{align}
Part (a) of the lemma then follows from Eqs.~\ref{ref:app_H1_trig_id_1} and \ref{ref:app_H1_trig_id_2}. \\ \ \\
\noindent (b) We note that
\[
\int_0^\pi \sin^{2p}(\theta) d\theta = 2\int_0^{\pi / 2} \cos^{2p}(\theta) d\theta = \frac{2}{2^{2p}} \sum_{m = 0}^{2p}{2p \choose m} \int_0^{\pi / 2} \textnormal{Re}\bigg(e^{i2(p - m) \theta}\bigg)d\theta.
\]
Noting that for $q \in \mathbb{Z}$,
\[
\int_0^{\pi / 2} \textnormal{Re}\big( e^{2i q \theta}\big) d\theta = \begin{cases}
\pi/2 & \text{ if } q = 0, \\
0 & \text{ otherwise},
\end{cases}
\]
and thus
\[
\int_0^{\pi} \sin^{2p}(\theta) d\theta = \frac{\pi}{2^{2p}} {2p \choose p}.
\]
$\hfill \square$
\begin{lemma}\label{lemma:lower_bound_value}
For $0\leq \alpha \leq \pi / 4$, 
\[
\frac{1}{L} \sum_{m = -L}^{L - 1} \cos\bigg(\alpha\sin \bigg(\frac{\pi m}{L}\bigg)\bigg) \geq \sqrt{2}.
\]
\end{lemma}
\noindent\emph{Proof}: This follows from the simple observation that since $\abs{\sin(\pi m / L)}\leq 1 $, and the cosine function is decreasing in the interval $[-\pi / 2, \pi/2]$, for $0 \leq \alpha \leq \pi/4$, $\cos(\alpha \sin(\pi m / n)) \geq 1 / \sqrt{2}$ for all $m \in \{-L, -L + 1 \dots L-1\}$. $\hfill \square$
\begin{lemma}\label{lemma:lower_bound_main} If $L \in \mathbb{N}$ is even and for $0 \leq \alpha < 2\sqrt{L}$,
\[
    \frac{\pi}{L} \sum_{m = -L}^{L - 1}\cos\bigg(\alpha \sin\bigg(\frac{\pi m}{L}\bigg)\bigg) - \int_{-\pi}^{\pi} \cos(\alpha \sin \theta) d\theta \geq  4\pi \frac{(\alpha / 2)^{2L}}{(2L)!} \bigg(1 - \frac{\alpha^2}{4(2L + 1)}\bigg)
\]
\end{lemma}
\noindent\emph{Proof}: We begin by using the Taylor expansion of the cosine function to obtain that
\[
\frac{\pi}{L} \sum_{m = -L}^{L - 1}\cos\bigg(\alpha \sin\bigg(\frac{\pi m}{L}\bigg)\bigg) - \int_{-\pi}^{\pi} \cos(\alpha \sin \theta) d\theta = \sum_{p = 1}^\infty (-1)^p \frac{\alpha^{2p}}{(2p)!} \bigg(\frac{\pi}{L}\sum_{m = -L}^{L - 1} \sin^{2p}\bigg(\frac{\pi m}{L}\bigg) - \int_{-\pi}^{\pi} \sin^{2p}(\theta) d\theta \bigg).
\]
Using lemma \ref{lemma:trig_sums_app_h}, we then obtain that
\[
\frac{\pi}{L} \sum_{m = -L}^{L - 1}\cos\bigg(\alpha \sin\bigg(\frac{\pi m}{L}\bigg)\bigg) - \int_{-\pi}^{\pi} \cos(\alpha \sin \theta) d\theta = \sum_{p = 1}^\infty  T_p(\alpha),
\]
where
\[
T_p(\alpha) =4\pi(-1)^p \frac{(\alpha/2)^{2p}} {(2p)!} \sum_{1\leq k \leq p /L} {2p \choose p + kL}.
\]
Furthermore, assuming that $L$ is even, we note that for even $p$,
\begin{align*}
    &T_p(\alpha) +  T_{p + 1}(\alpha) =4\pi \frac{(\alpha / 2)^{2p}}{(2p)!} \sum_{1 \leq k \leq p / L}{2p \choose p + kL} - 4\pi \frac{(\alpha / 2)^{2p + 2}}{(2p + 2)!}\sum_{1 \leq k \leq (p + 1) / L}{2p + 2 \choose p + 1 + kL},\\
    &\qquad =4\pi \frac{(\alpha / 2)^{2p}}{(2p)!} \bigg[\sum_{1 \leq k \leq p / L}{2p \choose p + kL} - \frac{(\alpha / 2)^2}{(2p + 2)(2p + 1)}{2p + 2 \choose p + 1 + kL} \bigg] ,\\
    &\qquad=4\pi \frac{(\alpha / 2)^{2p}}{(2p)!} \bigg[\sum_{1 \leq k \leq p / L}{2p \choose p + kL}\bigg(1 - \frac{\alpha^2}{4((p + 1)^2 - k^2 L^2)} \bigg)\bigg]
\end{align*}
Noting that if $1 \leq k \leq p/L$, $(p + 1)^2 -k^2 L^2 \geq L $, and for $\alpha \leq 2 \sqrt{L}$, $1 - \alpha^2 / 4((p + 1)^2 - k^2 L^2) \geq 0$, and we conclude that $T_p(\alpha) + T_{p + 1}(\alpha) \geq 0$ for all even $p$.

Next, we note that $T_p(\alpha) = 0$ for $1 \leq p < L$. Since $L$ is even and, as established above, $T_{p}(\alpha) + T_{p + 1}(\alpha) \geq 0$ for even $p$, we obtain the lower bound
\[
\frac{\pi}{n} \sum_{m = -L}^{L - 1}\cos\bigg(\alpha \sin\bigg(\frac{\pi m}{L}\bigg)\bigg) - \int_{-\pi}^{\pi} \cos(\alpha \sin \theta) d\theta \geq T_L(\alpha) + T_{L + 1}(\alpha) = 4\pi \frac{(\alpha / 2)^{2L}}{(2L)!} \bigg(1 - \frac{\alpha^2}{4(2L + 1)}\bigg) \geq 2\pi \frac{(\alpha / 2)^{2L}}{(2L)!},
\]
which establishes the lemma. $\hfill \square$
\begin{proposition}\label{prop:lower_bound_dynamics}
There exists a nearest-neighbour Hamiltonian $H_n$ on $n=(2L)^d$ fermions (or spins) arranged on a $d-$dimensional lattice, a single-site observable $O_n$ and an initial state $\rho_n$ and a time $t$ independent of $n$ such that $\lim_{n\to \infty}\textnormal{Tr}(O_n e^{-iH_n t} \rho_n e^{iH_n t})$ exists and
\[
\bigabs{\textnormal{Tr}(O_n e^{-iH_n t} \rho_n e^{iH_n t}) - \lim_{n \to \infty}\textnormal{Tr}(O_n e^{-iH_n t} \rho_n e^{iH_n t})} \leq \varepsilon \implies n \geq \Omega\bigg(\frac{\log^d(\Theta(\varepsilon^{-1}))}{\log^d \log(\Theta(\varepsilon^{-1}))} \bigg).
\]
\end{proposition}
\noindent\emph{Proof}: We can obtain an expression for $\bra{\psi_0} e^{iH_n t} O e^{-iH_n t} \ket{\psi_0} = \abs{\bra{\psi_0} e^{-iH_n t} \ket{\psi_0}}^2$ by noting that $\ket{\psi_0} = \frac{1}{(2L)^{d/2}} \sum_{k \in \mathfrak{L}_n^d}\tilde{a}_k^\dagger \ket{\text{vac}}$, and thus,
\[
\bra{\psi_0} e^{-iH_n t}\ket{\psi_0} = \frac{1}{(2L)^d} \sum_{k \in \mathfrak{L}_n^d} e^{-i\omega_k t } = \frac{1}{(2L)^d}\bigg(\sum_{k = -L}^{L - 1} e^{-2it \sin(\pi k /L)}\bigg)^{d} = \frac{1}{(2L)^d} \bigg(\sum_{k = -L}^{L - 1} \cos\big({2t \sin(\pi k /L)}\big)\bigg)^{d}
\]
It also follows, from this expression, that
\[
\lim_{n \to \infty} \bra{\psi_0} e^{-iH_n t}\ket{\psi_0} = \frac{1}{(2\pi)^d}\bigg(\int_{-\pi}^\pi \cos(2t \sin \theta) d\theta\bigg)^d
\]
Consider now the error $\mathcal{E}$ defined in Eq.~\ref{eq:app_H_dynamics_error_def} --- from the analytical expression for $\bra{\psi_0} e^{-iH_n t} \ket{\psi_0}$ given above, we obtain that
\[
\mathcal{E} = \frac{1}{(2\pi)^{2d}}\bigabs{ \bigg(\int_{-\pi}^\pi \cos(2t \sin \theta) d\theta\bigg)^{2d} - \bigg(\frac{\pi}{n}\sum_{k = -n}^{n - 1} \cos\big({2t \sin(\pi k /n)}\big)\bigg)^{2d} }.
\]
Note that for $x, y \geq a > 0$ and $k \in \mathbb{N}$, $\abs{x^{k} - y^{k}} = \abs{x - y}(x^{k - 1} + x^{k - 2}y + x^{k - 3}y^2 +\dots y^{k -1}) \geq k a^{k - 1}\abs{x - y}$. From this fact and from lemma \ref{lemma:lower_bound_value}, it follows that
\[
\mathcal{E}_n \geq \frac{(2d - 1)2^d}{(2\pi)^{2d}}\bigabs{\int_{-\pi}^\pi \cos(2t \sin \theta) d\theta - \frac{\pi}{n}\sum_{k = -n}^{n - 1} \cos(2t \sin(\pi k / n))}.
\]
Finally, using lemma \ref{lemma:lower_bound_main}, we obtain that
\[
\mathcal{E}_n \geq \frac{(2d - 1)2^{d}}{(2\pi)^{2d - 1}} \frac{(\alpha / 2)^{2L}}{(2L)!} \geq \frac{(2d - 1)2^d}{(2\pi)^{d-1}(2L + 1)^{2L + 1}}  \bigg(\frac{e\alpha}{2}\bigg)^{2L}.
\]
Now, if $\mathcal{E}_n \leq \varepsilon$ for small $\varepsilon$, then
\[
\bigg(\frac{2L + 1}{e\alpha/2} \bigg)^{\frac{2L + 1}{e\alpha/2}} \geq \bigg(\frac{2^{d+1}(2d - 1)}{e\alpha (2\pi)^{d-1} \varepsilon}\bigg)^{2/e\alpha} \implies (2L + 1) \geq \frac{e\alpha}{2} e^{W_0\big( \frac{2}{e\alpha}\log \big(\frac{2^{d+1}(2d - 1)}{e\alpha (2\pi)^{d-1} \varepsilon}\big) \big)},
\]
where $W_0(\cdot)$ is the Lambert W function. Noting that as $x \to \infty$, $W_0(x) \to \log(x / \log (x))$, we obtain that for $\varepsilon \to 0$ and for a constant $\alpha$, $L \geq \Omega(\log(\Theta(\varepsilon^{-1})) / \log \log(\Theta(\varepsilon^{-1})))$. Since $n = (2L)^d$ we obtain the proposition.$\hfill \square$
%%%%%%%%%%%%%%%%%%%%%%%%%%%%%%%%%%%%%%%%%%%%%%%%%%%%%%%%%%%%%%%%%%%%%%%%%%%
\subsection{Ground state}\label{app:lower_bound_gs}
The AKLT Hamiltonian is an example of a Hamiltonian where the convergence of a local observable to the thermodynamic limit is logarithmic and tight \cite{affleck2004rigorous}. We can also easily construct a higher dimensional model using the AKLT model that satisfies a logarithmic convergence to the thermodynamic limit. Consider a $d$-dimensional lattice $\mathfrak{L}^d = \{0, 1, 2 \dots L - 1\}^d$ with $n = L^d$ sites. At each site, we have $d$ spin 1 systems (i.e.~the local Hilbert space at each site is $(\mathbb{C}^{3})^{\otimes d}$). We consider the following translationally invariant Hamiltonian, with periodic boundary conditions, on this lattice of qudits ---
\begin{align}\label{eq:aklt_d_dimensions}
H = \sum_{m = 1}^d H_m \text{ where } H_m = \sum_{x \in \mathfrak{L}^d } \bigg(\vec{S}_{x}^m\cdot \vec{S}_{x + e_m}^m + \frac{1}{3}\big(\vec{S}_{x}^m\cdot \vec{S}_{x + e_m}^m \big)^2\bigg),
\end{align}
where $\vec{S}_x^m$ is the vector of spin $1$ operators at site $x$ and on the $m^\text{th}$ spin 1 system at this site, $e_m$ is the unit translation vector on the lattice along the $m^\text{th}$ direction. Note that $H_m$ is the sum of 1D AKLT models on the $m^\text{th}$ spin in every row of qudits along the $m^\text{th}$ direction.
\begin{lemma}\label{lemma:AKLT}
Consider $\ket{G_{L,\textnormal{AKLT}}} = \sum_{i \in \{-1, 0, 1\}^L} \textnormal{Tr}[A^{i_0} A^{i_1} \dots A^{i_{L-1}}]\ket{i_0, i_1 \dots i_{L - 1}} $ with $A^{\pm 1} = \pm\sqrt{2/3}\sigma_{\pm}$, $A^0 = -\sqrt{1/3} \ \sigma_z$, to be the ground state of the 1D AKLT model with periodic boundary conditions on $L$ spins and the single site observable $O_L = \ket{0}\bra{0} \otimes I^{\otimes(L - 1)}$ then
\[
\bra{G_{L, \textnormal{AKLT}}} O_L \ket{G_{L, \textnormal{AKLT}}} = \frac{1}{3}\bigg(1 - \frac{(-1)^{L-1}}{3^{L-1}}\bigg).
\]
\end{lemma}
\noindent\emph{Proof}: We can explicitly compute $\bra{G_{L, \textnormal{AKLT}}} O_L \ket{G_{L, \textnormal{AKLT}}}$ to obtain
\[
\bra{G_{L, \textnormal{AKLT}}} O_L \ket{G_{L, \textnormal{AKLT}}} =  \frac{1}{3} \sum_{i, j \in \{0, 1\}} (-1)^{i + j} \bra{i} \text{T}^{L - 1}(\ket{i}\bra{j}) \ket{j},
\]
where $\text{T}$ is the transfer tensor corresponding to the MPS $\ket{G_{L, \text{AKLT}}}$, which is a single qubit channel given by
\[
\text{T}(X) = \frac{2}{3} \sigma_+ X \sigma_- + \frac{2}{3} \sigma_- X \sigma_+ + \frac{1}{3} \sigma_z X \sigma_z.
\]
We can note that $\text{T}^k(I) = I$, $\text{T}^k(\sigma_z) = (-1/3)^k \sigma_z, \text{T}^{k}(\ket{0}\bra{1}) = (-1/3)^k \ket{0}\bra{1}$, $\text{T}^{k}(\ket{1}\bra{0}) = (-1/3)^k \ket{1}\bra{0}$. We then obtain that
\begin{align*}
&\bra{0}\text{T}^{k}(\ket{0}\bra{1}) \ket{1} = \bra{1}\text{T}^{k}(\ket{1}\bra{0}) \ket{0} = \frac{(-1)^k}{3^k}, \\
&\bra{0}\text{T}^k(\ket{0}\bra{0})\ket{0} = \frac{1}{2}\bra{0}\text{T}^k(I)\ket{0} + \frac{1}{2}\bra{0}\text{T}^k(\sigma_z)\ket{0} = \frac{1}{2} + \frac{1}{2}\frac{(-1)^k}{3^k}, \\
&\bra{1}\text{T}^k(\ket{1}\bra{1})\ket{1} =  \frac{1}{2}\bra{1}\text{T}^k(I)\ket{1} - \frac{1}{2}\bra{1}\text{T}^k(\sigma_z)\ket{1} =  \frac{1}{2} -  \frac{1}{2}\frac{(-1)^k}{3^k}.
\end{align*}
Thus,
\[
\bra{G_{L, \textnormal{AKLT}}} O_L \ket{G_{L, \textnormal{AKLT}}} = \frac{1}{3}\bigg( 1 -\frac{(-1)^{L-1}}{3^{L - 1}}\bigg),
\]
which establishes the lemma. \hfill $\square$
\begin{proposition}\label{prop:lower_bound_gapped}
    There exists a nearest neighbour gapped Hamiltonian $H_n$ on $n$ qudits arranged on a d-dimensional lattice with a unique ground state $\ket{G_n}$, a single site observable $O_n$ such that 
    \[
    \bigabs{\bra{G_n} O_n \ket{G_n} - \lim_{n \to \infty} \bra{G_n} O_n \ket{G_n}} \leq \varepsilon \implies n \geq \Omega\big(\log^d(\Theta(\varepsilon^{-1}))\big).
    \]
\end{proposition}
\noindent\emph{Proof}: The Hamiltonian is provided in Eq.~\ref{eq:aklt_d_dimensions}. Since the AKLT Hamiltonian is known to be gapped, this Hamiltonian is gapped as well. We choose the observable $O_n = (\ket{0}\bra{0})^{\otimes d} \otimes I^{\otimes (n - 1)}$ i.e.~the projector $\ket{0}\bra{0}$ on all the spin 1 systems at the first site of the lattice, and identity on the remaining sites. Using lemma $\ref{lemma:AKLT}$, we obtain that
\[
\bra{G_n} O_n \ket{G_n} = \frac{1}{3}\bigg(1 - \frac{1}{3^{n^{1/d}-1}}\bigg)^d,
\]
where, for simplicity, we assume that $n^{1/d}$ is odd.  We then obtain that
\[
 \bigabs{\bra{G_n} O_n \ket{G_n} - \lim_{n \to \infty} \bra{G_n} O_n \ket{G_n}} = \frac{1}{3}\bigg(1 - \bigg(1 - \frac{1}{3^{n^{1/d}-1}}\bigg)^d \bigg) \leq \varepsilon \implies n \geq \Omega \big(\log^d(\Theta(\varepsilon^{-1}))\big),
\]
which establishes the proposition. $\hfill \square$

\subsection{Fixed points}\label{app:lower_bound_fp}
Glauber dynamics for classical Ising models are examples of Lindblad evolutions which satisfy rapid mixing, and the convergence of local observables to the thermodynamic limit is logarithmic and tight. To construct an explicit example, we begin with the following result, which follows directly from the transfer matrix method.
\begin{lemma}\label{lemma:gibbs_tmatrix}
    Consider $L$ spins on 1D lattice in the state $\rho_L = e^{-\beta H_L}/ \textnormal{Tr}[e^{-\beta H}]$ where $H_L = -\sum_{i = 1}^{L - 1}Z_i Z_{i + 1}$. Let $\langle Z_1 Z_2\rangle_L : = \text{Tr}(Z_1 Z_2 \rho_L)$, then 
    \[
     \frac{\tanh^{L - 2}\beta}{\sqrt{2}\sinh \beta \cosh \beta} \leq \bigabs{\langle Z_1 Z_2\rangle_L  - \lim_{L \to \infty} \langle Z_1 Z_2\rangle_L } \leq \frac{\sqrt{2} \tanh^{L - 2}\beta}{\sinh \beta \cosh \beta}.
    \]
\end{lemma}
\begin{proof}
    The proof of this proposition follows by explicit computation using the transfer matrix method \cite{schultz1964two}. We note that
    \[
    \text{Tr}(e^{-\beta H}) = \sum_{\sigma \in \{-1, 1\}^L} e^{\beta \sigma_1 \sigma_2}e^{\beta \sigma_2 \sigma_3}\dots e^{\beta \sigma_{L - 1} \sigma_L} = \textnormal{Tr}\big(\text{T}^{L - 1}\big),
    \]
    and
        \[
    \text{Tr}(Z_1 Z_2 e^{-\beta H}) = \sum_{\sigma \in \{-1, 1\}^L} \sigma_1 \sigma_2 e^{\beta \sigma_1 \sigma_2}e^{\beta \sigma_2 \sigma_3}\dots e^{\beta \sigma_{L - 1} \sigma_L} \textnormal{Tr} = \big(\text{R}\text{T}^{L - 2}\big),
    \]
    where
    \[\text{T} = \begin{bmatrix}
        e^{\beta} & e^{-\beta} \\
        e^{-\beta} & e^{\beta}
    \end{bmatrix} \text{ and }\text{R} = \begin{bmatrix}
        e^{\beta} & -e^{-\beta} \\
        -e^{-\beta} & e^{\beta}
    \end{bmatrix}.
    \]
    Therefore, we obtain an explicit expression for $\langle Z_1 Z_2 \rangle_L$:
    \[
    \langle Z_1 Z_2 \rangle_L = \frac{\text{Tr}\big(\text{R}\text{T}^{L - 2}\big)}{\text{Tr}\big(\text{T}^{L - 1}\big)} = \frac{\sqrt{2}\big(\sinh \beta \cosh^{L - 2} \beta + \cosh \beta \sinh^{L - 2}\beta\big)}{ \cosh^{L - 1} \beta +  \sinh^{L - 1}\beta}.
    \]
    We note that for any $\beta > 0$, $\cosh \beta > \sinh \beta$ and therefore $\lim_{L\to \infty} \langle Z_1 Z_2 \rangle_L = \sqrt{2}\tanh \beta$. Furthermore, from the explicit expression for $\langle Z_1 Z_2 \rangle_L$, we obtain that
    \[
    \bigabs{\langle Z_1 Z_2 \rangle_L - \lim_{L \to \infty}\langle Z_1 Z_2 \rangle_L} = \frac{\sqrt{2}}{\sinh \beta \cosh \beta}\bigabs{\frac{\sinh^{L - 2}\beta}{\sinh^{L-2}\beta + \cosh^{L - 2}\beta}}.
    \]
    From this explicit expression and observing that $\cosh^{L - 2} \beta \leq \cosh^{L - 2} \beta + \sinh^{L - 2} \beta \leq 2\cosh^{L - 2} \beta$, we obtain the lemma statement.
\end{proof}

\begin{proposition} \label{prop:log_conv_fixed_point}
    There exists a nearest neighbour Lindbladian $\mathcal{L}_n$ on $n$ qudits arranged on a d-dimensional lattice with a unique fixed point ${\sigma_n}$ and which satisfies rapid mixing, and a local observable $O_n$ such that 
    \[
    \bigabs{\textnormal{Tr}{\big(\sigma_n O_n\big)} - \lim_{n \to \infty} \textnormal{Tr}{\big(\sigma_n O_n\big)}} \leq \varepsilon \implies n \geq \Omega\big(\log^d(\Theta(\varepsilon^{-1}))\big).
    \]
\end{proposition}
\begin{proof}
    Let us first consider the 1D setting --- a spatially local rapidly mixing Lindbladian $n$ spins, $\mathcal{L}^{\text{(1)}}_n = \sum_{\alpha = 0}^{n - 2}\mathcal{L}_{\alpha, \alpha + 1; n}^{(\text{1})}$ where $\mathcal{L}_{\alpha, \alpha + 1; n}^{(1)}$ acts on $\alpha^\text{th}, (\alpha + 1)^\text{th}$ spins, and a local observable that proves the proposition can be constructed by a quantum encoding of classical Glauber dynamics preparing the Gibbs state (at any non-zero inverse temperature $\beta$) of the classical Ising model $H = -\sum_{i = 1}^{n} Z_i Z_{i + 1}$. This construction is explicitly outlined in Ref.~\cite{cubitt2015stability} and shown to be rapidly mixing \emph{as a quantum evolution} as a consequence of log Sobolev inequalities that have been established for the classical Glauber dynamics \cite{martinelli1999lectures}. Furthermore, from lemma \ref{lemma:gibbs_tmatrix}, it follows that the local observable $Z_0 Z_1$ satisfies the convergence condition in the proposition statement.
    
    Similar to the example of the gapped ground state, the 1D example can then be used to construct the $d$-dimensional example. We consider spins arranged on the $d-$dimensional lattice $\mathbb{Z}_L^d$ where $L = n^{1/d}$ --- at every lattice site, we will have $d$ spins (i.e.~the local Hilbert space is $\big(\mathbb{C}^2\big)^{\otimes d}$). We will consider the nearest-neighbour Lindbladian formed by implementing the Lindbladian
    \[
    \mathcal{L}^{(d)}_n =\sum_{m = 0}^{d - 1} \mathcal{L}_{m; n}^{(d)} \text{ where } \mathcal{L}_{m; n}^{(d)} = \sum_{\alpha \in \{0, 2 \dots L - 2\}^d} \mathcal{L}_{\alpha_m, \alpha_m + 1}^{(1)}\otimes \textnormal{id},
    \]
    where $\mathcal{L}^{(1)}_{\alpha_m, \alpha_m + 1}$ acts on the $m^\text{th}$ spin at site $\alpha$ and the $m^\text{th}$ spin at the site displaced by one unit from $\alpha$ along the $m^\text{th}$ direction. We point out that since $\mathcal{L}^{(1)}_n$ satisfies rapid mixing, by construction, so does $\mathcal{L}_n^{(d)}$. Consider now the observable $O_n$
    \[
    O_n = \prod_{m = 0}^{d- 1}Z_{0, m} Z_{e_m, m},
    \]
    where $Z_{\alpha, m}$ is the $Z$ operator acting on the $m^\text{th}$ spin at site $\alpha$, amd $e_m$ is the unit vector along the $m^\text{th}$ lattice direction. Using lemma \ref{lemma:gibbs_tmatrix}, it can be seen that 
    \[
    \textnormal{Tr}(\sigma_n O_n) = \big(\langle Z_1 Z_2 \rangle_{n^{1/d}}\big)^d,
    \]
    where $\langle Z_1 Z_2 \rangle_L$ is defined in lemma \ref{lemma:gibbs_tmatrix}.
    Note that $\lim_{n \to \infty} \text{Tr}(\sigma_n O_n) = (\sqrt{2}\tanh \beta)^d$. We also obtain from lemma \ref{lemma:gibbs_tmatrix} that for positive $\beta$ which is $\Theta(1)$ ,
    \[
    \langle Z_1 Z_2 \rangle_{n^{1/d}} \geq \sqrt{2}\tanh(\beta) - O(e^{-\Theta(n^{1/d})}) \text{ and } \bigabs{\langle Z_1 Z_2 \rangle_{n^{1/d}} - \sqrt{2}\tanh \beta} \geq \Omega(e^{-\Theta(n^{1/d})}).
    \]
    We now obtain that
    \begin{align*}
    &\bigabs{\textnormal{Tr}(\sigma_n O_n) - \lim_{n\to \infty}\text{Tr}(\sigma_n O_n)} = \bigabs{\langle Z_1 Z_2 \rangle_{n^{1/d}} - \sqrt{2}\tanh \beta}\times \sum_{j = 0}^{d - 1}\langle Z_1 Z_2 \rangle_{n^{1/d}}^{j} \big(\sqrt{2}\tanh \beta\big)^{d - 1 - j}, \\
    &\qquad \geq \Omega(e^{-\Theta(n^{1/d})}) \times \big(\sqrt{2}\tanh(\beta) - O(e^{-\Theta(n^{1/d})})\big) \geq \Omega(e^{-\Theta(n^{1/d})}),
    \end{align*}
    from which the proposition follows.
\end{proof}
\end{document}